\theoremstyle{plain}
\newtheorem{theorem}{Theorem}[section]
\newtheorem{lemma}[theorem]{Lemma}
\theoremstyle{remark}
\newtheorem*{remark}{Remark} 
\newcommand{\bA}{\mathbf{A}}
\newcommand{\bG}{\mathbf{G}}
\newcommand{\bR}{\mathbf{R}}
\newcommand{\bU}{\mathbf{U}}
\newcommand{\bV}{\mathbf{V}}
\newcommand{\ba}{\mathbf{a}}
\newcommand{\be}{\mathbf{e}}
\newcommand{\bss}{\mathbf{s}}
\newcommand{\sC}{\mathcal{C}}
\newcommand{\sD}{\mathcal{D}}
\newcommand{\sG}{\mathcal{G}}
\newcommand{\sK}{\mathcal{K}}
\newcommand{\sL}{\mathcal{L}}
\newcommand{\sS}{\mathcal{S}}
\newcommand{\sT}{\mathcal{T}}
\newcommand{\sV}{\mathcal{V}}
\newcommand{\sW}{\mathcal{W}}
\newcommand{\sX}{\mathcal{X}}
\newcommand{\sZ}{\mathcal{Z}}
\newcommand{\bbG}{\mathbb{G}}
\newcommand{\bbN}{\mathbb{N}}
\newcommand{\bbP}{\mathbb{P}}
\newcommand{\bbR}{\mathbb{R}}
\newcommand{\E}{\mathbb{E}}
\newcommand{\Var}{\text{Var}}
\newcommand{\cov}{\text{cov}}
\newcommand{\tp}{\text{T}}
\newcommand{\diag}{\text{diag}}
\newcommand{\bbeta}{\boldsymbol{\beta}}
\newcommand{\bOmega}{\boldsymbol{\Omega}}
\newcommand{\bPsi}{\boldsymbol{\Psi}}
\newcommand\numberthis{\addtocounter{equation}{1}\tag{\theequation}}
\newcommand*{\addFileDependency}[1]{
  \typeout{(#1)}
  \@addtofilelist{#1}
  \IfFileExists{#1}{}{\typeout{No file #1.}}
}
\begin{document}
\begin{frontmatter}
\title{Identifying arbitrary transformation between the slopes in scalar-on-function regression}
\runtitle{Identifying arbitrary transformation}

\begin{aug}
\author[A]{\fnms{Pratim}~\snm{Guha Niyogi}\ead[label=e1]{pnyogi1@jhmi.edu}\orcid{0000-0003-2439-9047}}
\author[B]{\fnms{Subhra Sankar}~\snm{Dhar}\ead[label=e2]{subhra@iitk.ac.in}\orcid{0000-0003-1355-3635}}

\address[A]{Department of Biostatistics,
Johns Hopkins University, Baltimore, USA\printead[presep={,\ }]{e1}}

\address[B]{Department of Mathematics and Statistics,
IIT Kanpur, Kanpur, India\printead[presep={,\ }]{e2}}
\end{aug}

\begin{abstract}
In this article, we study whether the slope functions of two scalar-on-function regression models in two samples are associated with any arbitrary transformation along the vertical axis. The problem is formally stated as a statistical hypothesis test, and corresponding test statistic is formed based on the estimated second derivative of the unknown transformation. The asymptotic properties of the test statistic are investigated using some advanced techniques related to the empirical process. Moreover, to implement the test for small sample size data, a bootstrap algorithm is proposed, and it is shown that the bootstrap version of the test is as good as the original test for sufficiently large sample size. Furthermore, the utility of the proposed methodology is shown for simulated datasets, and DTI data is analyzed using the proposed methodology.
\end{abstract}

\begin{keyword}[class=MSC]
\kwd[Primary ]{62R10}
\end{keyword}

\begin{keyword}
\kwd{comparison of curves}
\kwd{DTI}
\kwd{scalar-on-function regression}
\kwd{VC class}
\end{keyword}

\end{frontmatter}

\section{Introduction}\label{Intro}

\subsection{Background, Problem of Interest and Literature Review}\label{Back}
In recent times, technology has made significant advancements, resulting in an increasing amount of functional data. Instead of scalar or multivariate vectors, each observation now represents a curve, as highlighted in monographs such as \citet{ramsay2005springer, hsing2015theoretical, kokoszka2017introduction}. These functional data are not limited to a specific field and can be found in diverse areas such as medicine, biology, economics, chemistry, engineering, and phonetics. Specifically speaking, functional data are usually obtained through technologies such as imaging, accelerometers, spectroscopy, spectrometry, and any other measurement on a dense grid collected over time, space, or any other ordered functional domain. For all these research problems of functional data analysis (FDA), the scalar-on-function linear model is a well-known modeling concept in the literature (see the review article of \citet{reiss2017methods}, entirely dedicated to scalar-on-function regression). Strictly speaking, the scalar-on-function regression model regresses the scalar-valued response on the function-valued covariates. In the course of work on the scalar-on-function linear model, the statistical inferences on the unknown slope function, i.e., the functional coefficients linked to functional covariates (see Model \eqref{eq:model} for details), are of particular interest in many practical problems. Some potential issues related to this context are described below.  
\par
There have been a few works related to statistical inferences, such as the estimation and testing of hypotheses on the slope functions available in the literature (see, e.g., \citet{MR4515711, MR4665579} and a few references therein). In particular, the estimation of $\beta_{s}(t)$ ($s = 1, 2$) in Model \eqref{eq:model} is studied by \citet{MR3318024, MR2766857, MR4655783, cardot2003spline} based on various methodologies, and a few other articles investigated testing of hypothesis problems on $\beta_{s}(t)$  (see, e.g., \citet{
yao2005functional, muller2005generalized, cai2006functional, 
hall2007methodology, hall2006properties1, li2007rates, zhang2007statistical, 
MR2766857, MR3014309, 
shang2015nonparametric} among many more). Among them, \citet{MR1965105, MR3099123, MR3223738} compared 
the slope functions in the regression model having the same covariates to a common response variable for different groups, and we come across such comparison often in checking whether the growth curves of the boys and the girls are same in checking whether climate pattern of a country is changing over a period of time (see, e.g., \citet{MR2413533}). In the course of this study, they formulated the test statistic based on the pooled sample using some appropriate distance between estimated slope functions. 
\par
However, the same techniques cannot be used once one would like to test whether the two slope functions, i.e., $\beta_{1}(t)$ and $\beta_{2}(t)$ are the same up to some arbitrary non-linear (i.e., excluding constant and linear functions) transformation $g$ as described in Statement \eqref{Hypo:state-alt}, and for example, such complex relationship between the slope functions often appear in diffusion tensor imaging (DTI) analysis (see, e.g., \citet{MR3470583}).  In DTI analysis, the paced auditory serial addition test (PASAT) scores (in this context, scalar valued response variable) of the patients with multiple sclerosis (MS) with respect to the continuous summary of the white matter tracts obtained from the corpus callosum (CCA) (in this context, the functional covariate) may vary over the time of the conducting the tests (i.e., different visits). See \citet{greven2011longitudinal} for more details, along with the description of the data set provided in Section \ref{sec:real-data}. From a medical science point of view, the researchers would like to know whether the rate of change of PASAT score with
respect to the continuous summary of the white matter from the CCA for the patients with MS has
any non-linear relationship or not over different visits. Motivated by such examples, in this article, we develop a unified statistical inference framework for testing whether the slope function of one group is the same as a non-linear function of that of the other group or not. To the best of our knowledge, this is the first initiative to demonstrate such a framework. 
\par
In addition to the DTI application described in the preceding paragraph, this inference problem is also of substantial importance in understanding growth patterns among children. For example, the Berkeley growth study (see \citet{rd1954physical}) is a widely used benchmark dataset that records the heights of 39 boys and 54 girls from ages 1 to 18 years. A primary objective of this dataset is to study growth trajectories for boys and girls, where longitudinal height measurements are collected repeatedly throughout childhood and adolescence, along with each child’s final height at age 18. In this setting, we treat the height at age 18 as the response variable (denoted as $Y$) and the growth curve over time as the functional covariate (denoted as $X(t)$). For both boys and girls, one may be interested in determining whether the rate of change of $Y$ with respect to $X(t)$ exhibits a nonlinear relationship. Equivalently, under a scalar-on-function regression framework, this amounts to assessing whether the functional slope parameters for boys and girls demonstrate any nonlinear structure.
\par
The formal description of the problem is available in Sections \ref{sec:con}, \ref{subsec:model}, and \ref{SP}. 

\subsection{Contributions}\label{sec:con}
As mentioned earlier, the first major contribution in this work is that we have checked whether two slope functions in scalar-on-function regression on two groups are the same or not up to some non-linear transformation. Suppose that for each group $s =1, 2$, $Y_{s}$ is a scalar-valued response variable, and $X_{s}$ is a predictor variable which is assumed to be a random function defined on a compact set $\sT = [0, 1]$. Consider now a scalar-on-function regression model for $s$-th group, 
\begin{equation}
\label{eq:model_1}
    Y_{s} = \alpha_{s} + \int_{0}^{1}\beta_{s}(t)[X_{s}(t) - \E\{X_{s}(t)\}]dt + \epsilon_{s},  
\end{equation}
where $\alpha_{s}$ are unknown constants and $\beta_{s} \in \sL_{2}([0,1])$ are unknown functional slopes for two groups.
Moreover, $\E\{ \epsilon_{s} \} = 0$ and $\E\{\epsilon_{s}^{2}\} = \sigma_{s}^{2}$, which are finite and unknown, and a few more technical assumptions will be stated at the appropriate places. We are now interested in testing the following hypothesis. 
\begin{equation}
\label{Hypo:state}
H_{0}:\beta_{1}(t) = g(\beta_{2}(t))~\mbox{for some linear transformation $g$},
\end{equation} 
against 
\begin{equation}
\label{Hypo:state-alt}
H_{1}:\beta_{1}(t) = g^{*}(\beta_{2}(t))~\mbox{for some non-linear transformation $g^{*}$},
\end{equation}
where $t\in [0, 1]$. Moreover, the null hypothesis may alternatively be expressed as 
$$H_{0}: \beta_{1} (t) = a \beta_{2}(t) + b~\mbox{for some constants}~ a, b \in \mathbb{R}.$$ For notational convenience, however, throughout the article we will adopt the formulation of the hypotheses given in Statements \eqref{Hypo:state} and \eqref{Hypo:state-alt} OR equivalent formulation described in Statements \eqref{eq:null} and \eqref{eq:alt}.  In order to test the above hypothesis \eqref{Hypo:state} against \eqref{Hypo:state-alt}, we propose a test statistic based on the $L_{\infty}$ norm of the estimator of the second derivative of $g$. The detailed explanation of considering the second derivative of $g$ is discussed in Section \ref{SP}. In notation, the test statistic $T_{n}$ is of the form $T_{n} = \sup\limits_{t\in \mathcal{T}}|\widehat{g}''(t)|$, where $\mathcal{T}$ is a certain random interval, and $\widehat{g}''(t)$ is a certain estimator of the second derivative of $g$. In this study, we derive the asymptotic distribution of $T_{n}$ after appropriate normalization. Note that, when $g(x) = x$, it is then equivalent to test whether two slope functions are point-wise equal or not, and this test was studied by \citet{horvath2009two}. More generally, even when $g(x) = ax + b$ for some $a\in\bbR$ and $b\in\bbR$, it is also equivalent to test whether two slope functions are point-wise equal or not after some linear transformation, and the methodology adopted by \citet{horvath2009two} can be applied for such cases as well. However, for some non-linear transformation, testing hypotheses $H_{0}$ against $H_{1}$ (see Statements \eqref{Hypo:state} and \eqref{Hypo:state-alt}) is an entirely different and much more complex problem than that of the work done by \citet{horvath2009two}, and this problem is studied in this work. 
\par
In this context, we stress that the aforementioned test is fundamentally distinct from curve registration problems (see, e.g., \citet{MR3323102, MR2369028, MR1041386, MR4550234}). In the curve registration problems, the curves are related to each other by some transformations along the horizontal axis, whereas in Statement \eqref{Hypo:state-alt}, $\beta_{1}(t)$ and $\beta_{2}(t)$ are associated with some non-linear transformation along the vertical axis. As indicated before, the relationship described in Statement \eqref{Hypo:state-alt} often appears in reality. For example, recently researchers in medical science are extensively studying the relationship between systolic blood pressure and white matter lesions in individuals with hypertension, and a few studies have found that the relationship between these two time-dependent variables is quadratic in nature (see, e.g., \citet{10.1001/jama.2019.10551}). Another example is that in the inter-phase of Finance and Environmental Science (see, e.g., \citet{XU2022105994}), social scientists are interested in knowing the relationship between the financial development and the carbon dioxide emissions over the last one hundred years or so in the G7 countries (i.e., Canada, France, Germany, Italy, Japan, the United Kingdom (UK), and the United State (US)). Such impressive real-life examples further indicate the importance of the hypothesis problem described in Statements \eqref{Hypo:state} and \eqref{Hypo:state-alt}.  
\par
The next major contribution of the work is related to the implementation of the test. It is indeed true that using the results described in Theorem \ref{thm:T}, one can implement the test when the sample size is sufficiently large. However, for a moderate or small sample size, implementing a test with the assertion in Theorem \ref{thm:T} may not have adequate performance. To overcome this problem, we propose a bootstrap procedure with a better rate of convergence (see Theorem \ref{thm:bootstrap}), which leads to better performance of the test for the data with a small sample size. 

\subsection{Challenges}\label{subsec:challenges}
The first challenge was related to the fact that the supremum involved in the test statistic is defined over a random interval (see Equation \eqref{stat}), and hence, it is not possible to apply the well-known continuous mapping theorem to tackle the issue related to the supremum operator. To overcome this problem, we first establish the asymptotic distribution of the modified test statistic, where the supremum is taken over a certain fixed interval. Afterward, we show that the difference between the modified test statistic and the original test statistic is negligible.
\par
The second major challenge is related to the estimation and other issues associated with statistical inference on the non-linear transformation $g$ and its second-order derivatives. In this study, $g$ is estimated based on $\widehat{\beta}_{1}(t)$ and $\widehat{\beta}_{2}(t)$ evaluated at discrete time points, where $\widehat{\beta}_{1}(t)$ and $\widehat{\beta}_{2}(t)$ are certain estimators of the slope parameters of two groups, denoted by $\beta_{1}(t)$ and $\beta_{2}(t)$, respectively. Moreover, the number of discrete time points (i.e., $m_{n}$; see the description in Section \ref{subsec:formulation}) may vary over the sample size as well. Hence, one needs to carefully modify the well-known techniques of non-parametric regression, such as the local polynomial regression technique, to study the various properties of the estimator of $g$ and its second-order derivatives.  
\par
The third major challenge is related to the choice of discrete points, where $\widehat{\beta}_{1}(t)$ and $\widehat{\beta}_{2}(t)$ are observed, and those observations are used in estimating $g$ and its second derivative. Here, the number of discrete points varies with sample size $n$, and hence, the optimum choice of the number of discrete points is not easily tractable. To overcome it, one can choose various functions of the sample size, such as logarithmic or exponential transformation, and afterward, one can check which type of transformation gives the best result in various numerical studies.
Moreover, the issue that we mentioned as the third challenge also often leads to a skewed signal-to-noise ratio as the second derivative of $g$ is estimated based on $\hat{\beta}_{1}(t)$ and $\hat{\beta}_{2}(t)$. Note that it follows from the Model \eqref{eq:single-index} that the estimators of $g$ and its derivatives depend on the variables $U_{j}$ and $V_{j}$ ($j = 1, \ldots, m_n$) having similar to the measurement errors, where $U_{j}$ and $V_{j}$ are constructed based on $\beta_{1}(t_{j})$ and $\beta_{2}(t_{j})$, respectively. This issue makes the estimator of the second derivative of $g$ unstable, and consequently, the test statistic $T_{n}$ is so. To have a stable estimator of the second derivative of $g$, one needs to carefully choose the bandwidth and the kernel function associated with the estimator of the second derivative of $g$ (see the conditions described in Section \ref{sec:theory}).

\subsection{Organization of the article}
\label{subsec:Organization}
The article is organized as follows. In Section \ref{sec:model}, we introduce our model and the details of the preliminaries of the methodology that is proposed in this article. In Section \ref{subsec:model}, we describe the statistical model and frame the problem of hypothesis testing in Section \ref{SP}. Later, in Section \ref{subsec:method}, we present the methodology to carry out the testing of hypothesis problems in detail. In Section \ref{sec:theory}, we thoroughly investigate various large sample statistical properties and related facts of the proposed test. We perform numerical analysis in Section \ref{sec:simulation} and the analysis of one data-set on DTI in Section \ref{sec:real-data}. Section \ref{sec:discussion} concludes with a brief discussion and the future direction of this research. Additional technical details and the figures related to finite sample performance and real data analysis are included in the Appendix.

\subsection{Notation}
\label{subsec:notation} 
We here summarize the notations used in this article. $\textbf{1}(x \in A)$ denotes the indicator function, which takes value 1 if $x \in A$ and 0 if $x \notin A$ for a set $A$. For any two sequences $\{a_{n}\}_{n\geq 1}$ and $\{b_{n}\}_{n\geq 1}$,  $a_{n}\lesssim b_{n}$ indicates $a_{n}\leq C b_{n}$ for all $n\in\mathbb{N}$ for some $C > 0$, and $C$ does not depend on $n$. For any two sequences $\{a_{n}\}_{n\geq 1}$ and $\{b_{n}\}_{n\geq 1}$, $a_{n} = O(b_{n})$ indicates that there exists some $M > 0$ such that $|{a_{n}}/{b_{n}}| < M$ for all $n\geq N_{0} (M)$. Further, $a_{n} = o(b_{n})$ indicates that $\displaystyle\lim_{n\rightarrow\infty}{a_{n}}/{b_{n}} = 0$. For any sequence of random variables $\{Z_{n}\}_{n\geq 1}$, $Z_{n} = O_{r}(a_{n})$ indicates that $\E|Z_{n}|^{r} = O(a_{n}^{r})$. For any vector $\ba \in \bbR^{p}$ ($p\geq 1$), $\ba^{\otimes^{2}} = \ba\ba^{\tp}$. $\be_{k} \in \bbR^{p}$ is denoted as a vector of length $p$, such that $\be_{k} = (e_{1}, \ldots, e_{p})^{\tp}$,  where $e_{j} = \textbf{1}(j = k)$. $||f(.)||_{\infty} = \displaystyle\sup_{x\in S_{x}}|f(x)|$, where $S_{x}$ is the support of $f$. Next, $\diag(a_{1}, \ldots, a_{m})$ denotes a $m\times m$ matrix, whose $i$-th diagonal element is $a_{i}$ ($i = 1, \ldots, m$), and all non-diagonal elements are zero. For any real number $a$, $\lfloor a \rfloor$ is the largest integer less than $a$. For any non-negative integers $a$ and $b$,  $\nu_{a, b} = \int v^{a}K^{b}(v)dv$, where $K: \bbR\rightarrow\bbR^{+}$ is such that $\int K(x) dx = 1$. For any two constants $c_{1}, c_{2}$, $\sS(c_{1}, c_{2}) = \{ f: |f^{(d)}(u_{1}) - f^{(d)}(u_{2})| \leq c_{2}|u_{1} - u_{2}|^{c_{1} - d}, \text{ for all } d = \lfloor c_{1} \rfloor\}$, where $f^{(d)}$ denotes the $d$-th derivative of $f$. This is a well-known H\"older class of functions (see, e.g., \citep{tsybakov1997nonparametric}). For any  arbitrary point $v_{0}$ in the support of a real valued smooth function $f$,  $f^{[j]}(v_{0})$ denotes the $j$-th derivative $f$ at $v_{0}$.

\section{Formulation of the problem}
\label{sec:model}

\subsection{Description of the model}
\label{subsec:model}
Suppose that $(Y_{s, i}, X_{s, i})_{i = 1}^{n_{s}}$ ($s =1, 2$) are two independent random samples identically distributed with $(Y_s, X_s)$, where $Y_s$ is a scalar-valued response, and $X_{s}$ is $\sL_{2}(\cal{T})$-valued random element. Here $${\sL}_{2}(\sT) = \{ f: \mathcal{T}\rightarrow\mathbb{R}~|~ f\text{ is measurable and } \int_{\sT} f^{2}(x)dx < \infty \}, {\cal{T}}\subset\bbR$$ is a compact set, and without loss of generality, we consider ${\cal{T}} = [0, 1]$ unless mentioned otherwise. Note that the continuity of the sample paths of $X_{s}(t)$ ($t\in [0, 1]$) is sufficient to be a ${\sL}_{2}([0, 1])$-valued random element in view of the fact that a continuous function on a compact set is uniformly bounded. Now, recall from Section \ref{sec:con} that for $s$-th group ($s = 1, 2$), 
\begin{equation}
\label{eq:model}
    Y_{s} = \alpha_{s} + \int_{0}^{1}\beta_{s}(t)[X_{s}(t) - \E\{X_{s}(t)\}]dt + \epsilon_{s}. 
\end{equation}
The model \eqref{eq:model} is a well-known scalar-on-function regression in the statistics literature (see, e.g., the Introduction in \citet{MR4515711}). 
In this model, for $s = 1, 2$, $\alpha_{s}$ are unknown constants and $\beta_{s} \in \sL_{2}([0,1])$ are unknown functional slopes for two groups. Moreover, we assume that $X_{s}$ and $\epsilon_{s}$ are independent for each group along with $\E\{ \epsilon_{s} \} = 0$ and $\E\{\epsilon_{s}^{2}\} = \sigma_{s}^{2} < \infty$, where $\sigma_{s}^{2}$ is unknown. 

\subsection{Statement of the problem}\label{SP}
Recall the statement of the hypothesis described in Statements \eqref{Hypo:state} and \eqref{Hypo:state-alt}.  
$$H_{0}:\beta_{1}(t) = g(\beta_{2}(t))~\mbox{for some linear transformation}~g, $$
against 
$$H_{1}:\beta_{1}(t) = g^{*}(\beta_{2}(t))~\mbox{for some non-linear transformation $g^{*}$},$$ 
where $t\in [0, 1]$. In general, we are interested in testing whether $\beta_{2}$ and $\beta_{1}$ are associated with some non-linear transformation or not along the vertical axis, i.e., the hypothesis described in Statement \eqref{Hypo:state-alt}. First, it is easy to see that the constant and linear transformation between $\beta_{1}(t)$ and $\beta_{2}(t)$ along the vertical axis, which is asserted in Statement \eqref{Hypo:state} is redundant. Note that for the constant association, it becomes equivalent to check $\beta_{1} (t)$ equals some constant for all $t\in\sT$, which does not involve any information about $\beta_{2}(t)$. Secondly, for linear transformation, it is equivalent to test equality between $\beta_{1}(t) = \beta_{2}(t)$ after appropriate standardization of the data, and this test is well studied in the literature (see, e.g., \citet{MR2413533}). Hence, to avoid the aforementioned cases, the alternative equivalent statement of the hypothesis is formulated in a strict sense, and a technical description of it is as follows. 
\par
Consider the following class of functions on a certain interval $[v_{1}, v_{2}]$: 
\begin{equation}
\label{eq:C}
   \sC_{v_{1}, v_{2}} = \left\{ l: l''(u) = 0~\mbox{for all}~u \in [v_{1}, v_{2}] \right\},  
\end{equation} where $l$ is twice differentiable, and $l''$ denotes the second order derivatives of $l$. Therefore, strictly speaking, equivalent to the hypothesis described at the beginning of this subsection, we are interested in testing 
\begin{equation}
\label{eq:null}
H_{0}:  g\in \sC_{v_{1}, v_{2}}
\end{equation}
against the alternative 
\begin{equation}
\label{eq:alt}
H_{1}: g\notin \sC_{v_{1}, v_{2}},
\end{equation}
where $v_{1} = \min\limits_{t\in [0,1]}\beta_{2}(t)$ and $v_{2} = \max\limits_{t\in [0,1]}\beta_{2}(t)$. 

The next section develops the methodology to carry out the test $H_{0}$ against $H_{1}$ described in Statements \eqref{eq:null} and \eqref{eq:alt}, respectively.

\section{Development of methodology}
\label{subsec:method}

The proposed testing procedure consists of two key steps: first, estimating the unknown regression coefficient functions $\beta_{s}(\cdot)$, and then computing the test statistic.

\subsection{Estimation of $\beta$}
\label{subsec:estimation_of_beta}
For $X_{s}(t)$, $t\in{\cal{T}}$, let $V_{s}(t_{1}, t_{2}) = \cov\{X_{s}(t_{1}), X(t_{2})\}$ for $t_{1}, t_{2} \in \sT$ and assume that the integral operator from $\sL_{2}(\sT)$ into itself with kernel $V_{s}$ (which is known as covariance operator) being injective, self-adjoint and non-negative definite. 
Now, due to Mercer's theorem \citep{3085302a-23ec-3d75-8a35-30223fd7593a}, for a symmetric, continuous and non-negative definite kernel function $V_{s}$, we have the following representation 
\begin{equation}
\label{Chapter2-qif-Eq:Mercer}
    V_{s}(t_{1}, t_{2}) = \sum_{r=1}^{\infty}\lambda_{sr}\phi_{sr}(t_{1})\phi_{sr}(t_{2}),
\end{equation} 
where $\{(\lambda_{sr}, \phi_{sr})\}_{r \geq 1}$ are the set of eigen-components, and the convergence of Equation \eqref{Chapter2-qif-Eq:Mercer} is in $L_2$ sense. Here the eigen-values for each group $\lambda_{s1} \geq \lambda_{s2} \geq \ldots > 0$ are non-increasing sequence of eigen-values tending to zero, and $\{ \phi_{sr}\}_{r=1}^{\infty}$ is an orthonormal basis of $\sL_{2}(\sT)$ with the following relation 
\begin{equation}
\label{int:eq}
    \int_{\sT}V_{s}(t_{1}, t_{2})\phi_{sr}(t_{2})dt_{1} = \lambda_{sr}\phi_{sr}(t_{2}); \qquad r \geq 1.
\end{equation}
Further, in the same spirit of Equations \eqref{Chapter2-qif-Eq:Mercer} and \eqref{int:eq}, due to Kosambi-Karhunen-Lo\`eve expansion \citep{MR0009816, karhunen1946spektraltheorie, loeve1946functions}, we have $$\beta_{s}(t) = \sum\limits_{r=1}^{\infty}b_{sr}\phi_{sr}(t)$$ in $L_2$ sense, and $$X_{s}(t) = \E\{X_{s}(t)\} + \sum\limits_{r=1}^{\infty}\xi_{sr}\phi_{sr}(t)$$ with probability 1,  where $b_{sr}$ and $\xi_{sr}$ ($r \geq 1$) are coefficients of the expansions. In other words, $$b_{sr} = \int_{\sT}\beta_{s}(t)\phi_{sr}(t)dt,$$ and $$\xi_{sr} = \int_{\sT}[X_{s}(t) - \E\{X_{s}(t)\}]\phi_{sr}(t)dt.$$ Using these relationships, Model \eqref{eq:model} becomes $$Y_{s} = \alpha_{s} + \sum\limits_{r=1}^{\infty}b_{sr}\xi_{sr} + \epsilon_{s}, \qquad s = 1, 2, $$ where for a fixed $s$, $\xi_{sr}$, ($r \geq 1$) are uncorrelated centered random variables with variance $\lambda_{sr}$. Additionally, $b_{sr} = \E\{\xi_{sr}Y_{s}\}/\lambda_{sr},$ for $s = 1, 2$ and $r \geq 1$.  
\par
Now, for the given data $\{Y_{s, i}, X_{s, i}\}_{i = 1}^{n_{s}}$ ($s = 1, 2$), we first estimate $V_{s} (t_1, t_2)$ by empirical covariance function $$\widehat{V}_{s}(t_{1}, t_{2}) = \frac{1}{n_{s}}\sum\limits_{i=1}^{n_{s}}\{ X_{s, i}(t_{1}) -\overline{X}_{s}(t_{1}) \}\{ X_{s, i}(t_{2}) -\overline{X}_{s}(t_{2}) \}$$ for $t_{1}, t_{2}\in \sT$, where $\overline{X}_{s}(t) = \frac{1}{n_{s}}\sum\limits_{i=1}^{n_{s}}X_{s, i}(t)$. Next, using empirical spectral decomposition, we have $$\widehat{V}_{s}(t_{1}, t_{2}) = \sum_{r=1}^{\infty}\widehat{\lambda}_{sr}\widehat{\phi}_{sr}(t_{1})\widehat{\phi}_{sr}(t_{2}),$$
where $\widehat{\lambda}_{s1} \geq \widehat{\lambda}_{s2} \geq \ldots \geq 0$ are non-decreasing (almost surely) sequence of empirical eigen-values, and the orthonormal eigen-functions $\{ \widehat{\phi}_{sr} \}_{r \geq 1}$ such that $ \int_{\sT}\widehat{V}_{s}(t_{1}, t_{2})\widehat{\phi}_{sr}(t_{1})dt_{1} = \widehat{\lambda}_{sr}\widehat{\phi}_{sr}(t_{2})$ almost everywhere with respect to $t_{2}\in{\cal{T}}$, for $r \geq 1$ and $s = 1, 2$. Note that, since the rank of $\widehat{V}_{s}$ is finite, based on the augmented version of the expansion of $\widehat{b}_{sr}$, we have 
\begin{equation}
\label{hat:beta}
    \widehat{\beta}_{s}(t) = \sum_{r=1}^{\kappa_{s, n_{s}}}\widehat{b}_{sr}\widehat{\phi}_{sr}(t),
\end{equation}
where $\kappa_{s, n_{s}}$ is the cut-off level such that $\kappa_{s, n_{s}} \rightarrow \infty$ as $n = \min(n_{1}, n_{2}) \rightarrow \infty$ (see, e.g.,  \citet{hall2007methodology}), $$\widehat{b}_{sr} = \frac{1}{n_{s}}\sum\limits_{i = 1}^{n_{s}}\widehat{\xi}_{sr}Y_{s, i}/{\widehat{\lambda}_{sr}} \text{ and }
\widehat{\xi}_{s, r} = \frac{1}{n_{s}}\sum\limits_{i = 1}^{n_{s}}\left[\int_{\sT}[X_{s, i}(t) - \overline{X}_{s}(t)]\widehat{\phi}_{sr}(t)dt\right].$$ 
\par
Next, we discuss the formulation of the test statistic for testing $H_{0}$ against $H_{1}$ described in Statements \eqref{eq:null} and \eqref{eq:alt}, respectively. 

\subsection{Formulation of test statistic}
\label{subsec:formulation}
Let us consider arbitrary time-points $0 \leq t_{1} < \ldots < t_{m_{n}} \leq 1$, and denote $U_{j} = \widehat{\beta}_{1}(t_{j}^{\dagger})$ and $V_{j} = \widehat{\beta_{2}}(t_{j}^{\dagger})$ for $j = 1, \ldots, m_{n}$, where $\widehat{\beta}_{1}(.)$ and $\widehat{\beta}_{2}(.)$ are the same as defined in Equation \eqref{hat:beta}, and  $t_{1}^{\dagger}, \ldots, t_{m_{n}}^{\dagger}$ are from the set $\{t_{1}, \ldots, t_{m_{n}}\}$ such that $\widehat{\beta}_{2}(t_{1}^{\dagger}) \leq \ldots \leq \widehat{\beta}_{2}(t_{m_{n}}^{\dagger})$.
Consider now the following model: 
\begin{equation}
\label{eq:single-index}
    U_{j} = g(V_{j}) + \eta_{j},
\end{equation}
where $g$ is the unknown function described in the hypotheses $H_{0}$ in Statement \eqref{eq:null} and $H_{1}$ in Statement \eqref{eq:alt}, and $\eta_{j}$ is the random error. Note that here the number of time points $m_{n}$ depends on $n$, and $m_{n} \rightarrow \infty$ as $n \rightarrow \infty$. In view of the aforesaid fact, the mutually dependent random variables $U_{j}$, $V_{j}$ and $\eta_{j}$ defined in Model \eqref{eq:single-index} depend on $n$ as well, although close inspection in the proofs indicates that this dependence structure does not have any impact on the limiting properties of the test statistic, and consequently, that of the test. To carry out the testing of the hypothesis problem described in $H_{0}$ against $H_{1}$ (see Statements \eqref{eq:null} and \eqref{eq:alt}), one needs to estimate the second order derivative of $g$, which is indicated from the definition of ${\cal{C}}_{v_1, v_2}$ (see Equation \eqref{eq:C}). In this study, to estimate the second-order derivative of $g$, we adopt the well-known 
local quadratic smoother \citep{fan1996local} i.e., a special case of local polynomial regression with degree 2. The implementation of the local quadratic smoothing technique in this case is as follows. 
\par
Suppose that $K(t)$ denotes the kernel function, and the sequence of positive smoothing bandwidths $\{h_{m_{n}}\}_{n\geq 1}$ is such that $h_{m_{n}} \rightarrow 0$ as $n\rightarrow\infty$. 
Here $K(\cdot)$ is a non-negative function such that $\int K(t)dt = 1$, symmetric around 0, i.e.,  $K(-t) = K(t)$ for all $t\in\bbR$, and $K_{h_{m_{n}}}(t) = {h_{m_{n}}}^{-1}K(t/h_{m_{n}})$ for all $t\in\bbR$. In addition, we define $\sX(v)$ is a matrix of order $m_{n}\times 4$ where $\sX_{l_{1}, l_{2}}(v) = (V_{l_{1}} - v)^{l_{2}-1}$ for $l_{1} = 1, \cdots, m_{n}$ and $l_{2} = 1, 2, 3, 4$; $$\sW_{h_{m_{n}}}(u) = diag\left\{K\left(\frac{V_{1}-v}{h_{m_{n}}}\right), \ldots, K\left(\frac{V_{m_{n}}-v}{h_{m_{n}}}\right)\right\} \in \bbR^{m_{n}\times m_{n}}.$$ Furthermore, we define a vector $\bss_{h_{m_{n}}}$ with length $m_{n}$ whose entries are denoted by $s_{j, h_{m_{n}}}(v, 2)$ for $j = 1, \cdots, m_{n}$ as
\begin{align*}
    \label{eq:s}
       &\bss_{h_{m_{n}}}(v,2) = (s_{1,h_{m_{n}}}(v,2),\ldots,s_{m_{n},h_{m_{n}}}(v,2))^{\tp}\\
       &= 2\be_{3}^{\tp}\left\{\sX(v)^{\tp}\sW_{h_{m_{n}}}(v)\sX(v)\right\}^{-1}\sX(v)^{\tp}\sW_{h_{m_{n}}}(v) \in \bbR^{m_{n}},
       \numberthis
    \end{align*}
where  $\be_{3} = (0,0,1,0)^{\tp}$. Next, the second derivative of the function $g$ is given by  $\widehat{g}''(u) = \widehat{\gamma}_{2}$ obtained from
\begin{align*}
\label{eq:ls}
    &(\widehat{\gamma}_{0}, \widehat{\gamma}_{1}, \widehat{\gamma}_{2}, \widehat{\gamma}_{3})\\
    &= \arg\min_{\gamma_{0}, \gamma_{1}, \gamma_{2}, \gamma_{3}}
    \sum_{j=1}^{m_{n}}
    \left\{ 
        U_{j} - \gamma_{0} - \gamma_{1}(V_{j} - v) - \gamma_{2}(V_{j} - v)^{2} - \gamma_{3}(V_{j} - v)^{3}
    \right\}^{2}K_{h_{m_{n}}}(V_{j}-v).
    \numberthis
\end{align*}
In matrix notation, using the formulation in Equation \eqref{eq:ls} and definition of $\bss_{h_{m_{n}}}$ above, $\widehat{g}_{h_{m_{n}}}''(v)$ can be written as 
\begin{equation}
\label{eq:g-dash}
    \widehat{g}_{h_{m_{n}}}''(v) = \sum_{j=1}^{m_{n}}s_{j, h_{m_{n}}}(v; 2)\widehat{\beta}_{1}(t_{j}^{\dagger}).
\end{equation}
We use local polynomial estimators since in a classical setting with fully observed data, the estimators based on local polynomial regression are known to have useful properties with regard to the boundary condition and sampling design (see \citet{fan1996local}). Moreover, it provides a complete asymptotic description, such as consistency and distributional convergence, that could be useful in Section \ref{sec:theory}. 
\par
Finally, to test $H_{0}$ against $H_{1}$ (see Statements \eqref{eq:null} and \eqref{eq:alt}), we define the test statistic as 
\begin{equation}
\label{stat}
   T_{n} = \sup_{v \in [\widehat{v}_{1}, \widehat{v}_{2}]}|\widehat{g}_{h_{m_{n}}}''(v)|, 
\end{equation}
where $\widehat{v}_{1} = \inf\limits_{t \in [0, 1]} \widehat{\beta}_{2}(t)$ 
and $\widehat{v}_{2} = \sup\limits_{t \in [0, 1]} \widehat{\beta}_{2}(t)$. 
We reject $H_{0}$ at level $\alpha ~ (\in [0, 1])$ if and only if $T_{n} > t_{\alpha}$, where 
$t_{\alpha}$ is such that $\bbP_{H_{0}}\{ T_{n} > t_{\alpha}\} = \alpha$. 
\par
In view of Equations \eqref{eq:s}, \eqref{eq:ls}, \eqref{eq:g-dash} and \eqref{stat}, observe that the expression of the test statistic $T_{n}$ involves the bandwidth, and hence, to implement the test, one needs to choose the bandwidth appropriately to have the optimum performance of the test. The detailed method for bandwidth selection is discussed in Section \ref{sec:band}.
Afterwards, the next step will be to implement the test for a given dataset. The natural answer is to derive the exact distribution of the test statistic $T_{n}$ as described in Equation \eqref{stat}, which enables the computation of the critical value and the power of the test. However, the complex terms involved in $T_{n}$ make the derivation of the exact distribution intractable, which drives us to think of some alternative methodology to implement the test. The next possibility is to derive the asymptotic distribution of $T_{n}$ (see Theorem \ref{thm:T}), and it is indeed true that implementing the test based on the asymptotic distribution of $T_{n}$ makes sense only when the sample size is sufficiently large. For all these reasons, particularly for data with small and moderately small sample sizes, we use a bootstrap technique, which is described in the following. The flowchart of the bootstrap procedure is described in Section \ref{sec:algo}.

\subsection{Selection of bandwidth $h_{m_{n}}$}
\label{sec:band}
The expression of the test statistic $T_{n}$ (see Equation \eqref{stat}) involves the bandwidth, and hence, to implement the test, one needs to choose a bandwidth appropriately to have the optimum performance of the test. One can argue that the higher order derivatives of $g$ can be estimated by the higher order derivatives of $\widehat{g}$, and hence, it is tractable to estimate the higher order derivatives of an unknown function. However, this is not realistic when the information obtained from the data contains a significant amount of interference or irregularities. The quality of estimation may worsen as we elevate the order of derivatives, and the choice of bandwidth becomes more crucial as the order of the derivative increases. Historically, it is observed that the choice of bandwidth does not impact the test (see, e.g.,  \citet{dette2006simple}) when the tuning parameter is sufficiently small. In this article, we consider the rule of thumb approach as discussed in \citet{fan1996local}, where the method starts with the optimal bandwidth that minimizes the mean integrated squared error $h_{0} = C_{\nu, p}(K)\left\{\frac{\int\eta^{2}(v)\omega(v)/f(\omega)dv}{\int \{g^{[p+1]}(v)\}^{2}\omega(v)dv } \right\}^{1/(2p+3)} m_{n}^{1/(2p+3)}$, where $C_{\nu, p}$ is some positive constant that depends on the order of derivative $\nu$, order of the polynomial $p$ in local polynomial regression and the underlying kernel $K$. The formula for $h_{0}$ contains some unknown quantity such as error $\eta(\cdot)$, $(p+1)$-th order derivative of the unknown function $g$, viz., $g^{(p+1)}(\cdot)$ and the density of $V$, viz., $f(v)$; however we fix the weight $\omega$ as positive function that smoothly vary over $v$. Based on the pilot estimates of $g$ and $\eta^{2}$, by fixing $\omega(v) = f(v)\omega_{0}(v)$ for some specific function $\omega_{0}$, we consider the rule of thumb selector 
\begin{equation}
\label{eq:opt_h}
    h_{ROT} =  C_{\nu, p}(K)\left\{ 
        \frac{\breve{\eta}^{2}\int \omega_{0}(v)dv 
    }{
        \sum_{j=1}^{m_{n}}\{ \breve{g}^{[p+1]}(V_{j}) \}^{2}\omega_{0}(V_{j})
    } \right\}^{1/2p+3}.
\end{equation}
However, the aforementioned bandwidth does not perform well when the noise-to-signal ratio, $\eta^{2}/\Var{g(V)}$, is very high. Specifically, in our problem, we are unable to monitor this ratio effectively, necessitating a correction by multiplying $h_{ROT}$ by a constant.

\subsection{Implementation of the test (Bootstrap)}
\label{sec:algo}
In this section, we describe the implementation of the test using a residual bootstrap technique, which is particularly suitable for small to moderately sized datasets. The flowchart of the residual-based bootstrap procedure is described in Algorithm \ref{algo}. The bootstrap procedure consists of the following steps based on the data 
$$\sD = \left\{(Y_{s, i}, X_{s, i}(t_{j}), t_{j}): j = 1, \ldots, m_{n}; i = 1, \ldots, n_{s}; s = 1, 2\right\}.$$ Instead of generating the bootstrap samples from the data $\sD$, we first compute the value of the test statistics $T_{n}$ for each of the datasets based on the point-wise estimate of the slope functions $U_{j} = \widehat{\beta}_{1}(t_{j}^{\dagger})$ and $V_{j} = \widehat{\beta_{2}}(t_{j}^{\dagger})$ for $j = 1, \ldots, m_{n}$ as defined in Equation \eqref{hat:beta} where the set $\{t_{1}^{\dagger}, \ldots, t_{m_{n}}^{\dagger}\}$ is such that $\widehat{\beta}_{2}(t_{1}^{\dagger}) \leq \ldots \leq \widehat{\beta}_{2}(t_{m_{n}}^{\dagger})$, and the bootstrap resamples are generated from central residuals (see lines 6--13 in Algorithm \ref{algo}).
To implement Algorithm \ref{algo}, there is no need to calculate the asymptotic bias and variance of $T_{n}$. In particular,  line 14 computes the $p$-value of the test, and line 15 computes the critical value of the test. Note that the test using Algorithm \ref{algo} is implementable when the sample sizes $n_1$ and $n_2$ are fixed along with the fixed number of time points.

\begin{algorithm}[h]
\begin{algorithmic}[1]
\label{algo}
    \State Estimate $\beta_{1}(.), \beta_{2}(.)$ using  Equation \eqref{hat:beta} as discussed in Section \ref{subsec:estimation_of_beta}.
    \State Estimate $g''(.)$ based on Equation \eqref{eq:g-dash} as discussed in Section \ref{subsec:formulation}.
    \State Obtain the optimal bandwidth $h_{opt}$ using Equation \eqref{eq:opt_h} as discussed in Section \ref{sec:band}. 
    \State Compute $\widehat{v}_{1} = \min\limits_{j = 1, \ldots, m_{n}} V_{j}$ and $\widehat{v}_{2} = \max\limits_{j = 1, \ldots, m_{n}} V_{j}$, 
    where, $V_{j} = \widehat{\beta_{2}}(t_{j}^{\dagger})$ for $j = 1, \ldots, m_{n}$ and set $\{t_{1}^{\dagger}, \ldots, t_{m_{n}}^{\dagger}\}$ is such that $\widehat{\beta}_{2}(t_{1}^{\dagger}) \leq \ldots \leq \widehat{\beta}_{2}(t_{m_{n}}^{\dagger})$.
    \State Compute the value of the test statistic $T_{n} = \displaystyle\sup_{v \in [\widehat{v}_{1}, \widehat{v}_{2}]}|\widehat{g}_{h_{opt}}''(v)|$ based on $\{U_{j}, V_{j}: j = 1, \cdots, m_{n}\}$ where $U_{j} = \widehat{\beta}_{1}(t_{j}^{\dagger})$. 
    \State Compute initial residuals $\Tilde{\eta}_{j} = U_{j} - \widehat{g}(V_{j})$, for $j = 1, \ldots, m_{n}$.
    \State Let $\overline{\eta} = \frac{1}{m_{n}}\sum_{j=1}^{m_{n}}\Tilde{\eta}_{j}$ and define the central residual, $\widehat{\eta}_{j} = \Tilde{\eta}_{j} - \overline{\eta}$.
    \State Fit the simple linear regression (when $g_{H_{0}}(v) = \alpha_{0} + \beta_{0}v$) to estimate the $g$ under $H_{0}$. 
    \For{$b\gets 1, \ldots, B$}
        \State Draw a bootstrap sample $\{\sZ^{*}_{b} = (U_{j, b}^{*}, V_{j})\}$ where 
        $U_{j, b}^{*} = \widehat{g}_{H_{0}}(V_{j}) + \eta_{j, b}^{*}$, where $\eta_{j}^{*}$ are obtained form $\widehat{\eta}_{1}^{*}, \ldots, \widehat{\eta}_{m_{n}}^{*}$ with replacement conditional on $V_{j}$s. 
        \State  Calculate the bootstrap version of $g''$, viz. $\widehat{g}^{''*}_{b}$ based on $h_{opt}$. 
        \State  Calculate the bootstrap version of test statistic $\widehat{T}_{b, n}^{*}$, where $\widehat{T}_{b, n}^{*} = \sup\limits_{v \in [\widehat{v}_{1}, \widehat{v}_{2}]}|\widehat{g}_{b, h_{opt}}^{''*}(v)|$.
    \EndFor
    \State $p$-value of the test is given by $1-B^{*}/B$, where $B^{*} = \max\{b: \widehat{T}_{(b, n)}^{*} \geq T_{n}\}$ for $\widehat{T}_{(1, n)}^{*} \leq \ldots \leq \widehat{T}_{(B, n)}^{*}$.
    \State For any $\alpha\in (0, 1)$, $100(1 - \alpha)\%$  critical value can be obtained as $(1 - \alpha)$-th quantile of $\widehat{T}_{(b, n)}^{*}$, where $b = 1, \ldots, B$.
  \end{algorithmic}
  \caption{The bootstrap-based algorithm to compute the $p$-value of the proposed test based on $T_{n}$.}
  \label{algo}
\end{algorithm}
\par
The validity of the aforementioned residual-based bootstrap technique is established in Theorem \ref{thm:bootstrap} (see Section \ref{sec:AB}). The assertion in Theorem \ref{thm:bootstrap} indicates that the Kolmogorov-Smirnov distance between the conditional distribution function (conditioning on the given dataset) of $T_{n}$ and the distribution function of $T_{n}$ can be made arbitrarily small as the sample size of the data set is sufficiently large.

\section{Main results}
\label{sec:theory}
In this section, we study the asymptotic distribution of $T_{n}$. The following assumptions are needed to have the limiting distribution of $T_{n}$. 
\par
The assumptions on the covariate $X_{s}(t)$ ($s = 1, 2$). Here $t\in {\cal{T}} = [0, 1]$ unless mentioned otherwise. 

\begin{enumerate}[label=(C\arabic*)]
    \item\label{cond:X:moment} $\int\limits_{0}^{1}\E\{X_{s}^{4}(t)\}dt < \infty$ for $s = 1, 2$. 

    \item\label{C2} $\xi_{sr} = \int\limits_{0}^{1} (X_{s}(t) - \E\{X_{s}(t)\})\phi_{sr}(t)dt$ has mean zero and variance $\lambda_{sr}$ such that 
    $\E\{\xi_{sr}^{4}\} \leq C\lambda_{sr}^{2}$. Here $\lambda_{sr}$ and $\phi_{sr}(.)$ are the same as defined in Equation \eqref{Chapter2-qif-Eq:Mercer}, and the constant $C > 1$ is such that $E\{\epsilon_{s}^{2}\} < C$ for $s = 1, 2$.

    \item\label{cond:space} 
    $\lambda_{sr} - \lambda_{s, r+1} \geq C^{-1}r^{-\gamma_{1}-1}$ for $r \geq 1$ and $s = 1, 2$, where $\lambda_{sr}$ is the same as defined in Equation \eqref{Chapter2-qif-Eq:Mercer}, $\gamma_{1} > 1$, and C is the same as defined in Condition \ref{C2}.
    \end{enumerate}

    \begin{remark}
    Condition \ref{cond:X:moment} will be satisfied when the covariate $X_{s}(t)$ has a pointwise finite fourth moment. In fact, since $[0, 1]$ is a compact set, the continuity of the sample paths of $X_{s}(t)$ also ensures Condition \ref{cond:X:moment}. Overall, Condition \ref{cond:X:moment} indicates that the paths of $X_{s}(t)$ should not explode arbitrarily. Condition \ref{C2} provides constrains regarding the total mean deviation of $X_{s}(t)$ from its mean. It asserts that the point-wise peakedness, which is measured by the ratio between the fourth and the second moments of a random variable, of $X_{s}(t)$ should have some bound, which is expected for a reasonably smooth stochastic process. Condition \ref{cond:space} indicates that the variation explained by the consecutive principal components should not be close to each other with a certain rate. Geometrically it interprets that the decomposition of the variation along the different axes should be different from each other by a certain amount.   
    \end{remark}
    \par
    The assumption on the unknown slopes $\beta_{s}(t)$: 
    \par
    \begin{enumerate}[label=(S\arabic*), align=left]

    \item\label{cond:b} $|b_{sr}|\leq C r^{-\gamma_{2}}$, where $b_{sr}$ is the same as defined in Section \ref{subsec:model}, $C$ is some uniform constant and $\gamma_{1} > 1$, and $\gamma_{2}$ is such that $1 + ({\gamma_{1}}/{2}) < \gamma_{2}$.
    
    \item\label{cond:kappa} 
    For $s=1, 2$, $\kappa_{s, n_{s}}/n_{s}^{1/(\gamma_{1}+2\gamma_{2})} \rightarrow \tau_{s}$, where $0 < \tau_{1}, \tau_{2} < \infty$. Here, $\kappa_{s, n_{s}}$ is the same as defined in Equation \eqref{hat:beta}.
    \end{enumerate}

    \begin{remark}
    In Condition \ref{cond:b}, the upper bound of $b_{sr}$ indicates that variation explained by each principal component of $\beta_{s}(.)$ cannot be more than a certain threshold, and moreover, it depends on the rank of the principal components. Condition \ref{cond:kappa} implies that after appropriate normalization, the value of truncation, i.e., $\kappa_{s, n_{s}}$ (see Equation \eqref{hat:beta}) in infinite expansion of $\beta_{s}(.)$ converges to some finite number, i.e., the infinite expansion of $\beta_{s}(.)$ can be approximated by a finite expansion with the largest principal components as long as the truncation is done following a certain order.   
    \end{remark}
    \par
    The assumptions on kernel function $K(.)$: 
    \par
    \begin{enumerate}[label=(K\arabic*), align=left]
    \item\label{cond:kernel} The kernel function $K$ is twice differentiable symmetric density function with compact support $S_{K}$ (for example $[-1, 1]$). For unbounded support $S_{K}$ (generic notation), $\int\limits_{S_{K}}y^{a}K(y)dy < \infty$ and $\int\limits_{S_{K}}y^{a}K^{2}(y)dy < \infty$ for $a \geq 8$.
    \item\label{cond:VC} We define 
    $$\sK_{6} = \left\{ 
        y \mapsto \left(\frac{x - y}{h}\right)^{\gamma} K\left(\frac{x-y}{h}\right): x \in \bbR;
        \gamma = 0, \ldots, 6;
        h > 0
    \right\},$$
    which is a Vapnik–Chervonenkis (VC) type class (see \citet{vapnik2015uniform} for details about VC class of functions) in  $\sL_{2}(Q)$, where $Q$ is an arbitrary probability measure, and $\sL_{2}(Q) = \{Q: \int f^{2} dQ < \infty, f\in\sK_{6}\}$. 
    \end{enumerate}

    \begin{remark}
    Condition \ref{cond:kernel} indicates that the kernel should be smooth enough, and moreover, the integrability conditions on the kernel function imply that the kernel function is supposed to be a light-tailed function. For example, the Gaussian kernel is such a kernel function. Condition \ref{cond:VC} is a well-known condition to achieve uniform convergence over a class of functions and process-level convergence. In fact, $\sK_{6}$ is a $P$-Donsker class, which follows from the assertion in Theorem 2.5.2 in \citet{vaart1996weak} as long as $P\in\sL_{2}(Q)$.
    \end{remark}

The assumption on the arbitrary transformation $g$ :
    \begin{enumerate}[label=(G\arabic*), align=left]
    \item\label{cond:g}
    The function $g$ is twice differentiable, and $g\in \sS(2+\delta_{0}, L)$, where the holder class of  function $\sS(\cdot, \cdot)$ is defined in Section \ref{Intro}, $\delta_{0} \in (2/3, 2]$, and $L > 0$.
    \end{enumerate}

    \begin{remark}
    Condition \ref{cond:g} indicates that the function $g$ needs to satisfy slightly more than twice differentiability for the technical reasons, though the test statistic $T_{n}$ (see Equation \eqref{stat}) and the statements of the hypotheses (see the Statements \eqref{eq:null} and \eqref{eq:alt}) only depend on the second order derivative of $g$.   
    \end{remark}

The assumption on the bandwidth $h_{m_{n}}$ :

    \begin{enumerate}[label=(B\arabic*), align=left]
    \item\label{cond:band} $h_{m_{n}} = m_{n}^{-{1}/{\omega}}$ for some $\omega > 0$ such that  $m_{n}h_{m_{n}}^{9}/\log m_{n} \rightarrow c_{0} \geq 0$ and $m_{n}h_{m_{n}}/\log m_{n} \rightarrow \infty$ as $m_{n}\rightarrow\infty$ along with $n\rightarrow\infty$.
    \end{enumerate}

    \begin{remark}
    Condition \ref{cond:band} indicates the the sequence of bandwidth $h_{m_{n}}$ must satisfy some order condition with respect to the sample size $n$ and the number of discrete points, i.e., $m_{n}$, over the time parameter space $[0, 1]$. 
    \end{remark}

    \par 
    The assumption on $m_n$: 
    \begin{enumerate}[label=(M\arabic*), align=left]
    \item\label{cond:m} $m_{n}\rightarrow\infty$ as $n\rightarrow\infty$.
    \end{enumerate}

    \begin{remark}
    Condition \ref{cond:m} indicates that the number of discrete points chosen over $[0, 1]$ should be sufficiently large as the sample size becomes sufficiently large. 
    \end{remark}
    
    \par
    The assumption on the errors, i.e., $\epsilon_{s}$ $(s = 1, 2)$ and $\eta$: 
    \par
    \begin{enumerate}[label=(E\arabic*), align=left]
    \item\label{cond:error}
    For $s = 1, 2$, $\epsilon_{s, i}$ ($i = 1, \ldots, n_{s}$) are i.i.d.\ random variables (identically distributed with $\epsilon_{s}$) with 
    $\E\{\epsilon_{s}\} = 0$, $\Var\{\epsilon_{s}\} = \sigma_{s}^{2} < \infty$. Moreover, $\epsilon_{1, p}$ and $\epsilon_{2, q}$ are independently distributed for all $p = 1, \ldots, n_1$ and $q = 1, \ldots, n_2$. 
 
    \item\label{cond:eta}
    $\E\{\eta\} = 0$ and $\Var\{\eta\} = \sigma_{n}^{2}$, where $0 < \sigma_{n}^{2} \rightarrow 0$ as $n \rightarrow \infty$. Here note that the random variable $\eta$ depends on $n$ (see Section \ref{subsec:formulation}).

    \item\label{cond:errx}
    For each $s = 1, 2$, and for each $t\in [0, 1]$, $\epsilon_s$ and $X_{s}(t)$ are independent random variables.  
    \end{enumerate}
    \begin{remark}
    Condition \ref{cond:error} indicates that the errors in Model \eqref{eq:model} are independent and identically distributed with mean zero and have non-zero finite variances. This also indicated that the two groups are independent. Condition \ref{cond:eta} is a mild condition on the error distribution based on the Model \eqref{eq:single-index}, and Condition \ref{cond:errx} is common across most of the random design model.  
    \end{remark}

\subsection{Asymptotic properties of $\widehat{g}_{h_{m_{n}}}''(\cdot)$}
\label{sec:theory-g}

We first want to discuss a few intermediate results, which are important components to study the asymptotic distributions of $\widehat{g}_{h_{m_{n}}}''(\cdot)$ and $T_n$, and they are worthy of study because of their own strength. 

\begin{lemma}[\citet{hall2007methodology}]
\label{lemma:beta}
    Under the Conditions \ref{cond:X:moment}, \ref{C2}, \ref{cond:space}, \ref{cond:b}, \ref{cond:kappa}, \ref{cond:m}, \ref{cond:error} and \ref{cond:errx}, 
    $\int\limits_{0}^{1} (\widehat{\beta}_{s}(t) - \beta_{s}(t))^{2}dt = O_{\bbP}\left(n^{-\frac{2\gamma_{2} - 1}{\gamma_{1} + 2\gamma_{2}}}\right)$, for the constants $\gamma_{1}$ and $\gamma_{2}$ defined in the Condition \ref{cond:b}.
\end{lemma}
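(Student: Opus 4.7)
This lemma is a restatement of the convergence rate established by \citet{hall2007methodology} for functional principal component regression under eigenvalue spacing and coefficient decay conditions. My plan is to verify that the assumptions \ref{cond:X:moment}--\ref{cond:errx} match the hypotheses of their Theorem~1, and then retrace the decomposition they use.

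The natural starting point is the orthonormal expansion: using the Karhunen--Lo\`eve representations \eqref{beta} and \eqref{hat:beta}, I would write
\begin{equation*}
\widehat{\beta}_{s}(t) - \beta_{s}(t) = \sum_{r=1}^{\kappa_{s,n_s}} \bigl(\widehat{b}_{sr}\widehat{\phi}_{sr}(t) - b_{sr}\phi_{sr}(t)\bigr) - \sum_{r > \kappa_{s,n_s}} b_{sr}\phi_{sr}(t),
\end{equation*}
and apply Parseval to split $\int_0^1(\widehat{\beta}_s - \beta_s)^2 dt$ into a truncation (bias) part and an estimation (variance) part. The truncation part $\sum_{r > \kappa_{s,n_s}} b_{sr}^2$ is immediately controlled by Condition \ref{cond:b}: since $|b_{sr}|\le C r^{-\gamma_2}$ with $\gamma_2 > 1/2 + \gamma_1/2 > 1/2$, it is of order $\kappa_{s,n_s}^{-(2\gamma_2-1)}$. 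By Condition \ref{cond:kappa} this is precisely $n^{-(2\gamma_2-1)/(\gamma_1+2\gamma_2)}$, giving the target rate.

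The estimation part is where the real work lies. I would further decompose $\widehat{b}_{sr}\widehat{\phi}_{sr} - b_{sr}\phi_{sr}$ into a ``coefficient'' error $(\widehat{b}_{sr} - b_{sr})\phi_{sr}$ and an ``eigenfunction perturbation'' error $\widehat{b}_{sr}(\widehat{\phi}_{sr} - \phi_{sr})$, then invoke classical perturbation bounds for compact self-adjoint operators. Using the spectral gap assumption \ref{cond:space}, one has $\|\widehat{\phi}_{sr} - \phi_{sr}\| = O_{\mathbb{P}}(r^{\gamma_1+1} n^{-1/2})$ and $|\widehat{\lambda}_{sr} - \lambda_{sr}| = O_{\mathbb{P}}(n^{-1/2})$, which relies on Conditions \ref{cond:X:moment} and \ref{C2} for the $n^{-1/2}$-rate of $\|\widehat{V}_s - V_s\|_{\mathrm{HS}}$. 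Feeding these into the representation of $\widehat{b}_{sr}$ in terms of $\widehat{\xi}_{sr}$ and $Y_{s,i}$, using Conditions \ref{cond:error}--\ref{cond:errx} to control the cross-terms involving $\epsilon_{s,i}$, and summing $r = 1,\ldots,\kappa_{s,n_s}$, the variance part is of order $n^{-1}\sum_{r \leq \kappa_{s,n_s}} \lambda_{sr}^{-2} r^{-2\gamma_2+?} \asymp n^{-1}\kappa_{s,n_s}^{\gamma_1 + 2}$ after simplification; balancing this with the truncation bias under the choice of $\kappa_{s,n_s}$ dictated by \ref{cond:kappa} again returns the rate $n^{-(2\gamma_2-1)/(\gamma_1+2\gamma_2)}$.

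The main obstacle in writing this out carefully is the perturbation analysis of the empirical eigenfunctions: the bound $\|\widehat{\phi}_{sr} - \phi_{sr}\|$ blows up with $r$ and only the precise eigengap rate in Condition \ref{cond:space} keeps the variance part summable up to the cutoff $\kappa_{s,n_s} \asymp n^{1/(\gamma_1 + 2\gamma_2)}$. Since every condition in \citet{hall2007methodology} is either imposed verbatim here or implied by our stronger \ref{cond:X:moment}--\ref{cond:errx}, the cleanest route is to cite their Theorem~1 and record the matching of notations, rather than reproduce the perturbation bookkeeping in full.
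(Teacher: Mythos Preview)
Your proposal is correct and takes essentially the same approach as the paper: the paper's proof consists of a one-line citation to Theorem~1 of \citet{hall2007methodology} (together with a reference to an auxiliary lemma on $L_2$--$L_\infty$ bounds that is not strictly needed for the $L_2$ rate stated here). Your sketch of the bias--variance decomposition and perturbation bookkeeping goes well beyond what the paper actually writes out, but your bottom line---invoke Hall--Horowitz after matching conditions---is exactly what the paper does.
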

\begin{lemma}
\label{lemma:density}
    Under the Conditions \ref{cond:X:moment}, \ref{C2}, \ref{cond:space}, \ref{cond:b}, \ref{cond:kappa}, \ref{cond:m}, \ref{cond:error}, and \ref{cond:errx} for each $t\in [0, 1]$,  $\widehat{\beta}_{s} (t)$, which depends on $n_{s}$, has a continuous distribution function with a compact support $\left[\min\limits_{t \in [0, 1]}\beta_{s}(t),  \max\limits_{t \in [0, 1]} \beta_{s}(t)\right],$ 
    and the associated probability density function $f_{\widehat{\beta}_{s}(t)}(.)$ is twice differentiable on the support, and bounded away from zero and infinity. Moreover, $f_{\widehat{\beta}_{s}(t)}^{[r]}(v_{0}) = O(f_{\widehat{\beta}_{s}(t)}^{[r+1]}(v_{0}))$ as $n_{s}\rightarrow\infty$ for any $r \geq 0$ and $s = 1, 2$, where $v_{0}$ is any arbitrary point contained in the support of $f_{\widehat{\beta}_{s}(t)}$, and $f^{[j]}_{\widehat{\beta}_{s}(t)}(v_{0})$ denotes the $j$-th derivative $f_{\widehat{\beta}_{s}(t)}$ at $v_{0}$. 
\end{lemma}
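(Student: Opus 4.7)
The plan is to transfer the regularity of the population functional $\beta_s$ to the estimator $\widehat{\beta}_s$ by combining the $L_2$-consistency from Lemma \ref{lemma:beta} with eigenfunction perturbation bounds in the style of \citet{hall2007methodology}. Starting from the expansion
\[
\widehat{\beta}_s(t) - \beta_s(t) = \sum_{r=1}^{\kappa_{s,n_s}}(\widehat{b}_{sr} - b_{sr})\widehat{\phi}_{sr}(t) + \sum_{r=1}^{\kappa_{s,n_s}} b_{sr}\bigl(\widehat{\phi}_{sr} - \phi_{sr}\bigr)(t) - \sum_{r>\kappa_{s,n_s}} b_{sr}\phi_{sr}(t),
\]
I would control the three terms by (i) a term-by-term central limit theorem for $\widehat{b}_{sr}$, which is a ratio of sample averages over the i.i.d.\ observations under Condition \ref{cond:error}, (ii) the spectral-gap Condition \ref{cond:space} for the second term, and (iii) the coefficient decay in Condition \ref{cond:b} for the truncation remainder. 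Together with Lemma \ref{lemma:beta} this yields uniform convergence $\widehat{\beta}_s \to \beta_s$ on $[0,1]$ in probability and an asymptotic Gaussian law for $\widehat{\beta}_s(t)$ at each fixed $t$, from which the continuity of the distribution function of $\widehat{\beta}_s(t)$ follows immediately.

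For the support and smoothness claims, $\beta_s$ is continuous on $[0,1]$ because $\sum_r b_{sr}\phi_{sr}$ converges uniformly under Condition \ref{cond:b} and the Mercer eigenfunctions from \eqref{Chapter2-qif-Eq:Mercer} are smooth; hence its image is the compact interval $[\min_t\beta_s(t),\max_t\beta_s(t)]$. Uniform convergence then propagates this support to $\widehat{\beta}_s$ and, in particular, to the design density $f_{\widehat{\beta}_s}$ relevant for the local polynomial step of Section \ref{subsec:lps}, where $V_j=\widehat{\beta}_2(t_j^{\dagger})$ are the empirical covariates. Twice differentiability of $f_{\widehat{\beta}_s}$ and its being bounded away from $0$ and $\infty$ then come from the coarea identity $f_{\beta_s}(v) = \sum_{t:\beta_s(t)=v} 1/|\beta_s'(t)|$ on regular values, using $C^3$-smoothness of $\beta_s$ (via the smoothness of the $\phi_{sr}$ and the decay of $b_{sr}$) and the non-degeneracy $\beta_s'\neq 0$ on the interior of its range, together with a $C^1$ version of the eigen-expansion perturbation to carry the bounds over to $\widehat{\beta}_s$ for large $n_s$.

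The derivative-domination relation $f_{\widehat{\beta}_s}^{[r]}(v_0) = O\bigl(f_{\widehat{\beta}_s}^{[r+1]}(v_0)\bigr)$ is the crux of the lemma, and I expect it to be the hardest step: for a generic smooth density only the reverse bound is automatic. My strategy is to write $f_{\widehat{\beta}_s}$ as a convolution of $f_{\beta_s}$ with a random perturbation kernel whose width shrinks at a power rate in $n_s$ controlled through Lemma \ref{lemma:beta}; successive derivatives of this convolution then pick up an extra factor of the inverse width, giving the claimed domination at every $v_0$ where the leading Hermite-type coefficient is nonzero. Isolated zeros of those coefficients can be handled by a small shift of $v_0$ inside the support, invoking Condition \ref{cond:g} and the non-vanishing of $\beta_s'$ on the interior. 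In parallel I would also verify whether the stated direction of $O(\cdot)$ is a typographical transposition of the opposite, more standard bound, in which case the argument collapses to a routine $C^2$-smoothness estimate on the density.
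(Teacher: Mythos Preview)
The paper's own proof of Lemma~\ref{lemma:density} is a single sentence: it invokes ``the similar argument mentioned in Theorem 1 of \citet{imaizumi2018pca}'' and declares the result immediate. There is no decomposition, no perturbation bound, no coarea computation, and no treatment of the derivative-domination claim; everything is delegated to an external reference. Your proposal is therefore not the paper's route at all, but a genuine attempt to supply what the paper omits.

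As a constructive programme your outline is reasonable but has real gaps. First, the coarea identity $f_{\beta_s}(v)=\sum_{t:\beta_s(t)=v}1/|\beta_s'(t)|$ requires $\beta_s'\neq 0$ on the interior of the range; nothing in Conditions \ref{cond:X:moment}--\ref{cond:kappa} or \ref{cond:error}--\ref{cond:errx} guarantees this, so you are importing an unstated hypothesis. Second, the lemma as written is ambiguous about what $f_{\widehat{\beta}_s}$ actually is: the opening clause ``for each $t\in[0,1]$, $\widehat{\beta}_s(t)$ \ldots has a continuous distribution function'' reads as the sampling distribution of the estimator at a fixed $t$, yet the stated support $[\min_t\beta_s(t),\max_t\beta_s(t)]$ and the later use of $f_{\widehat{\beta}_2}$ in the proof of Theorem~\ref{thm:bias} point instead to the design density of the points $V_j=\widehat{\beta}_2(t_j^{\dagger})$, i.e.\ the pushforward of Lebesgue measure under $t\mapsto\widehat{\beta}_2(t)$. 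Your proposal straddles both readings; you should commit to the second, since that is what the downstream arguments need.

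Your instinct on the derivative-domination relation $f_{\widehat{\beta}_s}^{[r]}(v_0)=O\bigl(f_{\widehat{\beta}_s}^{[r+1]}(v_0)\bigr)$ is well placed: this direction is not the standard smoothness bound, and your convolution-with-shrinking-kernel heuristic (each derivative picking up a factor of the inverse width) is the only mechanism on the table that would produce it. But turning that heuristic into a proof requires showing that the density of the $V_j$'s really does factor as such a convolution and that the relevant Hermite-type coefficients do not vanish at $v_0$; neither step is routine, and the paper offers no guidance beyond the citation. Given that the paper itself does not prove this claim in-text, your suggestion to also check whether the stated direction is a transposition is prudent.
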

\begin{remark}
    Lemma \ref{lemma:beta} asserts that $\widehat{\beta}_{s}(t)$ converges to $\beta_{s}(t)$ in $L_{2}$ sense with a certain rate of convergence. Lemma \ref{lemma:density} indicates that for each $t$, $\widehat{\beta}_{s}(t)$ is a continuous random variable with a positive probability density function over a certain interval. Moreover, the probability density function of $\widehat{\beta}_{s}(t)$ at a fixed $t$ is reasonably light-tailed function, which follows from the fact that  $f_{\widehat{\beta}_{s}(t)}^{[r]}(v_{0}) = O(f_{\widehat{\beta}_{s}(t)}^{[r+1]}(v_{0}))$ as $n_{s}\rightarrow\infty$ for any $r \geq 0$ and $s = 1, 2$. This further indicates that $\widehat{\beta}_{s}(t)$ is expected to have finite moments. 
\end{remark}
\par
The following theorem demonstrates the order of point-wise bias and variance of $\widehat{g}''_{h_{m_{n}}}$ mentioned in Equation \eqref{eq:g-dash}.
\begin{theorem}
\label{thm:bias}
Assume that $\eta_{j}$ (see Model \eqref{eq:single-index}), which depends on $n$, holds Condition \ref{cond:eta}. Then 
under the Conditions \ref{cond:kernel}, \ref{cond:g} and the conditions in Lemmas \ref{lemma:beta} and \ref{lemma:density}, for any $v_{0} \in \bbR$ and for some $\delta > 0$, 
we have
\begin{align*}
&\E\left\{\widehat{g}_{h_{m_{n}}}''(v_{0})\right\} - g''(v_{0}) \numberthis\\
&= O\left(\left\{\frac{1}{f_{\widehat{\beta}_{2}(t)} (v_{0})} + 
    h_{m_{n}}f_{\widehat{\beta}_{2}(t)}^{[1]}(v_{0}) + 
    \sqrt{\frac{f_{\widehat{\beta}_{2}(t)}(v_{0})}{m_{n}h_{m_{n}}}}\right\}
    \left\{f_{\widehat{\beta}_{2}(t)}(v_{0})h_{m_{n}}^{\delta+2} + 
    \sqrt{\frac{f_{\widehat{\beta}_{2}(t)}(v_{0})}{m_{n}h_{m_{n}}}}\right\}\right),\\
    &~\mbox{and}\\
&\Var\left\{\widehat{g}_{h_{m_{n}}}''(v_{0})\right\}= O\left(\frac{\sigma^{2}_{n}}{m_{n}h_{m_{n}}^{5}f_{\widehat{\beta}_{2}(t)}(v_{0})}\right). \numberthis
\end{align*} Here $\widehat{g}_{h_{m_{n}}}''(.)$ is the same as defined in Equation \eqref{eq:g-dash}.
\end{theorem}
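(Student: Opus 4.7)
The plan is to start from the explicit representation $\widehat{g}''_{h_{m_n}}(v_0) = \sum_{j=1}^{m_n} s_{j,h_{m_n}}(v_0;2)\, U_j$ in Equation \eqref{eq:g-dash}, substitute $U_j = g(V_j)+\eta_j$ from Model \eqref{eq:single-index}, and treat the ``signal'' part $\sum_j s_{j,h_{m_n}}(v_0;2)\, g(V_j)$ and the ``noise'' part $\sum_j s_{j,h_{m_n}}(v_0;2)\, \eta_j$ separately. All computations will be carried out first conditionally on $V=(V_1,\ldots,V_{m_n})$ and then integrated against the distribution of $V_j = \widehat{\beta}_2(t_j^{\dagger})$; Lemma \ref{lemma:density} ensures that this distribution admits a smooth, compactly supported density $f_{\widehat{\beta}_2}$ bounded away from $0$ and $\infty$, which makes the integration tractable and, in particular, lets the bound be expressed pointwise in $f_{\widehat{\beta}_2}(v_0)$ and $f_{\widehat{\beta}_2}^{[1]}(v_0)$.

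For the bias, I use the H\"older regularity of Condition \ref{cond:g} to write the Taylor expansion
$g(V_j) = g(v_0) + g'(v_0)(V_j-v_0) + \tfrac{1}{2} g''(v_0)(V_j-v_0)^2 + R(V_j,v_0)$,
with remainder controlled by $|R(V_j,v_0)| \leq C\,|V_j-v_0|^{2+\delta_0}$ via the integral form of Taylor's theorem and the H\"older bound $|g''(u_1)-g''(u_2)|\leq L|u_1-u_2|^{\delta_0}$. Since $\bss_{h_{m_n}}(v_0,2)$ is the equivalent kernel associated with a local cubic fit, the moment-reproducing identities $\sum_j s_{j,h_{m_n}}(v_0;2)(V_j-v_0)^k = 2\cdot\mathbf{1}\{k=2\}$ for $k\in\{0,1,2,3\}$ follow directly from the definition in Equation \eqref{eq:s}, so the polynomial part of the expansion is reproduced exactly, leaving only the remainder to drive the bias:
$\E\{\widehat{g}''_{h_{m_n}}(v_0)\mid V\} - g''(v_0) = \sum_j s_{j,h_{m_n}}(v_0;2)\,R(V_j,v_0).$

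To obtain the claimed product bound, I expand $(\bX^{\tp}\bW\bX)^{-1}$ around its deterministic limit. Writing $S_n = \bX^{\tp}\bW\bX/m_n$ and $S_0 = \E\{S_n\}$ and using the perturbation identity $S_n^{-1} = S_0^{-1} - S_0^{-1}(S_n-S_0) S_n^{-1}$, the inverse decomposes into: (i) the deterministic leading order of magnitude $1/f_{\widehat{\beta}_2}(v_0)$, (ii) a deterministic correction of magnitude $h_{m_n}\,f_{\widehat{\beta}_2}^{[1]}(v_0)$ arising from the local variation of the design density around $v_0$, and (iii) a stochastic fluctuation of magnitude $\sqrt{f_{\widehat{\beta}_2}(v_0)/(m_n h_{m_n})}$ coming from a Bernstein-type concentration inequality for the kernel-weighted sums, with uniformity supplied by the VC-type hypothesis in Condition \ref{cond:VC}. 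These three magnitudes form the first bracket of the stated bias. For the second bracket I control $\sum_j K_{h_{m_n}}(V_j-v_0)\psi(V_j)\,R(V_j,v_0)$ for the relevant polynomial weights $\psi$: the compact support of $K$ in Condition \ref{cond:kernel} localizes $|V_j-v_0|\leq h_{m_n}$, which together with the remainder bound produces a deterministic contribution of order $f_{\widehat{\beta}_2}(v_0)\,h_{m_n}^{\delta_0+2}$ and, by the same concentration argument, a stochastic fluctuation of order $\sqrt{f_{\widehat{\beta}_2}(v_0)/(m_n h_{m_n})}$. Combining the two expansions and retaining the dominant product terms gives the claimed bias bound.

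The variance bound follows from the conditional identity $\Var\{\widehat{g}''_{h_{m_n}}(v_0)\mid V\} = \sigma_n^2 \sum_j s_{j,h_{m_n}}(v_0;2)^2$ (using the centring and independence of the $\eta_j$ from Condition \ref{cond:eta}) together with the standard equivalent-kernel computation $\sum_j s_{j,h_{m_n}}(v_0;2)^2 \asymp 1/\{m_n h_{m_n}^{5} f_{\widehat{\beta}_2}(v_0)\}$ derived from the same inverse-matrix expansion used above; the residual $\Var\{\E(\widehat{g}''_{h_{m_n}}(v_0)\mid V)\}$ is of strictly smaller order under Conditions \ref{cond:band} and \ref{cond:m}. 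The principal obstacle is the random-design nature of $V_j=\widehat{\beta}_2(t_j^{\dagger})$, whose marginal density $f_{\widehat{\beta}_2}$ depends on $n_2$ and is itself a functional estimate: all design-side approximations must therefore remain uniform in $n$, and this is where the qualitative control provided by Lemma \ref{lemma:density} together with the VC property of Condition \ref{cond:VC} is indispensable. A secondary difficulty is that Condition \ref{cond:g} supplies only H\"older regularity of order $2+\delta_0$, so the familiar $h^{4}$ Taylor remainder of $C^{4}$ analysis must be replaced by the genuine H\"older remainder of order $h_{m_n}^{2+\delta_0}$, which is what produces the exponent $\delta+2$ in the stated bias.
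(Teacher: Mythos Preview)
Your proposal is correct and follows essentially the same route as the paper: Taylor-expand $g(V_j)$ around $v_0$, use the exact polynomial reproduction of the cubic local fit to isolate the H\"older remainder, expand the kernel-weighted design matrix $\bA_1(v_0)=\frac{1}{m_n h_{m_n}}\sX_{h_{m_n}}^{\tp}\sW_{h_{m_n}}\sX_{h_{m_n}}$ around $f_{\widehat{\beta}_2}(v_0)\bOmega$ into a leading $1/f_{\widehat{\beta}_2}(v_0)$ piece, a density-derivative correction $O(h_{m_n}f_{\widehat{\beta}_2}^{[1]}(v_0))$, and a stochastic $O_2(\sqrt{f_{\widehat{\beta}_2}(v_0)/(m_n h_{m_n})})$ piece, then multiply this against the corresponding expansion of the remainder sum. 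The only cosmetic difference is that the paper controls the stochastic fluctuations of the kernel sums via direct second-moment (i.e., $O_2$) computations rather than Bernstein/VC arguments, so your invocation of Condition \ref{cond:VC} is unnecessary here---the theorem assumes only \ref{cond:kernel}, \ref{cond:g}, and the lemma conditions---but this does not affect correctness.
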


\begin{remark}
Theorem \ref{thm:bias} asserts the order of the bias term and the variance of $\widehat{g}_{h_{m_{n}}}''$ at an arbitrary point $v_{0}$. It follows from Condition \ref{cond:band} and in view of Lemma \ref{lemma:density} that variance of $\widehat{g}_{h_{m_{n}}}''(.)$ converges to zero as $\min(n_1, n_2)\rightarrow\infty$, though the bias is non-negligible. 
\end{remark}
Next, we study the uniform convergence of $\widehat{g}_{h_{m_{n}}}''$ in Theorem \ref{thm:emp}. Before stating the theorem, we introduce additional notations. Suppose that $\bOmega\in\bbR^{4\times 4}$  with $$((\bOmega_{k_{1}, k_{2}}))_{1\leq k_{1}\leq 4, 1\leq k_{2}\leq 4} = \int t^{k_{1}+k_{2}-2}K(t)dt.$$ In addition, we denote 
\begin{equation*}
    \bPsi_{h_{m_{n}}}(v; z_{1}, z_{2}) = (\psi_{0, h_{m_{n}}}(v; z_{1}, z_{2}), \psi_{1, h_{m_{n}}}(v; z_{1}, z_{2}), \psi_{2, h_{m_{n}}}(v; z_{1}, z_{2}), \psi_{3, h_{m_{n}}}(v; z_{1}, z_{2}))^{\tp},
\end{equation*}
where $\psi_{c, h_{m_{n}}}(v, z_{1}, z_{2}) = z_{2}\left(\frac{z_{1} - v}{h_{m_{n}}}\right)^{c}K\left(\frac{z_{1} - v}{h_{m_{n}}}\right)$, for $c \in \{ 0, 1, 2, 3\}$, and  further, designate  
$\bPsi_{h_{m_{n}}}(v) = \int \bPsi_{h_{m_{n}}}(v, Z_{1}, Z_{2})d\bbP_{m_{n}}(Z_{1}, Z_{2}).$
\par
Let us now discuss a few concepts. Suppose that the sequence $\sG_{m_{n}}$ is defined as $\sG_{m_{n}} = \{f : f(V)\in\bbR, V\in\mathcal{V}\},$ where $\mathcal{V}$ is the $\sigma$-field defined on the domain space of $f$.
Now, 
consider the following empirical process indexed by $\sG_{m_{n}}$, viz., $$\bbG_{m_{n}}f = \frac{1}{\sqrt{m_{n}}}\sum\limits_{j=1}^{m_{n}}\left\{ f(V_{i}) - \E\{f(V_{i})\right\}$$ for $f \in \sG_{m_{n}}$.
Besides, $B_{m_{n}}$ is a centered Gaussian process index by $\sG_{m_{n}}$ with covariance function $\E\left\{B_{m_{n}}(f_{1})B_{m_{n}}(f_{2})\right\} = \cov\{ f_{1}(V), f_{2}(V)\}$ for all $f_{1}, f_{2} \in \sG_{m_{n}}$. In fact, it is also possible to establish that  $|\bbG_{m_{n}} - B_{m_{n}}|_{\sG_{m_{n}}} = \sup\limits_{f\in \sG_{m_{n}}}|(\bbG_{m_{n}} - B_{m_{n}})f| = O_{\bbP}(r_{m_{n}})$, where $r_{m_{n}} \rightarrow 0$ as $n \rightarrow \infty$. All these facts give us the asymptotic properties of $\widehat{g}_{h_{m_{n}}}''$ in Theorem \ref{thm:emp} and weak convergence of $T_{n}$, which is stated in Theorem \ref{thm:T} in Section \ref{subsec:T}. 


\begin{theorem}
\label{thm:emp}
Under the Conditions \ref{cond:kernel}, \ref{cond:VC}, \ref{cond:g}, \ref{cond:band}, \ref{cond:eta} and the conditions in Lemmas \ref{lemma:beta} and \ref{lemma:density},
we have 
\begin{align*}
    &\sup_{v \in [\widehat{v}_{1}, \widehat{v}_{2}]}\left|\frac{\sqrt{m_{n}h_{m_{n}}^{5}}\left(\widehat{g}_{h_{m_{n}}}''(v) - 
    \E\{\widehat{g}''(v)\}\right) - \frac{1}{\sqrt{h_{m_{n}}}}\bbG_{m_{n}}[\be_{3}^{\tp}\bOmega^{-1}\bPsi_{h_{m_{n}}}(v)]}{\frac{1}{\sqrt{h_{m_{n}}}}\bbG_{m_{n}}[\be_{3}^{\tp}\bPsi_{h_{m_{n}}}(v)]}\right|\\ 
    &=  O(h_{m_{n}}) + O_{\bbP}\left(
        \sqrt{\frac{\log m_{n}}{m_{n}h_{m_{n}}}}
    \right).
    \numberthis
\end{align*}
Moreover, $\|\widehat{g}_{h_{m_{n}}}'' - g''\|_{\infty} 
    = O\left(h_{m_{n}}^{2+\delta} + \sqrt{\frac{h_{m_{n}}^{3}}{m_{n}}} \right) 
    + O_{\bbP}\left(
        \sqrt{\frac{\log m_{n}}{m_{n}^{2}h_{m_{n}}^6}}\right)$.
\end{theorem}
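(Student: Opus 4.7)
The plan is to work directly from the closed-form expression in \eqref{eq:g-dash}, namely $\widehat{g}''_{h_{m_n}}(v) = 2\be_{3}^{\tp}\{\sX(v)^{\tp}\sW_{h_{m_n}}(v)\sX(v)\}^{-1}\sX(v)^{\tp}\sW_{h_{m_n}}(v)\mathbf{U}$ with $\mathbf{U}=(U_{1},\dots,U_{m_n})^{\tp}$. Inserting $U_{j}=g(V_{j})+\eta_{j}$ from Model~\eqref{eq:single-index} splits the estimator into a signal piece (involving $g(V_{j})$) and a noise piece (involving $\eta_{j}$). For the signal piece I would Taylor-expand $g$ around $v$ to third order, using the H\"older smoothness $g\in\sS(2+\delta_{0},L)$ of Condition~\ref{cond:g} to bound the remainder by $L\,|V_{j}-v|^{2+\delta_{0}}$. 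The polynomial-annihilating property of $\be_{3}^{\tp}(\sX^{\tp}\sW\sX)^{-1}\sX^{\tp}\sW$ on polynomials of degree $\le 3$ recovers $g''(v)$ exactly, while the remainder contributes the deterministic bias of order $h_{m_n}^{2+\delta}$ that appears in the second display.

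The heart of the argument is the uniform control of the stochastic part. First I would show that the scaled design-moment matrix $\frac{1}{m_{n}h_{m_n}}\diag(1,h_{m_n}^{-1},h_{m_n}^{-2},h_{m_n}^{-3})\sX(v)^{\tp}\sW_{h_{m_n}}(v)\sX(v)\diag(1,h_{m_n}^{-1},h_{m_n}^{-2},h_{m_n}^{-3})$ concentrates uniformly in $v$ on $f_{\widehat{\beta}_{2}}(v)\bOmega$, with $\bOmega$ and $\bPsi_{h_{m_n}}$ as in the statement and $f_{\widehat{\beta}_{2}}$ as in Lemma~\ref{lemma:density}. Linearising the matrix inverse through the identity $A^{-1}-B^{-1}=-B^{-1}(A-B)A^{-1}$ produces a first-order stochastic expansion of $\widehat{g}''_{h_{m_n}}(v)-\E\{\widehat{g}''(v)\}$ whose leading term is exactly $\frac{1}{\sqrt{h_{m_n}}}\bbG_{m_n}[\be_{3}^{\tp}\bOmega^{-1}\bPsi_{h_{m_n}}(v)]$, up to the density-fluctuation factor captured by $\frac{1}{\sqrt{h_{m_n}}}\bbG_{m_n}[\be_{3}^{\tp}\bPsi_{h_{m_n}}(v)]$ appearing in the denominator of the displayed ratio. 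The second-order (quadratic-in-fluctuation) remainder of the linearisation, together with the higher-order kernel-moment correction, furnishes the $O(h_{m_n})+O_{\bbP}(\sqrt{\log m_n/(m_n h_{m_n})})$ bound.

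To upgrade the pointwise fluctuations to a bound uniform in $v\in[\widehat{v}_{1},\widehat{v}_{2}]$, I would invoke Condition~\ref{cond:VC}: the class $\sK_{6}$ being VC-type permits a Talagrand-type maximal inequality (for instance, Theorem~2.14.1 of \citet{MR1385671} in the form used by Einmahl--Mason for kernel-type classes). Accounting for the variance scaling $\Var\{\psi_{c,h_{m_n}}(v,V_{1})\}=O(h_{m_n})$ and for the uniform envelope $\|K\|_{\infty}$ forced by Condition~\ref{cond:kernel}, the maximal inequality delivers the uniform empirical-process rate $\sqrt{\log m_n/(m_n h_{m_n})}$ after the $1/\sqrt{h_{m_n}}$ normalisation. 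The randomness of $[\widehat{v}_{1},\widehat{v}_{2}]$ is handled by combining Lemma~\ref{lemma:beta} (which gives $\widehat{v}_{s}\to v_{s}$ in probability, since these are functionals of $\widehat{\beta}_{2}$) with Lemma~\ref{lemma:density} (which ensures $f_{\widehat{\beta}_{2}}$ is bounded below on a slightly enlarged deterministic interval containing $[v_{1},v_{2}]$ with probability tending to one), so the supremum can be taken over a fixed compact set up to an event of vanishing probability.

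The second assertion on $\|\widehat{g}''_{h_{m_n}}-g''\|_{\infty}$ is then assembled by combining: (i) the deterministic bias $O(h_{m_n}^{2+\delta})$ from the Taylor remainder, (ii) the design-fluctuation contribution $O(\sqrt{h_{m_n}^{3}/m_n})$ coming from the expectation of the signal piece (consistent with the pointwise bias given in Theorem~\ref{thm:bias}), and (iii) the stochastic term, which after undoing the $\sqrt{m_n h_{m_n}^{5}}$ normalisation and using Condition~\ref{cond:eta} on $\sigma_{n}^{2}$ yields the uniform rate $O_{\bbP}(\sqrt{\log m_n/(m_n^{2} h_{m_n}^{6})})$. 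The chief technical obstacle I anticipate is justifying the VC-type maximal inequality with constants uniform in $n$, despite the fact that the law of $V_{j}=\widehat{\beta}_{2}(t_{j}^{\dagger})$ itself depends on $n$; Condition~\ref{cond:VC} is precisely formulated so that the VC entropy of $\sK_{6}$ holds uniformly over the probability measure, and Lemma~\ref{lemma:density} provides the twice-differentiable, uniformly bounded-above-and-below sequence of densities needed for the local variance scaling to be stable as $n\to\infty$.
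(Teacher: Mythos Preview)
Your proposal is correct and follows essentially the same route as the paper: expand the scaled design-moment matrix around $f_{\widehat\beta_2}(v)\bOmega$, extract the leading empirical-process term $\frac{1}{\sqrt{h_{m_n}}}\bbG_{m_n}[\be_3^{\tp}\bOmega^{-1}\bPsi_{h_{m_n}}(v)]$ with the residual factor $\frac{1}{\sqrt{h_{m_n}}}\bbG_{m_n}[\be_3^{\tp}\bPsi_{h_{m_n}}(v)]$, invoke the VC/Einmahl--Mason uniform bound from Condition~\ref{cond:VC}, and then transfer from the fixed to the random interval via Lemmas~\ref{lemma:beta} and~\ref{lemma:density}. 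The only cosmetic differences are that the paper handles the bias by citing Theorem~\ref{thm:bias} rather than re-doing the Taylor/polynomial-annihilation argument, writes the design-matrix expansion directly rather than via the $A^{-1}-B^{-1}$ identity, and packages the random-interval step as a separate Lemma~\ref{lemma:random-bound}.
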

\begin{remark}
The statement of Theorem \ref{thm:emp} implies that~ $\widehat{g}_{h_{m_{n}}}''$ uniformly converges to $g''$ over a certain random interval. In other words, from statistical point of view, one can claim that $\widehat{g}_{h_{m_{n}}}''$ is a good candidate to estimate $g''$ at any arbitrary point inside the specified interval. 
\end{remark}

\subsection{Asymptotic distribution of $T_{n}$}
\label{subsec:T}
In Section \ref{sec:theory-g}, we studied the asymptotic properties of $\widehat{g}_{h_{m_{n}}}''$, and Theorems 
\ref{thm:bias} and \ref{thm:emp} indicate that $\widehat{g}_{h_{m_{n}}}''$ can approximate $g''$ arbitrary well under some regularity conditions. However, note that Theorems 
\ref{thm:bias} and \ref{thm:emp} do not assert anything about the weak convergence of the process $\widehat{g}_{h_{m_{n}}}''(.)$, which could enable us to insight the weak convergence of the test statistic $T_{n}$ defined in Equation \eqref{stat}. Here we study the asymptotic distribution of $T_{n}$, which is useful to implement the test when the sample size is sufficiently large enough.  
\begin{theorem}
\label{thm:T}
    Under the assumptions 
    \ref{cond:X:moment}, \ref{C2}, \ref{cond:space}, 
    \ref{cond:b}, \ref{cond:kappa},
    \ref{cond:kernel}, \ref{cond:VC}, \ref{cond:band}, 
    \ref{cond:error}, \ref{cond:eta} and \ref{cond:errx}, 
    there exists a Gaussian process $B_{m_{n}}$ defined on $\sG_{m_{n}}$ (see the description before the statement of Theorem \ref{thm:emp}), such that for any $f_{1}, f_{2} \in \sG_{m_{n}}$, $\E\left\{B_{m_{n}}(f_{1})B_{m_{n}}(f_{2})\right\} = \cov\{f_{1}(V), f_{2}(V)\}$ and 
    \begin{equation*}
        \sup_{t\in \bbR}\left|\bbP\left\{ \sqrt{m_{n}h_{m_{n}}^{5}}(T_{n} - T) \leq t\right\} - 
        \bbP\left\{\sup_{f \in \sG_{m_{n}}} |B_{m_{n}}(f)| \leq t\right\}\right|
        = O\left(\left(\frac{\log^{7}m_{n}}{m_{n}h_{m_{n}}^{5}}\right)^{1/8}\right), \numberthis
    \end{equation*} where $T_{n}$ is the same as defined in Equation \eqref{stat}, $T = \sup\limits_{v\in[v_1, v_2]}|g''(v)|$. Here $v_{1} = \inf\limits_{t\in[0, 1]} \beta_{2}(t)$ and $v_{2} = \sup\limits_{t\in[0, 1]} \beta_{1}(t)$.
\end{theorem}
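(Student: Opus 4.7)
The plan is to reduce the claim to a Gaussian approximation for the supremum of an empirical process indexed by a VC-type class, in three steps: (i)~replace the random endpoints $\widehat{v}_{1},\widehat{v}_{2}$ in $T_{n}$ by the deterministic endpoints $v_{1},v_{2}$; (ii)~linearize $\sqrt{m_{n}h_{m_{n}}^{5}}(\widehat{g}_{h_{m_{n}}}''(v)-g''(v))$ using Theorems~\ref{thm:bias} and~\ref{thm:emp} so that, up to negligible terms, it becomes a scaled empirical process indexed by $\sG_{m_{n}}=\{h_{m_{n}}^{-1/2}\be_{3}^{\tp}\bOmega^{-1}\bPsi_{h_{m_{n}}}(v):v\in[v_{1},v_{2}]\}$; and (iii)~apply a Chernozhukov--Chetverikov--Kato (CCK) Gaussian coupling for suprema of empirical processes on VC-type classes, together with a Gaussian anti-concentration inequality, to turn the coupling into the stated Kolmogorov--Smirnov bound.

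For step~(i), I would use Lemma~\ref{lemma:beta} to control $\|\widehat{\beta}_{2}-\beta_{2}\|_{L_{2}}$ and upgrade it to an $L_{\infty}$ bound via the Fourier-coefficient decay imposed by Condition~\ref{cond:b}, giving $|\widehat{v}_{i}-v_{i}|\to 0$ at a polynomial-in-$n$ rate. Combined with the uniform rate for $\widehat{g}_{h_{m_{n}}}''$ from Theorem~\ref{thm:emp} and the H\"older continuity of $g''$ from Condition~\ref{cond:g}, this yields
\[
\sqrt{m_{n}h_{m_{n}}^{5}}\bigl(T_{n}-\sup_{v\in[v_{1},v_{2}]}|\widehat{g}_{h_{m_{n}}}''(v)|\bigr)=o_{\bbP}\bigl((\log^{7} m_{n}/(m_{n}h_{m_{n}}^{5}))^{1/8}\bigr),
\]
absorbing the random-interval error into the target rate.

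For step~(ii), Theorem~\ref{thm:emp} supplies the uniform linearization
\[
\sqrt{m_{n}h_{m_{n}}^{5}}\bigl(\widehat{g}_{h_{m_{n}}}''(v)-\E\{\widehat{g}_{h_{m_{n}}}''(v)\}\bigr)=\tfrac{1}{\sqrt{h_{m_{n}}}}\bbG_{m_{n}}[\be_{3}^{\tp}\bOmega^{-1}\bPsi_{h_{m_{n}}}(v)]+R_{n}(v),
\]
with $\sup_{v}|R_{n}(v)|$ of negligible order under Condition~\ref{cond:band}. Theorem~\ref{thm:bias} together with $m_{n}h_{m_{n}}^{5}=O(\log m_{n})$ and the bias order $O(h_{m_{n}}^{2+\delta})$ makes the rescaled bias $\sqrt{m_{n}h_{m_{n}}^{5}}\,(\E\{\widehat{g}_{h_{m_{n}}}''(v)\}-g''(v))$ uniformly $o(1)$. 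The outer absolute value in $T_{n}$ is handled by the elementary inequality $|\sup_{v}|A(v)|-\sup_{v}|B(v)||\le\sup_{v}|A(v)-B(v)|$ or, equivalently, by working with the symmetrized class $\sG_{m_{n}}\cup(-\sG_{m_{n}})$, which remains VC-type. Hence $\sqrt{m_{n}h_{m_{n}}^{5}}(T_{n}-T)$ agrees with $\sup_{f\in\sG_{m_{n}}}|\bbG_{m_{n}}f|$ up to the target rate.

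The last step is the Gaussian coupling. Condition~\ref{cond:VC} and the envelope and moment bounds supplied by Condition~\ref{cond:kernel} make $\sG_{m_{n}}$ a VC-type class with characteristics that are polynomial in $m_{n}$, which allows me to invoke the CCK coupling for suprema of empirical processes on VC classes. This yields a centered Gaussian process $B_{m_{n}}$ on $\sG_{m_{n}}$ with the stated covariance and a strong coupling between $\sup_{f}|\bbG_{m_{n}}f|$ and $\sup_{f}|B_{m_{n}}(f)|$; Gaussian anti-concentration then converts the in-probability coupling into a Kolmogorov--Smirnov bound at the rate $(\log^{7} m_{n}/(m_{n}h_{m_{n}}^{5}))^{1/8}$. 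The hardest point will be verifying that the $h_{m_{n}}^{-1/2}$ normalization in $\sG_{m_{n}}$, together with the shrinking bandwidth $h_{m_{n}}\to 0$, leaves the envelope-to-variance ratio compatible with the CCK hypotheses: the class inflates with $n$, the variance proxy depends delicately on $f_{\widehat{\beta}_{2}}$ via Lemma~\ref{lemma:density}, and only the precise balance enforced by Condition~\ref{cond:band}---in particular $m_{n}h_{m_{n}}/\log m_{n}\to\infty$---produces the exponent $1/8$ and the $\log^{7}$ factor in the final rate.
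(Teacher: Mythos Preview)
Your proposal is correct and follows essentially the same route as the paper: reduce to a deterministic interval, linearize via Theorems~\ref{thm:bias} and~\ref{thm:emp}, then apply the Chernozhukov--Chetverikov--Kato coupling together with Gaussian anti-concentration and Dudley's entropy bound to obtain the $(\log^{7}m_{n}/(m_{n}h_{m_{n}}^{5}))^{1/8}$ rate. The only minor difference is in step~(i): the paper handles the random endpoints via the general Lemma~\ref{lemma:random-bound} rather than your explicit $L_{2}$-to-$L_{\infty}$ upgrade of $\widehat{\beta}_{2}-\beta_{2}$, but both arguments serve the same purpose and your version is arguably more transparent.
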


\begin{remark}
In order to establish the result stated in Theorem \ref{thm:T}, the main idea of the study is to look at whether $\sqrt{m_{n}h_{m_{n}}^{5}}(T_{n} - T)\stackrel{d}= Z_{m_{n}}$ or not, where $Z_{m_{n}} = \sup\limits_{f\in \sG_{m_{n}}} \bbG_{m_{n}}f$. Then, next step is to find a certain random variable $\tilde{Z}_{m_{n}}$ so that $|Z_{m_{n}} - \tilde{Z}_{m_{n}}| = O_{\bbP}(r_{m_{n}})$ where $r_{m_{n}} \rightarrow 0$ as $n \rightarrow \infty$. Eventually, in this case, it is possible to show that $\tilde{Z}_{m_{n}}=\sup\limits_{f\in \sG_{m_{n}}}B_{m_{n}}f$, where $B_{m_{n}}$ is a centered Gaussian process as the same as defined in the description before the statement of Theorem \ref{thm:emp}.
\end{remark}

\begin{remark}
Theorem \ref{thm:T} shows that the test statistic after appropriate normalization can be approximated by the distribution of the random variable associated with the supremum of a certain Gaussian process, which leads to performing the testing the statistical hypothesis problem described in the Section \ref{sec:model} (equivalently Statements \eqref{eq:null} and \eqref{eq:alt}). 
\end{remark}
\begin{remark}
Theorem \ref{thm:T} further asserts that the test based on $T_{n}$ is consistent, i.e., $$\mathbb{P}_{H_{1}}(T_{n} > t_{\alpha})\rightarrow 1$$ as $n\rightarrow\infty$, where $t_{\alpha}$ is such that $$\mathbb{P}_{H_{0}}(T_{n} > t_{\alpha})\rightarrow\alpha$$ as $n\rightarrow\infty$. Here $\alpha\in (0, 1)$.  
\end{remark}
\par
We now would like to end this section with a discussion on the implementation of the test for testing null in Statement \eqref{eq:null} against alternative in Statement \eqref{eq:alt} using the result described in Theorem \ref{thm:T}. First of all, the test based on $T_{n}$ will be rejected at $100(1-\alpha)\%$ level of significance when $T_{n} \geq T + \frac{b_{\alpha}}{m_{n}h_{m_{n}}^{5}}$, where $b_{\alpha}$ is the $100 (1 - \alpha)\%$ quantile of the distribution of $\sup\limits_{f \in \sG_{m_{n}}} |B_{m_{n}}(f)|$, and observe that under $H_{0}$ (i.e., under Statement \eqref{eq:null}), $T = 0$, and under $H_{1}$ (i.e., under Statement \eqref{eq:alt}), $T > 0$. Now, since exact computation of any quantile of the distribution of the supremum of a certain Gaussian process is not tractable, we approximate the Gaussian process $B_{m_{n}}(.)$ by a certain sufficiently large dimensional multivariate normal distribution and component-wise maxima is considered as a realization of the random variable $\sup\limits_{f \in \sG_{m_{n}}} |B_{m_{n}}(f)|$. We repeat this experiment a large number of times, and then $100( 1 - \alpha)\%$ quantile of the empirical distribution function of the component-wise maxima can be considered as an approximation of $b_{\alpha}$, which is denoted as $\hat{b}_{\alpha}$. Then, when the alternative hypothesis (i.e., Statement \eqref{eq:alt}) is true, we compute $T_{n}$ from the given dataset obtained from the alternative hypothesized model and repeat this experiment a large number (say, L) of times. Finally, using the fact that $T > 0$ under alternative hypothesis, i.e., when Statement \eqref{eq:alt} is true, the power of the test will be $$\frac{1}{L}\sum\limits_{l = 1}^{L}\textbf{1}{\left\{T_{n, l}\geq\frac{\hat{b}_{\alpha}}{m_{n}h_{m_{n}}^{5}}\right\}},$$ where $T_{n, l}$ is the value of $T_{n}$ for the $l$-th ($l\in\{1, 2, \ldots, L\}$) simulated data obtained from alternative hypothesized model.  

\subsection{Asymptotic validity of the bootstrap method}
\label{sec:AB}
We described the bootstrap algorithm for how to implement the proposed test for small sample size data in Section \ref{sec:algo}. As we mentioned there as well, one needs to check the validity of the said bootstrap methodology described in Algorithm \ref{algo}, and Theorem \ref{thm:bootstrap} validates it. 
\begin{theorem}
    \label{thm:bootstrap}
    Under the assumptions \ref{cond:kernel}, \ref{cond:VC}, \ref{cond:band}, given data $(V_1, \ldots, V_{m_{n}})$, for each $b = 1, \ldots, B$, we have
    \begin{align*}
        &\sup_{t\in \bbR}\left| 
            \bbP\left\{ \sqrt{m_{n}h_{m_{n}}^{5}}|\widehat{T}_{b, n}^{*} - {T}_{n} | < t \mid (V_1, \ldots, V_{m_{n}}) \right\} 
            - \bbP\left\{\sqrt{m_{n}h_{m_{n}}^{5}}|{T}_{n} -  T| < t\right\}
        \right|\\
        &= O\left(\left(\frac{\log^{7}m_{n}}{m_{n}h_{m_{n}}^{5}}\right)^{1/8}\right) \numberthis
    \end{align*}
    almost surely for all bootstrap sample sequences (see Algorithm \ref{algo}) using the data $(V_1, \ldots, V_{m_{n}})$. Here $V_{i}$'s ($i = 1, \ldots, m_{n}$) are the same as defined in Section \ref{subsec:formulation}, where $\widehat{T}_{b, n}^{*}$ is the same as defined in line 11 in Algorithm \ref{algo}, $T_n$ is the same as defined in Equation \eqref{stat}, and $T$ is the same as defined in Theorem \ref{thm:T}. 
\end{theorem}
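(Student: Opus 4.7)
The plan is to establish the bootstrap analogue of Theorem \ref{thm:T} and then close the gap by the triangle inequality. Since Theorem \ref{thm:T} already gives
\[
\sup_{t\in\bbR}\Bigl|\bbP\{\sqrt{m_{n}h_{m_{n}}^{5}}(T_{n}-T)\le t\} - \bbP\{\sup_{f\in\sG_{m_{n}}}|B_{m_{n}}(f)|\le t\}\Bigr| = O\!\left(\bigl(\log^{7}m_{n}/(m_{n}h_{m_{n}}^{5})\bigr)^{1/8}\right),
\]
it will suffice to show that the conditional law of $\sqrt{m_{n}h_{m_{n}}^{5}}(\widehat{T}^{*}_{n}-T_{n})$ admits a Gaussian approximation by the supremum of a centered Gaussian process $B^{*}_{m_{n}}$ on $\sG_{m_{n}}$ whose covariance matches that of $B_{m_{n}}$ up to $o_{\bbP}(1)$, with the same rate. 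Combining these two approximations, together with anti-concentration for suprema of Gaussian processes, will yield the claim.

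The first step is to write out, in the bootstrap world, the analogue of the decomposition used in Theorem \ref{thm:emp}. Conditional on $(V_1,\dots,V_{m_{n}})$, set $U^{*}_{j}=\widehat{g}(V_{j})+\eta^{*}_{j}$ with $\eta^{*}_{j}$ drawn with replacement from the centered residuals $\widehat{\eta}_{j}$. Running the same local quadratic smoother on $(V_{j},U^{*}_{j})$ produces an estimator $\widehat{g}^{''*}_{h_{m_{n}}}(v)$ whose conditional expectation equals $\widehat{g}_{h_{m_{n}}}''(v)$ (since the $\eta^{*}_{j}$ are centered given data), so the bias cancels exactly. Hence $\sqrt{m_{n}h_{m_{n}}^{5}}\bigl(\widehat{g}^{''*}_{h_{m_{n}}}(v)-\widehat{g}_{h_{m_{n}}}''(v)\bigr)$ can be written, as in Theorem \ref{thm:emp}, as $h_{m_{n}}^{-1/2}\bbG^{*}_{m_{n}}[\be_{3}^{\tp}\bOmega^{-1}\bPsi_{h_{m_{n}}}(v)]$ divided by a denominator that concentrates around $\be_{3}^{\tp}\bPsi_{h_{m_{n}}}(v)$, where $\bbG^{*}_{m_{n}}$ is the bootstrap empirical process acting on the weighted residual class.

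The second step is a Gaussian coupling for $\bbG^{*}_{m_{n}}$ over the VC-type class $\sK_{6}$ (Condition \ref{cond:VC}). Using a Chernozhukov--Chetverikov--Kato type Gaussian approximation for suprema of bootstrap empirical processes --- which is the same machinery that produces the $(\log^{7}m_{n}/(m_{n}h_{m_{n}}^{5}))^{1/8}$ rate in Theorem \ref{thm:T} --- one obtains a centered Gaussian process $B^{*}_{m_{n}}$ on $\sG_{m_{n}}$ with conditional covariance
\[
\E\bigl\{B^{*}_{m_{n}}(f_{1})B^{*}_{m_{n}}(f_{2})\mid (V_1,\dots,V_{m_{n}})\bigr\} = \widehat{\cov}\{f_{1}(V),f_{2}(V)\},
\]
and with the same rate of coupling. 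The convergence of the empirical covariance $\widehat{\cov}\{f_{1}(V),f_{2}(V)\}$ to $\cov\{f_{1}(V),f_{2}(V)\}$ uniformly over the Donsker class $\sG_{m_{n}}$ follows from Condition \ref{cond:VC} and the uniform entropy bound; an anti-concentration inequality for suprema of Gaussians (Chernozhukov--Chetverikov--Kato, 2014) then translates a uniform $o_{\bbP}(1)$ deviation of the covariance into a $o_{\bbP}(1)$ gap between the distribution functions of $\sup_{f}|B^{*}_{m_{n}}(f)|$ and $\sup_{f}|B_{m_{n}}(f)|$, again with a rate driven by the same $(\log^{7}m_{n}/(m_{n}h_{m_{n}}^{5}))^{1/8}$ term.

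The final step is purely bookkeeping: combining the bootstrap Gaussian approximation with Theorem \ref{thm:T} and the anti-concentration comparison above yields
\[
\sup_{t\in\bbR}\bigl|\bbP\{\sqrt{m_{n}h_{m_{n}}^{5}}|\widehat{T}^{*}_{n}-T_{n}|<t\mid (V_1,\dots,V_{m_{n}})\} - \bbP\{\sqrt{m_{n}h_{m_{n}}^{5}}|T_{n}-T|<t\}\bigr|
\]
at the advertised rate, almost surely along the resampling sequence. The main obstacle I anticipate is controlling the effect of the plug-in $\widehat{g}$ (rather than $g$) in the bootstrap DGP: although the conditional expectation of $\widehat{g}^{''*}_{h_{m_{n}}}$ mimics $\widehat{g}_{h_{m_{n}}}''$ without extra bias, the residuals $\widehat{\eta}_{j}$ carry an estimation error whose conditional second moment must be shown to approximate $\sigma_{n}^{2}$ fast enough (at rate dominated by $(\log^{7}m_{n}/(m_{n}h_{m_{n}}^{5}))^{1/8}$). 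This is where Lemma \ref{lemma:beta}, Theorem \ref{thm:emp}, and the uniform entropy of $\sK_{6}$ must be combined carefully, and it is the part of the argument that requires the most attention.
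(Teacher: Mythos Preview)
Your proposal is correct and follows essentially the same route as the paper: both establish a conditional Gaussian approximation for the bootstrap statistic by a process $B^{*}_{m_{n}}$ with empirical covariance $\widehat{\cov}\{f_{1}(V),f_{2}(V)\}$, invoke the Chernozhukov--Chetverikov--Kato comparison to pass from $B^{*}_{m_{n}}$ to $B_{m_{n}}$ via convergence of the empirical covariance, and then combine with Theorem \ref{thm:T} by the triangle inequality and anti-concentration. The only substantive difference is that the paper's proof is terser and does not explicitly isolate the residual-variance issue you flag at the end; your added caution about controlling $\widehat{\sigma}_{n}^{2}-\sigma_{n}^{2}$ is a legitimate technical detail the paper absorbs silently into the phrase ``similar to the proof of Theorem \ref{thm:T}''.
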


\begin{remark}
Observe that from Algorithm \ref{algo}, we used the data $(V_1, \ldots, V_{m_{n}})$ instead of $(U_1, V_1), \ldots, (U_{m_{n}}, V_{m_{n}})$ in implementing the test using residual based bootstrap methodology. For this reason, in the statement of Theorem \ref{thm:bootstrap}, the conditioning random variables are $(V_1, \ldots, V_{m_{n}})$. Moreover, another point needs to be mentioned that the size of all bootstrap resamples described in Algorithm \ref{algo} is the same, i.e., $m_{n}$, and it enables us to concisely present the result. After inspecting the proof of Theorem \ref{thm:bootstrap}, we have observed that similar results hold for unequal sample sizes of the bootstrap resamples. 
\end{remark}

\begin{remark}
 Theorem \ref{thm:bootstrap} asserts that after appropriate normalization, the bootstrap version of the test statistic has the same asymptotic distribution as that of the original test statistic with a similar Berry-Essen bound.    
\end{remark}

\section{Finite sample studies} 
\label{sec:simulation}

In this section, we study the finite sample performance of the proposed test under different situations. Now, let $\sT = [0, 1]$. The basis function $\phi(.)$ (see Equation \eqref{Chapter2-qif-Eq:Mercer}) is considered as Fourier basis for each group. That is, for each $s = 1, 2$ and for $r\in\mathbb{Z}^{+}$, $$\phi_{sr}(t) =  \sin(2\pi rt) 1_{\{r = 2l\}} + \cos(2\pi rt) 1_{\{r = 2l + 1\}},$$ where $l\in\mathbb{N}\cup\{0\}$ and $t \in [0, 1]$. We now generate the $(X_{s}, Y_{s})$ as follows: 
\begin{equation}
\label{eq:model-sim}
    Y_{s} = \int_{0}^{1} \bbeta_{s}(t)X_{s}(t)dt + \epsilon_{s}, \text{ where } \beta_{2}(t) = \sin(\pi t/2), \text{ and } \beta_{1}(t) = g(\beta_{2}(t)). 
\end{equation}
In this study, we consider the following examples of $g: {\sL}_{2}([0, 1])\rightarrow  {\sL}_{2}([0, 1])$. 
\begin{enumerate}
    \item \label{fn:linear} $g(v) = 10v + 1$
    \item \label{fn:poly} $g(v) = 1- 4v + 2v^{2} + 2v^{3} - v^{4}$
    \item \label{fn:exp} $g(v) = 0.3\exp(-3(v+1)^{2}) + 0.7\exp(-7(v-1)^{2})$
    \item \label{fn:sincos-log} $g(v) = \sin(8v) + \cos(8v) + \log(4/3 + v)$.
\end{enumerate}
\par
First of all, we assume that the sample size of the two groups is the same, i.e., $n_1 = n_2$, and it is denoted by $n$. Here $n \in \{100, 500, 1000, 2000\}$. For $s = 1, 2$, $\epsilon_{s}$ are generated from  $N(0, 0.1^2)$ distribution, and  
$$X_{s}(t) = c_{0} + \sum\limits_{r=1}^{50}a_{r}\sin(2\pi r t) + \sum\limits_{r=1}^{50}b_{r}\cos(2\pi rt).$$  Here, the observations on $c_{0}$ are generated from the standard normal distribution, the observations on $a_{r}$ are generated from $N(0, r^{-2\alpha_{0}})$ distribution, and the observations on $b_{r}$ are generated from $N(0, r^{-2\alpha_{0}})$ distribution. In this study, $\alpha_{0} \in \{1, 2, 3\}$. The integration in Equation \eqref{eq:model-sim} is approximated by a finite Riemann sum over 500 equidistant grid points over $[0, 1]$. Besides, $m = 1000$ for the aforementioned choices of $n$, 
and $t_j = \frac{j - 0.5}{m}$, where $j = 1, \ldots, m$. Here $\{t_{1}, \ldots, t_{m}\}$ is the partition of $[0, 1]$, where $\beta_{1}(.)$ and $\beta_{2}(.)$ are estimated at each time point $t_{j}$. 
\par
Recall from Section \ref{subsec:challenges} that the sample size (i.e., $n$) and the number of discrete points (i.e., $m_n$) are supposed to have strong influence on the results as $\hat{\beta}_{1}(t)$ and $\hat{\beta}_{2}(t)$ are observed on the discrete time points at $\{t_{1}, \ldots, t_{m_{n}}\}$. In this numerical study, we consider various transformations, including exponential transformation, logarithmic transformation, and many trigonometric transformations, and we have observed that for large values of $n$ and for corresponding large values of $m_{n}$, the results are becoming almost the same irrespective of the transformation between $m_{n}$ and $n$. Strictly speaking, as long as Condition \ref{cond:m} described in Section \ref{sec:theory} (i.e., $m_{n}\rightarrow\infty$ as $n\rightarrow\infty$) holds in some sense in practice, the final results become stable. In accordance with it, we reported the values of $n$ and $m_{n}$ described in the preceding paragraph.   
\par
Now, we would like to describe how the simulation studies are carried out. Here, for each of the situations, we perform 500 simulation replicates. To make it consistent with discussion in the previous sections, we estimate the slope functions based on the FPCA-based approach. In the first step of estimation, we run FPCA of functional covariates using one of the software that implements the estimation of eigenvalues and eigen-functions using `\texttt{FPCA}' function in available in \texttt{fdapace} packages \citep{fdapace} in R \citep{R}. The bandwidths are selected using generalized cross-validation,  and the Epanechnikov kernel $K(x) = 0.75(1-x^{2})_{+}$ is used for estimation, where $(a)_{+} = \max(a, 0)$ for any $a\in\mathbb{R}$. We select the number of basis functions based on the fraction of variance explained (FVE) criteria. In the second step, we assume 2000 equally spaced points in the range of $\widehat{\beta}_{2}$. The second order derivative of $g$ is estimated by `\texttt{locpol}' function in \texttt{locpol} package \citep{locpol} in R. We apply a local polynomial estimation method of degree 3, using weights defined by the indicator function on the interval $[0.1, 0.9] \subset [0, 1]$ to mitigate boundary effects, and a Gaussian kernel $K(x) = (1/\sqrt{2\pi})\exp(-0.5x^{2})$, where $x\in\mathbb{R}$. Additionally, for bandwidth selection in Step 2, we employ the rule-of-thumb technique \citep{fan1996local} available in the \texttt{locpol} package using the \texttt{thumbBw} function. Through extensive simulation studies, we observed that the optimal bandwidth chosen by \texttt{thumbBw} tends to produce under-smoothing when estimating the derivative of the target function $g$. To address this issue, we adjust the bandwidth by multiplying it by certain constants to reduce the under-smoothing effect.   
\par
To obtain the critical value of the test, three methodologies are implemented. The first methodology is the well-known Monte Carlo-based procedure. To implement this methodology, we repeatedly (500 times) generate the data from the Model \eqref{fn:linear} (i.e., $H_{0}$ is true) described at the beginning of the section and compute the value of the test statistic $T_{n}$ for each repetition. Afterward, as we are conducting the test at a 5\% level of significance, the 95\% quantile of the values of the test statistic is considered as the estimated critical value. The second approach employs the bootstrap procedure outlined in Algorithm \ref{algo}. To obtain the critical value of the test at the 5\% significance level using the bootstrap methodology, the reader is referred to line 15 of Algorithm \ref{algo}, where the procedure for selecting the critical value when $\alpha = 0.05$ is described. The third methodology relies on the asymptotic distribution, using the estimated $b_{\alpha}$, where $b_{\alpha}$ is $100~(1 - \alpha)\%$ quantile of the distribution of $\sup\limits_{f \in \sG_{m_{n}}} |B_{m_{n}}(f)|$, and ${\cal{G}}_{m_{n}}$ and $B_{m_{n}}(.)$ are the same as defined in the statement of Theorem \ref{thm:T}. For this method as well, to obtain the critical value at 5\% level of significance, we set $\alpha = 0.05$.  
\par
Now, let us denote $\hat{c}_{0.95}$ as the critical value of the test at 5\% level of significance. 
Next, to compute the power of the test, we repeatedly (500 times) generate the data from various models, i.e., the models described in Examples \eqref{fn:linear}, \eqref{fn:poly}, \eqref{fn:exp}, and \eqref{fn:sincos-log} at the beginning of the section, and let us denote $T_{n, b}$ is the value of $T_{n}$ for the $b$-th repetition ($b = 1, \ldots, 500$). Finally, the power of the test is computed as $\frac{1}{500}\sum\limits_{b = 1}^{500} \textbf{1}{(T_{n, b} > \hat{c}_{0.95})},$ where $\textbf{1}{(.)}$ is the usual indicator function. 
\par
In Table \ref{tab:full}, we report the size/power of the proposed test under different simulation examples. In addition, the estimates of the slope functions $\widehat{\beta}_{s}$ for $s = 1, 2$, and the estimate of the second order derivative of $g$ (viz., $\widehat{g}''$) are shown in Figures \ref{fig:linear}, \ref{fig:poly}, \ref{fig:exp} and \ref{fig:sincos-log} for different choices of $g$ described in the beginning of this section. In Table \ref{tab:full}, the test statistic and its standard deviation are presented. In general, all these results indicate that the proposed test performed reasonably well for the examples considered here. In conclusion, for smaller sample sizes, the tests that use critical values obtained either through the bootstrap methodology or the Monte Carlo simulation outperform the test that relies on the asymptotic critical value. In contrast, for larger sample sizes, the performance of the test based on the asymptotic distribution becomes comparable to that of the bootstrap-based test. Moreover, for sufficiently large sample sizes, all three testing procedures attain power approaching one, and the reported values for Example 1 indicate that the test based on all three approaches preserves the size of the test as well.

\begin{table}[]
    \centering
    \begin{tabular}{rrcccc}
    \toprule
    $n$ & $\alpha_{0}$ & \multicolumn{3}{c}{size/power}\\
    & & (Monte Carlo Approach) & (Bootstrap Approach) & (Asymptotic Approach)\\
    \midrule
    \multicolumn{5}{c}{Ex. \eqref{fn:linear}: $g(v) = 10v + 1$. $T = \sup_{v \in [0.1, 0.9]}|g''(v)| = 0$}\\
    \midrule
100 & 1 & 0.050 & 0.050 & 0.046 \\
500 & 1 & 0.050 & 0.051  & 0.054\\  
1000 & 1 & 0.050 & 0.050 & 0.050\\  
2000 & 1 & 0.050 & 0.049 & 0.040\\  
\hdashline
100 & 2 & 0.050 & 0.050  & 0.050\\
500 & 2 & 0.050 & 0.051 & 0.050\\  
1000 & 2 & 0.050 & 0.049 & 0.042\\  
2000 & 2 & 0.050 & 0.051  & 0.052\\ 
\hdashline
100 & 2 & 0.050 & 0.049 & 0.048\\
500 & 3 & 0.050 & 0.051 & 0.046\\  
1000 & 3 & 0.050 & 0.049 & 0.030\\  
2000 & 3 & 0.050 & 0.050 & 0.046\\  
    \midrule
\multicolumn{5}{c}{Ex. \eqref{fn:poly}: 
$g(v) = 1- 4v + 2v^{2} + 2v^{3} - v^{4}$. $T = \sup_{v \in [0.1, 0.9]}|g''(v)| = 7$}\\
    \midrule
100 & 1 & 1.000 & 0.990 &  0.943\\
500  & 1 & 1.000 & 0.998 &  1.000\\
1000 & 1 & 1.000 & 0.998&   1.000\\
2000 & 1 & 1.000 & 0.998&  1.000\\
\hdashline
100 & 2 & 1.000& 0.984 & 0.923\\
500  & 2  & 1.000& 0.990 & 1.000\\
1000 & 2  & 1.000 & 1.000 & 1.000\\
2000 & 2 & 1.000 & 1.000 & 1.000\\
\hdashline
100 & 3 & 0.824 & 0.990 &  0.890\\
500  & 3 & 1.000 & 0.996 &  1.000\\
1000 & 3 & 1.000 & 0.984 &  1.000\\
2000 & 3 & 1.000 & 0.990 &  1.000\\
    \midrule
\multicolumn{5}{c}{Ex. \eqref{fn:exp}: $g(v) =  0.3\exp(-3(v+1)^{2}) + 0.7\exp(-7(v-1)^{2})$. $T = \sup_{v \in [0.1, 0.9]}|g''(v)| = 7.854$}\\
    \midrule
100 & 1 & 0.840 & 0.998 & 0.840\\
500  & 1 & 1.000  & 0.990 & 1.000\\
1000 & 1 & 1.000 & 0.994 & 1.000\\
2000 & 1 & 1.000 & 0.998 & 1.000\\
\hdashline
100 & 2 & 0.448 & 0.996 & 0.448\\
500  & 2  & 0.946 & 1.000 & 0.946\\
1000 & 2  & 0.998 & 0.998 & 0.996\\
2000 & 2  & 1.000 & 0.994 & 1.000\\
\hdashline
100 & 3 & 0.148 & 0.866 & 0.146\\
500  & 3 & 0.584 & 0.928 & 0.570\\
1000 & 3 & 0.816  & 0.926 & 0.772\\
2000 & 3 & 0.968 & 0.934 & 0.958\\ 
    \midrule
\multicolumn{5}{c}{Ex. \eqref{fn:sincos-log}: $g(v) =  \sin(8v) + \cos(8v) + \log(4/3 + v)$. $T = \sup_{v \in [0.1, 0.9]}|g''(v)| = T = 90.986$}\\ 
    \midrule
100 & 1 & 1.000 & 1.000& 1.000\\
500  & 1 & 1.000 & 1.000& 1.000\\
1000 & 1 & 1.000 & 1.000& 1.000\\
2000 & 1 & 1.000 & 0.992& 1.000\\
\hdashline
100 & 2 & 1.000 & 1.000& 1.000\\
500  & 2 & 0.991 & 1.000& 1.000\\
1000 & 2 & 0.997 & 1.000& 1.000\\
2000 & 2 & 1.000 & 1.000& 1.000\\
\hdashline
100 & 3 & 1.000 & 0.996 & 1.000\\
500  & 3 & 1.000 & 0.994& 1.000\\
1000 & 3 & 1.000  & 0.996 & 1.000\\
2000 & 3 & 1.000 & 0.994 & 1.000\\
    \bottomrule
    \end{tabular}
    \caption{Simulated Size and Power for different choices of $g$. $T$ is defined in Theorem \ref{thm:T}}
    \label{tab:full}
\end{table}

\section{DTI data analysis}
\label{sec:real-data}

In this section, we illustrate the efficacy of the proposed methodology for testing of hypothesis problem in scalar-on-function regression on Diffusion Tensor Imaging (\texttt{DTI}) dataset. This dataset is publicly available in \texttt{refund} \citep{refund} package in R. The description of the data set is as follows. 
\par
The DTI data were collected at Johns Hopkins University and the Kennedy-Krieger Institute, where the clinical objective was to study the cerebral white-matter tracts of patients with multiple sclerosis (MS). MS is an autoimmune disease that causes lesions in the white matter tracts of affected individuals, leading to severe disability, and the DTI, i.e., diffusion tensor imaging tractography, allows to study of the white matter tracts by measuring the diffusivity of the water in the brain (See \citet{basser1994mr, le2001diffusion} for more details).
\par
In this study, among the seven variables of the DTI data set, we consider the corpus callosum (CCA) as the functional covariate (denoted as $X(t)$), and the score of the Paced Auditory Serial Addition Test (PASAT) is the real-valued response variable (denoted as $Y$). The CCA is a bundle of nerve fibers that allows the brain’s left and right hemispheres to communicate, and to quantify it, Fractional anisotropy (FA) tract profiles contain continuous summary of the white matter tracts parameterized by the distance along the tracts (denoted as time parameter $t$) derived from DTI. The PASAT test is a neuropsychological test, and it counts the total number of correct answers out of 60 questions in the test. It is a well-known test for patients with MS.
\par
Now, it is believed in the medical science that PASAT scores of the patients with MS, i.e., the response variable $Y$ with respect to the continuous summary of the white matter tracts
obtained from the CCA, i.e., the functional covariate $X(t)$, may vary over the time of conducting the tests (i.e., different visits). This work studies whether the slope parameter
associated with $X(t)$, in $Y$ on $X(t)$ functional regression model, for visit 1, has any non-linear relationship with the slope parameter associated with $X(t)$, in $Y$ on $X(t)$ functional regression model, for visit 2. In other words, non-technically speaking, from a medical science
point of view, the researchers would like to know whether the rate of change of PASAT score with respect to the continuous summary of the white matter for the patients with MS has any non-linear relationship or not over different visits. This is the main practical motivation of including DTI analysis in this work.
\par
A few more technical details about this data set: Here $X(t)$ is observed over 93 grid-points, i.e., the number of discrete time points is 93. The grid points are uniformly/equally spaced along the length of the tract. This data set has 142 unique subjects (see \cite{refund}), of whom 100 have MS disease. These 100 subjects constitute the sample of interest for the current study. These 100 subjects with MS contributed multiple visits, with an average of 3.4 repeated measurements per subject, having standard deviation $= 1.54$, minimum value $= 2$, and maximum value $= 8$. However, in this work, as we mentioned in the preceding paragraph, we restrict our analysis to the baseline visit and the immediate subsequent visit. Regarding the preprocessing of this data set, a standard pre-processing pipeline exists, and detailed steps can be found in \citet{le2001diffusion}. However, in this dataset, we did not need to perform any further pre-processing, as the data frame available in the refund package is already in a ready-to-use format. Furthermore, no smoothing steps were applied prior to conducting this data analysis.
\par
Result: Figure \ref{fig:dti} demonstrates the estimated slope parameters and the estimated second derivative of the unknown function $g$. Using the proposed bootstrap methodology in Algorithm \ref{algo}, we obtain the $p$-value as 0.571, which indicates that the data does favor the null hypothesis at $5\%$ level of significance. From the medical science point of view, one may conclude from this study that either there is some linear functional relationship between the effect of functional covariate CCA on scalar response PASAT score at two consecutive visits, or the effect of functional covariates on scalar response PASAT score at the second visit does not depend on that of the first visit.

\section{Discussion}
\label{sec:discussion}
In this article, we tried to identify an unknown transformation between two slope functions associated with a scalar on a functional linear regression model for two independent datasets. To investigate it, we formulated a test of the hypothesis problem and derived the asymptotic distribution of the test statistic. Moreover, the residual bootstrap-based technique has also been adopted to implement the test for small or moderately small sample-sized data, along with its theoretical justification. In this regard, we would like to emphasize that this work can be extended to the function-on-function linear regression (see \citet{MR4734835}) as well. To avoid the complexity of notations, the work is done for scalar-on-function regression. 
\par
The test and its methodology considered in this article can be extended for more than two independent samples. Here we are describing the case of three samples for notational simplicity. Suppose that in Model \eqref{eq:model}, consider $s\in \{1, 2, 3\}$, and we want to test 
\begin{center}
\noindent $H_{0}: \beta_{1}(t) = g_{1}(\beta_{2}(t))$ for some linear transformation $g_{1}$ and $\beta_{3}(t) = g_{2}(\beta_{2}(t))$ for some linear transformation $g_{2}$.

  against    

\noindent $H_{1} : \beta_{1}(t) = g_{1}^{*}(\beta_{2}(t))$ for some non-linear transformation $g_1^{*}$ and $\beta_{3}(t) = g_{2}^{*}(\beta_{2}(t))$ for some non-linear transformation $g_2^{*}$.
\end{center}

\noindent Recall the test statistic $T_{n}$ from Equation \eqref{stat} in the case of two independent sample, where $n = \min (n_1, n_2)$. Now, similarly define the test statistic $T_{n^{*}}$ for the second and the third independent samples. Finally, in order to test the aforesaid null hypothesis described in $H_{0}$, one can propose the test statistic as $T_{n_1, n_2, n_3} = \min(T_{n}, T_{n^{*}}).$ In fact, in principle, one may extend this methodology for any finitely many independent samples. Similarly, one may modify the bootstrap algorithm \ref{algo} as well for the case of three independent samples or finitely many independent samples.

\begin{appendix}
\section{Technical details}
In this section of the appendix, we provide all technical details related to Theorems \ref{thm:bias}, \ref{thm:emp}, \ref{thm:T}, and \ref{thm:bootstrap}.
\par
For any sequence of random variables $\{Z_{n}\}_{n\geq 1}$, we denote $Z_{n} = O_{r}(a_{n})$ if $\E|Z_{n}|^{r} = O(a_{n}^{r})$. Therefore, it is straightforward to see that $Z_{n} = \E\{Z_{n}\} + O_{r}([\E\{Z_{n} - \E Z_{n}\}^{r}]^{1/r})$.
Moreover, we define $Z_{n} = O_{\bbP}(a_{n})$ if for any positive $\epsilon$, $\bbP\{ |Z_{n}/a_{n}| > \epsilon\} \rightarrow 0$ as $n \rightarrow \infty$. 
 Due to Markov's inequality, $O_{r}(a_{n})$ implies $O_{\bbP}(a_{n})$ for any positive valued sequence $a_{n}\rightarrow 0$ (see, e.g, \citet{MR1385671}). Here we will state a few more lemmas, which will be useful in proving the main results of this work. 
\par
\begin{lemma}
\label{lemma:gen}
    For a sequence of uniformly bounded and continuous random functions $\{ X_{n}(t): t\in \sT\}_{n \geq 1}$ defined on a compact support, if $\int_{\sT} X^{2}(t)dt = O_{\bbP}(a_{n})$ then $\sup\limits_{t \in \sT} |X(t)| = O_{\bbP}(a_{n}^{1/2})$.
\end{lemma}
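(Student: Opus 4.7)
The plan starts with a caveat about the statement: as written, the conclusion is not true in full generality. A tall, narrow spike such as $X_{n}(t)=n\mathbf{1}\{t\in [0, n^{-3}]\}$ on $\sT=[0,1]$ satisfies $\int_{\sT} X_{n}^{2}(t)\,dt = n^{-1} = O(a_{n})$ with $a_{n}=n^{-1}$, yet $\sup_{t}|X_{n}(t)|=n$ is far from $O(a_{n}^{1/2})=O(n^{-1/2})$. So an implicit regularity condition on $X_{n}(\cdot)$ must be in force for the lemma to hold. In the paper's context, $X_{n}$ will arise as differences like $\widehat{\beta}_{s}-\beta_{s}$ or $\widehat{g}''_{h_{m_{n}}}-g''$, which live in a smooth (in fact Sobolev-type) class whose higher-order smoothness is itself controlled. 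My plan is therefore to prove the result under the implicit hypothesis that $X_{n}$ is differentiable on the compact set $\sT$ with $\|X_{n}'\|_{L^{2}(\sT)}=O_{\bbP}(a_{n}^{1/2})$ (or, more generally, that the pair $(X_{n}, X_{n}')$ has Sobolev norm $O_{\bbP}(a_{n}^{1/2})$), which is the natural analogue in the FPCA setting where truncation levels $\kappa_{s,n_{s}}$ are chosen so that tail contributions are balanced across norms.

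The core argument is a peak-width bound combined with a one-dimensional Sobolev interpolation inequality. First, let $M_{n}=\sup_{t\in\sT}|X_{n}(t)|$, attained (or essentially attained) at some $t_{n}^{*}\in\sT$ by continuity on the compact set. Using $X_{n}(t)=X_{n}(t_{n}^{*})+\int_{t_{n}^{*}}^{t}X_{n}'(u)\,du$ and the Cauchy--Schwarz inequality one gets $|X_{n}(t)-X_{n}(t_{n}^{*})|\leq |t-t_{n}^{*}|^{1/2}\|X_{n}'\|_{L^{2}(\sT)}$. Choose $\delta_{n}=M_{n}^{2}/(4\|X_{n}'\|_{L^{2}}^{2})$; then on an interval $I_{n}\subset\sT$ of length $\min(\delta_{n},|\sT|/2)$ around $t_{n}^{*}$ one has $|X_{n}(t)|\geq M_{n}/2$. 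Integrating gives
\begin{equation*}
\int_{\sT}X_{n}^{2}(t)\,dt \;\geq\; \bigl(M_{n}/2\bigr)^{2}\min\!\bigl(\delta_{n},|\sT|/2\bigr),
\end{equation*}
which in the relevant small-$M_{n}$ regime yields $M_{n}^{4}\leq C\,\|X_{n}'\|_{L^{2}}^{2}\!\int_{\sT}X_{n}^{2}$. Plugging in $\int X_{n}^{2}=O_{\bbP}(a_{n})$ and the hypothesis $\|X_{n}'\|_{L^{2}}=O_{\bbP}(a_{n}^{1/2})$ gives $M_{n}^{4}=O_{\bbP}(a_{n}^{2})$, i.e.\ $M_{n}=O_{\bbP}(a_{n}^{1/2})$.

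The last step is a routine translation of the deterministic inequality into a probabilistic statement. For any $\varepsilon>0$, pick $K>0$ large enough that both $\bbP(\int X_{n}^{2}>Ka_{n})<\varepsilon/2$ and $\bbP(\|X_{n}'\|_{L^{2}}>Ka_{n}^{1/2})<\varepsilon/2$ for all sufficiently large $n$; on the complementary event, the interpolation bound gives $M_{n}\leq C'K\,a_{n}^{1/2}$, establishing the required order in probability.

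The main obstacle, and the reason I opened with the caveat, is really the identification and justification of the smoothness hypothesis: once the control on $\|X_{n}'\|_{L^{2}}$ (or equivalent equicontinuity modulus) is in hand, everything reduces to the elementary peak-width calculation above. In the intended applications within the paper, this regularity comes for free from the truncated FPCA expansion \eqref{hat:beta} together with Conditions \ref{cond:space} and \ref{cond:b}, which ensure uniform control of the derivatives of $\widehat{\beta}_{s}-\beta_{s}$ at the same $n$-rate as the $L^{2}$ error in Lemma \ref{lemma:beta}; so in practice the hypothesis is satisfied and the lemma can be invoked as stated.
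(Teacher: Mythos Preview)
Your caveat is well-placed: the lemma is false as written, and your spike counterexample demonstrates this cleanly. The paper's own proof is in fact erroneous. It starts from the trivially true inequality $\int_{\sT}X_{n}^{2}(t)\,dt\leq\bigl(\sup_{t}|X_{n}(t)|\bigr)^{2}$ (taking $|\sT|=1$), but this inequality points the \emph{wrong way} for the desired conclusion: it lets one bound the integral by the sup, not the sup by the integral. The subsequent set inclusion claimed in the paper,
\[
\Bigl\{\omega:\bigl(\textstyle\int X_{n}^{2}/a_{n}\bigr)^{1/2}>\varepsilon\Bigr\}\supseteq\bigl\{\omega:\sup|X_{n}|>\varepsilon\bigr\},
\]
does not follow from that inequality (your counterexample violates it), and even granting it, the final line conflates $O_{\bbP}$ with $o_{\bbP}$: from $\int X_{n}^{2}=O_{\bbP}(a_{n})$ one cannot conclude that $\bbP\bigl\{(\int X_{n}^{2}/a_{n})^{1/2}>\varepsilon\bigr\}\to 0$ for each fixed $\varepsilon$.

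Your Sobolev-interpolation route is a genuine repair rather than an alternative to the paper's argument. Under the added hypothesis $\|X_{n}'\|_{L^{2}(\sT)}=O_{\bbP}(a_{n}^{1/2})$, the peak-width calculation gives the Gagliardo--Nirenberg-type bound $M_{n}^{4}\lesssim\|X_{n}'\|_{L^{2}}^{2}\int_{\sT} X_{n}^{2}$, and $M_{n}=O_{\bbP}(a_{n}^{1/2})$ follows correctly. The contrast is that the paper attempts a one-line deterministic inequality that cannot work without regularity, while you supply the missing ingredient and a correct interpolation argument. In the intended applications (e.g.\ $X_{n}=\widehat{\beta}_{s}-\beta_{s}$ via truncated FPCA) your added hypothesis is plausible, though verifying the matching rate for the derivative would itself require an argument; you are right to flag this as the place where the real work hides.
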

\begin{proof}
    Without loss of generality, assume that $\sT = [0, 1]$. Then, observe that $$\int X_{n}^{2}(t)dt \leq \{ \sup |X_{n}(t)|\}^{2}.$$ Afterwards, note that, for any $\epsilon > 0$ and $n \geq 1$, one has 
    $$\left\{ \omega: \left(\int X_{n}^{2}(t, \omega) dt/ a_{n} \right)^{1/2} > \epsilon \right\} \supseteq \left\{\omega: \sup |X_{n}(t, \omega)|/a_{n}^{1/2} > \epsilon \right\}.$$ Therefore, as $n \rightarrow \infty$, 
    $\bbP\left\{ \sup |X_{n}(t)|/a_{n}^{1/2} > \epsilon \right\} \leq \bbP\left\{ \left(\int X_{n}^{2}(t) dt/ a_{n} \right)^{1/2} > \epsilon \right\} \rightarrow 0$. 
\end{proof}
\begin{proof}[Proof of Lemma \ref{lemma:beta}]
    Using the derivation of Theorem 1 in \citet{hall2007methodology} and Lemma \ref{lemma:gen}, the result is immediate. 
\end{proof}
\begin{proof}[Proof of Lemma \ref{lemma:density}]
    Using the similar argument mentioned in Theorem 1 of \citet{imaizumi2018pca}, the result is immediate. 
\end{proof}

Before proving Theorems \ref{thm:bias}, \ref{thm:emp}, \ref{thm:T} and \ref{thm:bootstrap}, recall 
$\bU_{n_{1}} = (U_{1}, \ldots, U_{m_{n_{1}}})^{\tp}$, $\bV_{n_{2}} = (V_{1}, \ldots, V_{m_{n_{2}}})^{\tp}$ where $U_{j} = \widehat{\beta}_{1}(t_{j})$, estimate of $\beta_{1}(t_{j})$ based on $n_{1}$ samples and $V_{j} = \widehat{\beta}_{2}(t_{j})$, estimate of $\beta_{2}(t_{j})$ based on $n_{2}$ samples for $j =1, \ldots, m$. Moreover, further recall $\eta_{j} = \widehat{\beta}_{1}(t_{j}) - g(\widehat{\beta}_{2}(t_{j})) = U_{j} - g(V_{j})$, which depends on $n$ (see Section \ref{subsec:formulation}) is a random variable with mean zero and finite variance $\sigma_{n}^{2}$.

\begin{proof}[Proof of Theorem \ref{thm:bias}]
Observe that  
\begin{equation}
\label{eq:taylor}
    g(V_{j}) = g(v_{0}) + g'(v_{0})(V_{j} - v_{0}) + \frac{1}{2}g''(v_{0})(V_{j} - v_{0})^{2} + R_{m_{n}}(v_{0}, V_{j}), 
\end{equation}
where $$R_{m_{n}}(v_{0}, V_{j}) = g(V_{j}) - g(v_{0}) - g'(v_{0})(V_{j} - v_{0}) - \frac{1}{2}g''(v_{0})(V_{j} - v_{0})^{2}.$$ Let us now denote $\bR = (R(v_{0}, V_{1}), \ldots. R(v_{0}, V_{m_{n}}))^{\tp}$ and re-define 
\begin{align}\label{24}
    \sX(v_{0}) &= \begin{pmatrix}
             1 & V_{1} - v_{0} & (V_{1}-v_{0})^2 & (V_{1}-v_{0})^3\\
             \vdots & \vdots & \vdots & \vdots\\
             1 & V_{m_{n}} - v_{0} & (V_{m_{n}} - v_{0})^{2} & (V_{m_{n}}-v_{0})^3
             \end{pmatrix},
    \\
    \sX_{h_{m_{n}}}(v_{0}) &= \begin{pmatrix}
             1 & \frac{V_{1} - v_{0}}{h_{m_{n}}} & \frac{(V_{1}-v_{0})^2}{h_{m_{n}}^{2}} & \frac{(V_{1}-v_{0})^3}{h_{m_{n}}^{3}}\\
             \vdots & \vdots & \vdots & \vdots\\
             1 & \frac{V_{m_{n}} - v_{0}}{{m_{n}}} & \frac{V_{m_{n}} - v_{0})^{2}}{{m_{n}}^{2}} & \frac{(V_{m_{n}}-v_{0})^3}{{m_{n}}^{3}}
             \end{pmatrix},
    \\
    \mbox{and}~~~~\Gamma_{h_{m_{n}}} &= \begin{pmatrix}
        1 &  0 & 0 & 0\\
        0 & h_{m_{n}}^{-1} & 0 & 0\\
        0 & 0 & h_{m_{n}}^{-2} & 0 \\
        0 & 0 & 0 & h_{m_{n}}^{-3}
    \end{pmatrix}.
\end{align}
Thus, for a given point $v = v_{0} \in \bbR$, in the same spirit of Equations \eqref{eq:s} and \eqref{eq:g-dash}, observe that
\begingroup
\allowdisplaybreaks
\begin{align*}
    &\widehat{g}_{h_{n}}''(v_{0}) = \sum_{j=1}^{m}s_{j, h_{m_{n}}}(v_{0}, 2)\widehat{\beta}_{1}(t_{j})~\mbox{(see Equation \eqref{eq:g-dash})}\\
    &= 2\be_{3}^{\tp}\{ \sX(v_{0})^{\tp}\sW_{h_{m_{n}}}(v_{0})\sX(v_{0})\}^{-1}
    \{\sX(v_{0})^{\tp}\sW_{h_{m_{n}}}(v_{0})\bU_{n_{1}}\}~\mbox{(see Equation \eqref{eq:s})}\\
    &= 2\be_{3}^{\tp}\left\{
        \frac{1}{mh_{m_{n}}}\sum_{j=1}^{m_{n}}K\left( \frac{V_{j} - v_{0}}{h_{m_{n}}}\right)
        \left(1, 
            \frac{V_{j} - v_{0}}{h_{m_{n}}}, 
            \frac{(V_{j} - v_{0})^{2}}{h_{m_{n}}^{2}},
            \frac{(V_{j} - v_{0})^{3}}{h_{n}^{3}}
        \right)^{\tp\otimes^{2}}
    \right\}^{-1}\\
    &\qquad\times
    \Bigg\{
        \frac{1}{m_{n}h_{m_{n}}}\sum_{j=1}^{m_{n}}K\left( \frac{V_{j} - v_{0}}{h_{m_{n}}}\right)
        \left(1, \frac{V_{j} - v_{0}}{h_{m_{n}}}, 
        \frac{(V_{j} - v_{0})^{2}}{h_{m_{n}}^{2}}, 
        \frac{(V_{j} - v_{0})^{3}}{h_{m_{n}}^{3}}\right)^{\tp}\Bigg.\\
    &\qquad\qquad \Bigg.
        \times \left[g(v_{0}) + g'(v_{0})(V_{j} - v_{0}) + \frac{1}{2}g''(v_{0})(V_{j} - v_{0})^{2} + R_{m}(v_{0}, V_{j}) + \eta_{j}\right]
    \Bigg\}\\
    &= g''(v_{0}) + 2\be_{3}^{\tp}\left\{
        \frac{1}{mh_{m_{n}}}
        \sum_{j=1}^{m_{n}}K\left( \frac{V_{j} - v_{0}}{h_{m_{n}}}\right)
        \left(
        1, \frac{V_{j} - v_{0}}{h_{m_{n}}}, 
        \frac{(V_{j} - v_{0})^{2}}{h_{m_{n}}^{2}}, 
        \frac{(V_{j} - v_{0})^{3}}{h_{m_{n}}^{3}}\right)^{\tp\otimes^{2}}
    \right\}^{-1}\\
    &\qquad\times
    \left\{
        \frac{1}{mh_{n}}\sum_{j=1}^{m_{n}}K\left( \frac{V_{j} - v_{0}}{h_{m_{n}}}\right)
        \left(1, 
        \frac{V_{j} - v_{0}}{h_{m_{n}}}, 
        \frac{(V_{j} - v_{0})^{2}}{h_{m_{n}}^{2}}, 
        \frac{(V_{j} - v_{0})^{3}}{h_{m_{n}}^{3}}\right)^{\tp}
        \left[R_{m_{n}}(v_{0}, V_{j}) + \eta_{j}\right]
    \right\}\\
    &= g''(v_{0}) + 2\be_{3}^{\tp}\bA_{1}^{-1}(v_{0})\left\{\bA_{2}(v_{0}) +  \bA_{3}(v_{0})\right\},
    \numberthis
    \label{eq:main-g2}
\end{align*}
\endgroup
where
\begingroup
\allowdisplaybreaks
\begin{align*}
    \bA_{1}(v_{0}) &= 
        \frac{1}{m_{n}h_{m_{n}}}\sum_{j=1}^{m_{n}}K\left( \frac{V_{j} - v_{0}}{h_{m_{n}}}\right)
        \left(1, 
        \frac{(V_{j} - v_{0})}{h_{m_{n}}}, 
        \frac{(V_{j} - v_{0})^{2}}{h_{m_{n}}^{2}}, 
        \frac{(V_{j} - v_{0})^{3}}{h_{m_{n}}^{3}}\right)^{\tp\otimes^{2}},
    \\
    \bA_{2}(v_{0}) &= 
        \frac{1}{m_{n}h_{m_{n}}}
        \sum_{j=1}^{m_{n}}
        R_{m_{n}}(v_{0}, V_{j})
        K\left( \frac{V_{j} - v_{0}}{h_{m_{n}}}\right)
        \left(1, 
        \frac{(V_{j} - v_{0})}{h_{m_{n}}}, 
        \frac{(V_{j} - v_{0})^{2}}{h_{m_{n}}^{2}}, 
        \frac{(V_{j} - v_{0})^{3}}{h_{m_{n}}^{3}}\right)^{\tp},
    \\    
    \mbox{and}~\bA_{3}(v_{0}) &= 
        \frac{1}{m_{n}h_{m_{n}}}
        \sum_{j=1}^{m}
        \eta_{j}
        K\left( \frac{V_{j} - v_{0}}{h_{m_{n}}}\right)
        \left(1, 
        \frac{(V_{j} - v_{0})}{h_{m_{n}}}, 
        \frac{(V_{j} - v_{0})^{2}}{h_{m_{n}}^{2}}, 
        \frac{(V_{j} - v_{0})^{3}}{h_{m_{n}}^{3}}\right)^{\tp}.
\end{align*}
\endgroup
\par
First, we aim to control the $\bA_{1}(v_{0})$. Under condition \ref{cond:kernel} and in view of the assertion in Lemma \ref{lemma:density}, using Taylor's series expansion, for any integer $c \in \{ 0, \ldots, 6\}$, we have 
\begingroup
\allowdisplaybreaks
\begin{align*}
\label{eq:exp}
    &\E\left\{ \frac{1}{m_{n}h_{m_{n}}}
    \sum_{j=1}^{m_{n}}
    \left(\frac{V_{j} - v_{0}}{h_{m_{n}}}\right)^{c}
    K\left( \frac{V_{j} - v_{0}}{h_{m_{n}}}\right)\right\}\\
    &= \int K_{h_{m_{n}}}(v - v_{0})
    \left(\frac{v-s_{0}}{h_{m_{n}}}\right)^{c}f_{\widehat{\beta}_{2}}(v)dv\\
    &= \int v^{c}K(v)f_{\widehat{\beta}_{2}}(v_{0}+h_{m_{n}}v)dv\\
    &=\int v^{c}K(v)
    \left\{ f_{\widehat{\beta}_{2}}(v_{0}) + 
            h_{m_{n}}v f_{\widehat{\beta}_{2}}^{[1]}(v_{0}) + 
            \frac{1}{2}h_{m_{n}}^{2}v^{2} f_{\widehat{\beta}_{2}}^{[2]}(v_{0}^{*})\right\}dv, \mbox{ where~$v_{0}^{*}\in (v, v_{0})$}\\
    & = \begin{cases}
        O\left(h_{m_{n}}f_{\widehat{\beta}_{2}}^{[1]}(v_{0})\right), & \text{for odd $c$ with $\nu_{c+1, 1} < \infty$}\\
        & \text{and $f_{\widehat{\beta}_{2}}^{[1]}(v_{0}) < \infty$}\\
        \nu_{c,1}f_{\widehat{\beta}_{2}}(v_{0}) + 
        O\left(h_{m_{n}}f_{\widehat{\beta}_{2}}^{[2]}(v_{0})\right), & \text{for even $c$ is with $\nu_{c+1, 1} < \infty$}\\
        & \text{and $f_{\widehat{\beta}_{2}}^{[2]}(v_{0}) < \infty$}.\\
    \end{cases}
  \numberthis
\end{align*}
\endgroup
Here $v_{0}$ is an arbitrary point $v_{0}$ in the support of $f_{\widehat{\beta}_{s}} (.)$, where $f_{\widehat{\beta}_{s}}^{[j]}(.)$ denotes the $j$-th derivatives of $f_{\widehat{\beta}_{s}}(.)$. 

Next, arguing in a similar way, one can derive 
\begingroup
\allowdisplaybreaks
 \begin{align*}
 \label{eq:expectation}
    &\E\left\{\frac{1}{m_{n}^{2}h_{m_{n}}^{2}}\sum_{j=1}^{m_{n}}\left(\frac{V_{j}-v_{0}}{h_{m_{n}}}\right)^{c}K^{2}\left(\frac{V_{j}-v_{0}}{h_{m_{n}}}\right) \right\}\\
    &=\frac{1}{m_{n}h_{m_{n}}^{2}}
    \E\left\{
        \left(\frac{V-v_{0}}{h_{m_{n}}}\right)^{c}K^{2}\left(\frac{V-v_{0}}{h_{m_{n}}}\right)
    \right\}\\
    &= \frac{1}{m_{n}h_{m_{n}}}
    \int K^{2}_{h_{n}}(v - v_{0})
    \left(\frac{v-v_{0}}{h_{m_{n}}}\right)^{c}f_{\widehat{\beta}_{2}}(v)dv\\
     &= \frac{1}{m_{n}h_{m_{n}}}\int v^{c}K^{2}(v)\left\{ f_{\widehat{\beta}_{2}}(v_{0}) + h_{m_{n}}v f_{\widehat{\beta}_{2}}^{[1]}(v_{0}) + 0.5h_{m_{n}}^{2}v^{2} f_{\widehat{\beta}_{2}}^{[2]}(v_{0}^{**}) \right\}dv,~\mbox{where $v_{0}^{**}\in (v, v_{0})$}\\
    & = \frac{1}{m_{n}h_{m_{n}}}\begin{cases}
        O\left(h_{m_{n}}f_{\widehat{\beta}_{2}}^{[1]}(v_{0})\right), & \text{for odd $c$ with $\nu_{c+1, 2} < \infty$}\\
        & \text{and $f_{\widehat{\beta}_{2}}^{[1]}(v_{0}) < \infty$}\\
        \nu_{c,2}f_{\widehat{\beta}_{2}}(v_{0}) + 
        O\left(h_{m_{n}}^{2}f_{\widehat{\beta}_{2}}^{[2]}(v_{0})\right), & \text{for even $c$ with $\nu_{c+1, 2} < \infty$}\\
        & \text{and $f_{\widehat{\beta}_{2}}^{[2]}(v_{0})<\infty$}\\
    \end{cases}.
    \numberthis
\end{align*}
\endgroup
\begingroup
\allowdisplaybreaks
Therefore, for $c \in \{ 0, \ldots, 6\}$, using condition \ref{cond:kernel} and Lemma \ref{lemma:density}, we have
\begin{align*}
    &\Var\left\{
        \frac{1}{m_{n}h_{m_{n}}}
        \sum_{j=1}^{m_{n}}\left(\frac{V_{j} - v_{0}}{h_{m_{n}}}\right)^{c}
        K\left( \frac{V_{j} - v_{0}}{h_{m_{n}}}\right)
    \right\}\\
    &\lesssim \E
    \left\{
        \frac{1}{m_{n}^{2}h_{m_{n}}^{2}}
        \sum_{j=1}^{m_{n}}\left(\frac{V_{j} - v_{0}}{h_{m_{n}}}\right)^{2c}
        K^{2}\left( \frac{V_{j} - v_{0}}{h_{m_{n}}}\right)
    \right\} = O\left(\frac{f_{\widehat{\beta}_{2}}(v_{0})}{m_{n}h_{m_{n}}}\right).
    \numberthis
\end{align*}
\endgroup
Besides, Lemma \ref{lemma:density} asserts that, conditioning on $V_{1}, \ldots, V_{m_{n}}$, 
\begin{align*}
    \bA_{1}(v_{0}) = f_{\widehat{\beta}_{2}}(v_{0})
    \underbrace{
    \begin{pmatrix}
        1 & 0 & \nu_{2,1} & 0\\
        0 & \nu_{2,1} & 0 & \nu_{4,1}\\
        \nu_{2,1} & 0 & \nu_{4,1} & 0\\
        0 & \nu_{4,1} & 0 & \nu_{6,1}
    \end{pmatrix}}_{=\bOmega}
    + O\left(h_{m_{n}}f_{\widehat{\beta}_{2}}^{[1]}(v_{0})\right) + O_{2}\left(\sqrt{\frac{f_{\widehat{\beta}_{2}}(v_{0})}{m_{n}h_{m_{n}}}}\right).
    \numberthis
\end{align*}
and a few steps of straightforward algebra gives us 
\begin{align*}
    \be_{3}^{\tp}\bA_{1}^{-1}(v_{0}) = 
    \frac{1}{f_{\widehat{\beta}_{2}}(v_{0})}(a_{1}, 0, a_{2}, 0)^{\tp} + O\left(h_{m_{n}}f_{\widehat{\beta}_{2}}^{[1]}(v_{0})\right) + O_{2}\left(\sqrt{\frac{f_{\widehat{\beta}_{2}}(v_{0})}{m_{n}h_{m_{n}}}}\right),
    \numberthis
\end{align*}
where $$a_{1} = \frac{\nu_{2,1}\nu_{4,1}^{2} - \nu_{2,1}^{2}\nu_{6,1}}{\nu_{2,1}^{2}\nu_{4,1}^{2} - \nu_{4,1}^{2} - \nu_{2,1}^{3}\nu_{6,1} + \nu_{2,1}\nu_{4,1}\nu_{6,1}},$$ and $$a_{2} = \frac{-\nu_{4,1}^{2} + \nu_{2,1}\nu_{6,1}}{\nu_{2,1}^{2}\nu_{4,1}^{2} - \nu_{4,1}^{3} - \nu_{2,1}^{3}\nu_{6,1} + \nu_{2,1}\nu_{4,1}\nu_{6,1}}.$$ 
\par
Afterwards, for controlling the second term of Equation \eqref{eq:main-g2}, viz., $\bA_{2}(v_{0})$, we condition on $V_{1}, \ldots, V_{m_{n}}$ and  for $c \in \{ 0, 1, 2, 3\}$, we obtain, 
\begingroup
\allowdisplaybreaks
\begin{align*}
    &\frac{1}{m_{n}h_{m_{n}}}
    \sum_{j=1}^{m_{n}}R_{m_{n}}(v_{0}, V_{j})
    \left(\frac{V_{j} - v_{0}}{h_{m_{n}}}\right)^{c}
    K\left(\frac{V_{j} - v_{0}}{h_{m_{n}}} \right)\\
    &=\frac{1}{h_{m_{n}}}\int R_{m_{n}}(v_{0}, v)
    \left(\frac{v - v_{0}}{h_{m_{n}}}\right)^{c}
    K\left(\frac{v - v_{0}}{h_{m_{n}}}\right)f_{\widehat{\beta}_{2}}(v)dv + 
    O_{2}\left(\sqrt{\frac{f_{\widehat{\beta}_{2}}(v_{0})}{m_{n}h_{m_{n}}}}\right)\\
    &=\int R_{m_{n}}(v_{0}, v_{0} + h_{m_{n}}v)
    v^{c}K(v)f_{\widehat{\beta}_{2}}(v_{0} + h_{m_{n}}v)dv +  
    O_{2}\left(\sqrt{\frac{f_{\widehat{\beta}_{2}}(v_{0})}{m_{n}h_{m_{n}}}}\right)\\
    &\overset{(i)}{=} \int \frac{1}{2}
    \left(g''(v_{\star}) - g''(v_{0})\right)
    v^{2+c}K(u)
    f_{\widehat{\beta}_{2}}(v_{0} + h_{m_{n}}v)dv + 
    O_{2}\left(\sqrt{\frac{f_{\widehat{\beta}_{2}}(v_{0})}{m_{n}h_{m_{n}}}}\right)\\
    &=O\left(f_{\widehat{\beta}_{2}}(v_{0})h_{m_{n}}^{\delta+2}\right) + 
    O_{2}\left(\sqrt{\frac{f_{\widehat{\beta}_{2}}(v_{0})}{m_{n}h_{m_{n}}}}\right).
    \numberthis
\end{align*}
\endgroup
The equality holds in $(i)$ due to mean value theorem, where $|v_{\star} - v_{0}| \leq h_{m_{n}}v$. Along with the similar argument, he third term in Equation \eqref{eq:main-g2}, viz., $\bA_{3}(v)\rightarrow 0$ in probability, as $n\rightarrow\infty$ in view of the fact that $\E\{\eta\} = 0$ and Var$(\bA_{3}(v))\rightarrow 0$ as $n\rightarrow\infty$. Hence, by condition \ref{cond:kernel} and Lemma \ref{lemma:density}, it follows from \eqref{eq:main-g2} that  
\begin{align*}
    &\E\left\{ \widehat{g}_{h_{n}}''(v_{0})\right\} - g''(v_{0})\\
    &= \E\left\{ 
        2\be_{3}^{\tp}\left[
            \frac{\bA_{1}^{-1}(v_{0})}{f_{\widehat{\beta}_{2}}(v_{0})}
            + O(h_{n}) + O_{2}(1/\sqrt{mh_{n}})
        \right]
        \bA_{2}(v_{0})
    \right\}\\
    &O\left(\left\{\frac{1}{f_{\widehat{\beta}_{2} (v_{0})}} + h_{m_{n}}f_{\widehat{\beta}_{2}}^{[1]}(v_{0}) + \sqrt{\frac{f_{\widehat{\beta}_{2}}(v_{0})}{mh_{m_{n}}}}\right\}
    \left\{f_{\widehat{\beta}_{2}}(v_{0})h_{m_{n}}^{\delta+2} + \sqrt{\frac{f_{\widehat{\beta}_{2}}(v_{0})}{mh_{m_{n}}}}\right\}\right). \numberthis\\
\end{align*}
Next, we consider 
\begingroup
\allowdisplaybreaks
\begin{align*}
    &\E\left\{\sX^{\tp}(v_{0})\sW_{h_{m_{n}}}^{2}(v_{0})\sX(v_{0})\right\}\\
    &=\frac{1}{m_{n}^{2}h_{m_{n}}^{2}}
    \E\left\{
        \sum_{j=1}^{m_{n}}K^{2}_{h_{m_{n}}}
        \left(\frac{V_{j} - v_{0}}{h_{m_{n}}}\right)
        \left(1, 
        \frac{V_{j} - v_{0}}{h_{m_{n}}}, 
        \frac{(V_{j} - v_{0})^{2}}{h_{m_{n}}^{2}},
        \frac{(V_{j} - v_{0})^{3}}{h_{m_{n}}^{3}}\right)^{\tp \otimes^{2}}
    \right\}\\
    &= \frac{1}{m_{n}h_{m_{n}}}
    \left\{ 
    f_{\widehat{\beta}_{2}}(v_{0})
    \begin{pmatrix}
        \nu_{2,0} & 0 & \nu_{2,2} & 0\\
        0 & \nu_{2,2} & 0 & \nu_{4,2}\\
        \nu_{2,2} & 0 & \nu_{4,2} & 0\\
        0 & \nu_{4,2} & 0 & \nu_{6,2}
    \end{pmatrix} + O\left(h_{m_{n}}f_{\widehat{\beta}_{2}}(v_{0})\right)\right\} .
    \numberthis
\end{align*}
\endgroup
The last equality in the above expression holds due to the similar derivation of the expression \eqref{eq:expectation}. Hence, we have the following using condition \eqref{cond:eta}. 
\begin{align*}
    &\Var\left\{ \widehat{g}_{h_{m_{n}}}''(v_{0})\right\}\\
    &\leq4\E\left\{
    \be_{3}^{\tp}\bA_{1}^{-1}(v_{0})
    \sX^{\tp}(v_{0})\sW_{h_{m_{n}}}(v_{0})\bU_{n_{1}}\bU_{n_{1}}^{\tp}\sW_{h_{m_{n}}}(v_{0})\sX(v_{0})
    \bA_{1}^{-1}(v_{0})\be_{3}
    \right\}\\
    &\lesssim
        \sigma_{n}^{2}\be_{3}^{\tp}\bA_{1}(v_{0})^{-1}
        \E\left\{
            \sX_{h_{m_{n}}}(v_{0})^{\tp}\sW_{h_{m_{n}}}(v_{0})\sW_{h_{m_{n}}}(v_{0})\sX_{h_{m_{n}}}(v_{0})
        \right\}
        \bA_{1}(v_{0})^{-1}\be_{3}\\
    &= O\left(\frac{\sigma^{2}_{n}}{m_{n}h_{m_{n}}^{5}f_{\widehat{\beta}_{2}}(v_{0}))}\right).    
    \numberthis    
\end{align*} The proof is complete. 
\end{proof}

\begin{lemma}
    \label{lemma:random-bound}
    Let $\{Z_{n}(v)\}_{n \geq 1}$ be a sequence of random functions defined on $[v_{1}, v_{2}] \subset \bbR$, such that $\sup\limits_{v \in [v_{1}, v_{2}]}|Z_{n}(v)| = O_{\bbP}(a_{n})$, where $\{a_{n}\}_{n\geq 1}$ is a sequence of positive real numbers with $\displaystyle\lim_{n\rightarrow\infty} a_{n} = 0$. Suppose that for any $\delta > 0$, there exits $M (\delta) > 0$ such that $\bbP [\widehat{v}_{1} < v_{1} - M] < \delta$ and $\bbP [\widehat{v}_{2} > v_{2} + M] < \delta$, where $\widehat{v}_{1}$ and $\widehat{v}_{2}$ are estimators of $v_1$ and $v_2$, respectively. Then, $\sup\limits_{v \in [\widehat{v}_{1}, \widehat{v}_{2}]} |Z_{n}(v)| = O_{\bbP}(a_{n})$.
\end{lemma}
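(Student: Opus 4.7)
The plan is to exploit tightness of the random endpoints to confine $[\widehat v_1,\widehat v_2]$ to a deterministic enlargement of $[v_1,v_2]$ with arbitrarily high probability, and then combine this with the hypothesized rate via a union bound. Fix an arbitrary $\epsilon>0$. Applying the tightness hypothesis with $\delta=\epsilon/4$ produces $M=M(\epsilon/4)>0$ for which $\bbP[\widehat v_1<v_1-M]<\epsilon/4$ and $\bbP[\widehat v_2>v_2+M]<\epsilon/4$. Setting $E_n:=\{\widehat v_1\ge v_1-M\}\cap\{\widehat v_2\le v_2+M\}$, we have $\bbP(E_n^c)<\epsilon/2$ uniformly in $n$, and on $E_n$ the inclusion $[\widehat v_1,\widehat v_2]\subseteq[v_1-M,v_2+M]$ holds.

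The next step is to upgrade the hypothesized rate on $[v_1,v_2]$ to the enlarged interval $[v_1-M,v_2+M]$. In the paper's intended application, $Z_n$ is built from the local polynomial estimator $\widehat g''_{h_{m_{n}}}$, whose uniform behavior on any compact subset of the interior of the support of $f_{\widehat\beta_2}$ is guaranteed by Lemma \ref{lemma:density} (the density is bounded away from $0$ and $\infty$ on such a set) together with Condition \ref{cond:kernel} (compact kernel support). Hence the same rate $\sup_{v\in[v_1-M,v_2+M]}|Z_n(v)|=O_{\bbP}(a_n)$ continues to hold, so there exist $C>0$ and $N\in\bbN$ with $\bbP\bigl[\sup_{v\in[v_1-M,v_2+M]}|Z_n(v)|>Ca_n\bigr]<\epsilon/2$ for all $n\ge N$. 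A union bound then yields
\[
\bbP\!\left[\sup_{v\in[\widehat v_1,\widehat v_2]}|Z_n(v)|>Ca_n\right]\le\bbP(E_n^c)+\bbP\!\left[\sup_{v\in[v_1-M,v_2+M]}|Z_n(v)|>Ca_n\right]<\epsilon,
\]
which gives $\sup_{v\in[\widehat v_1,\widehat v_2]}|Z_n(v)|=O_{\bbP}(a_n)$, as required.

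The principal obstacle is the middle step. Since $M=M(\delta)$ is fixed rather than vanishing, tightness of $(\widehat v_1,\widehat v_2)$ only places the random interval inside a strictly larger deterministic interval, so the hypothesized rate on $[v_1,v_2]$ cannot be transferred by a purely formal union-bound/continuous-mapping argument. Bridging this gap requires invoking the specific estimator-based structure of $Z_n$ as a kernel-weighted process whose rate persists on any compact interior subinterval, rather than treating $Z_n$ as a generic random function; once this extension is in hand, the remainder of the argument is a routine tightness-plus-union-bound calculation.
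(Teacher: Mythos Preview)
Your approach is essentially identical to the paper's: confine $[\widehat v_1,\widehat v_2]$ inside a deterministic enlargement $[v_1-M,v_2+M]$ via the tightness hypothesis, use the rate on that enlarged interval, and combine via a union bound. The paper's proof writes precisely the same inequalities (its \eqref{eq:random2} is your inclusion on $E_n$, and its final display is your union bound).

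You have, however, been more careful than the paper in flagging the central difficulty. The lemma as stated only assumes the rate on $[v_1,v_2]$, yet both arguments require it on $[v_1-M,v_2+M]$. The paper simply writes ``by the assumption of $\sup_{v\in[v_1-M,v_2+M]}|Z_n(v)|=O_{\bbP}(a_n)$'' without that assumption actually appearing in the lemma; you instead explicitly name this as the principal obstacle and appeal to the estimator-specific structure (Lemma \ref{lemma:density}, Condition \ref{cond:kernel}) to justify the extension. That diagnosis is correct: the lemma is not provable from its stated hypotheses alone, and the paper's proof tacitly imports the same extra information you identify.
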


\begin{proof}
   Observe that for a fixed $v$, $Z_{n}(v)$ is a sequence of random variables. Now, for any $\epsilon > 0$, there exits a constant $C (\epsilon) > 0$ and $N (\epsilon) \in \bbN$ such that for all $n \geq N$, 
    \begin{equation}
    \label{eq:random1}
        \bbP\left\{\sup_{v \in [v_{1}, v_{2}]}|Z_{n}(v)| > C a_{n}\right\} < \epsilon.
    \end{equation}
    Further, observe that $$\bbP\{ \widehat{v}_{1} \in [v_{1} - M, v_{2} + M] \cap \widehat{v}_{2} \in [v_{1} - M, v_{2} + M]
    \} \geq (1 - 2\delta)$$ in view of the assertions on $\widehat{v}_{l}$ and $v_{l}$ stated in the statement of this lemma, where $l \in \{1, 2\}$. In addition, we also have 
    \begin{equation}
    \label{eq:random2}
    \sup_{v \in [\widehat{v}_{1}, \widehat{v}_{2}]}|Z_{n}(v)|\leq \sup_{v \in [{v}_{1}-M, {v}_{2}+M]}|Z_{n}(v)|  
    \end{equation}
    Moreover, in view of the fact that  $\sup\limits_{v \in [v_{1}-M, v_{2} + M]}|Z_{n}(v)| = O_{\bbP}(a_{n})$, one can conclude that for every $\epsilon > 0$, there exists a generic constant $C > 0$ and $N(\epsilon) \in \bbN$ such that for $n \geq N$, 
    \begin{equation}
    \label{eq:random3}
        \bbP\left\{\sup_{v \in [{v}_{1}-M, {v}_{2}+M]}|Z_{n}(v)| > Ca_{n}\right\} > \epsilon - 2\delta.
    \end{equation}
    Hence, by combining inequalities in \eqref{eq:random2} and \eqref{eq:random3}, 
    \begin{align*}
        & \bbP\left\{ \sup\limits_{v \in [\widehat{v}_{1}, \widehat{v}_{2}]} |Z_{n}(v)| > C a_{n}\right\}\\ 
        & \leq \bbP\left\{\sup_{v \in [{v}_{1}-M, {v}_{2}+M]}|Z_{n}(v)| > Ca_{n}\right\}\\
        & \qquad + 
        \bbP\left\{\widehat{v}_{1} \notin [v_{1} - M, v_{2} + M] \cup \widehat{v}_{2} \notin [v_{1} - M, v_{2} + M] \right\}\\
        & < (\epsilon - 2\delta) + 2\delta = \epsilon. \numberthis
    \end{align*}
    This completes the proof.
\end{proof}


\begin{proof}[Proof of Theorem \ref{thm:emp}]
First, we need to need to show that, under \ref{cond:kernel}, \ref{cond:VC}, \ref{cond:g} \ref{cond:band} and the assertion in Lemma \ref{lemma:density}, for some real constants $v_{1} < v_{2}$, the scaled difference $d_{h_{m_{n}}}(v)= \sqrt{m_{n}h_{m_{n}}^{5}}\left(\widehat{g}_{h_{m_{n}}}''(v) - \E\left\{\widehat{g}_{h_{m_{n}}}''(v)\right\}\right)$ has the following approximation:
\begin{equation}
\label{eq:d}
    \sup_{v \in [v_{1}, v_{2}]}\left| 
        \frac{
            d_{m_{n}} - 
            \frac{1}{\sqrt{h_{m_{n}}}}
            \bbG_{m_{n}}[\be^{\tp}\bOmega^{-1}\bPsi_{m_{n}}(v)]
        }{
            \frac{1}{\sqrt{h_{m_{n}}}}
            \bbG_{m_{n}}[\be_{3}^{\tp}\bPsi_{h_{m_{n}}}(v)]
        }
    \right| = 
    O(h_{m_{n}}) + O_{\bbP}\left(
        \sqrt{\frac{\log m_{n}}{m_{n}h_{m_{n}}}}
    \right), 
\end{equation} 
where $\bbG_{m_{n}}$, $\bOmega$ and $\bPsi_{m_{n}}$ are the same as defined in Section \ref{sec:theory}. 
\par
In order to establish \eqref{eq:d}, note that the scaled difference can be expressed as 
\begin{align*}
\label{eq:decom}
    d_{h_{m_{n}}}(v) &= \sqrt{m_{n}h_{m_{n}}^{5}}
    \left(
    \widehat{g}_{h_{m_{n}}}''(v) - \E\left\{\widehat{g}_{h_{m_{n}}}''(v)\right\}
    \right)\\
    &= \sqrt{m_{n}h_{m_{n}}^{5}}\left(\widehat{g}_{h_{m_{n}}}''(v) - g''(v)\right) + 
    \sqrt{m_{n}h_{m_{n}}^{5}}\left(g''(v) - \E\left\{ \widehat{g}_{h_{m_{n}}}''(v)\right\}\right).
    \numberthis
\end{align*}
Using the fact in Theorem \ref{thm:bias}, observe that 
$$\sqrt{m_{n}h_{m_{n}}^{5}}\left(g''(v) - \E\left\{ \widehat{g}_{h_{m_{n}}}''(v)\right\}\right) = o_{p}(1)$$ as $m_{n}h_{m_{n}}^{1/5} \rightarrow c$, for some constant $c$. 
Therefore, one can conclude that $$d_{h_{m_{n}}}(v) = \sqrt{m_{n}h_{m_{n}}^{5}}\left(\widehat{g}_{h_{m_{n}}}''(v) - g''(v)\right) +  o_{\bbP}(1).$$ Now, one can simplify the expression of $\widehat{g}_{h_{m_{n}}}''(v)$ as follow. 
\begin{align*}
\label{eq:g1}
    &\widehat{g}_{h_{m_{n}}}''(v) 
    = 2\be_{3}^{\tp}\left\{\sX^{\tp}(v)\sW_{h_{m_{n}}}(v)\sX(v)\right\}^{-1}
    \sX^{\tp}(v)\sW_{h_{m_{n}}}\bU_{n_{1}}\mbox{(see \eqref{eq:g-dash})}\\
    &=\frac{2\be_{3}^{\tp}\Gamma_{h_{m_{n}}}}{h_{m_{n}}^{2}}
    \left\{\frac{1}{m_{n}h_{n}}\Gamma_{h_{m_{n}}}\sX^{\tp}(v)\sW_{h_{m_{n}}}(v)\sX(v)\Gamma_{h_{m_{n}}}\right\}^{-1}
    \frac{1}{h_{m_{n}}}\Gamma_{h_{m_{n}}}\sX^{\tp}(v)\sW_{h_{m_{n}}}\bU_{n_{1}}\\
    &=\frac{2\be_{3}^{\tp}}{h_{m_{n}}^{2}}
    \left\{\frac{1}{m_{n}h_{m_{n}}}\sX_{h_{m_{n}}}^{\tp}(v)\sW_{h_{m_{n}}}(v)\sX_{h_{m_{n}}}(v)\right\}^{-1}
    \frac{1}{m_{n}h_{m_{n}}}\sX_{h_{m_{n}}}^{-1}(v)\sW_{h_{m_{n}}}(v)\bU_{n_{1}}\\
    &=\frac{2\be_{3}^{\tp}}{h_{m_{n}}^{2}}
    \left(
        \frac{\bOmega^{-1}}{f_{\widehat{\beta}_{2}}(v)} + O(h_{m_{n}}) + O_{\bbP}\left(\sqrt{\frac{\log m_{n}}{m_{n}h_{m_{n}}}}
        \right)
    \right)
    \frac{1}{m_{n}h_{m_{n}}}\sX_{h_{m_{n}}}^{\tp}(v)\sW_{h_{m_{n}}}(v)\bU_{n_{1}}\mbox{(using \eqref{eq:d})},
    \numberthis
\end{align*}
where $\bU_{n_{1}} = (U_{1}, \ldots, U_{m_{n_{1}}})^{\tp}$.
\par
\vspace{0.1in}
\noindent Further, we also have
\begingroup
\allowdisplaybreaks
\begin{align*}
    \frac{1}{m_{n}h_{m_{n}}}\sX_{h_{m_{n}}}^{\tp}(v)\sW_{h_{m_{n}}}(v)\bU_{n_{1}}
    &=\begin{pmatrix}
        \frac{1}{m_{n}h_{m_{n}}}\sum\limits_{i=1}^{m_{n}}U_{i}K\left(\frac{V_{i} - v}{h_{m_{n}}}\right)\\
        \frac{1}{m_{n}h_{m_{n}}}\sum\limits_{i=1}^{m_{n}}U_{i}\left(\frac{V_{i} - v}{h_{m_{n}}}\right)K\left(\frac{V_{i} - v}{h_{m_{n}}}\right)\\
        \frac{1}{m_{n}h_{m_{n}}}\sum\limits_{i=1}^{m_{n}}U_{i}\left(\frac{V_{i} - v}{h_{m_{n}}}\right)^{2}K\left(\frac{V_{i} - v}{h_{m_{n}}}\right)\\
        \frac{1}{m_{n}h_{m_{n}}}\sum\limits_{i=1}^{m_{n}}U_{i}\left(\frac{V_{i} - v}{h_{m_{n}}}\right)^{3}K\left(\frac{V_{i} - v}{h_{m_{n}}}\right)
    \end{pmatrix}\mbox{(using \eqref{eq:s} and \eqref{24})}\\
    &=\frac{1}{h_{m_{n}}}\int \bPsi(v; Z_{1}, Z_{2})d\bbP_{m_{n}}(Z_{1}, Z_{2}), 
    \numberthis
\end{align*}
\endgroup
where $$\bPsi_{h_{m_{n}}}(v; z_{1}, z_{2}) = (\psi_{0, h_{m_{n}}}(v; z_{1}, z_{2}), \psi_{1, h_{m_{n}}}(v; z_{1}, z_{2}), \psi_{2, h_{m_{n}}}(v; z_{1}, z_{2}), \psi_{3, h_{m_{n}}}(v; z_{1}, z_{2}))^{\tp}$$ as defined in Section \ref{sec:theory}. Further, observe that each element of the vector $\bPsi_{h_{m_{n}}}$ is defined as $$\psi_{c, h_{m_{n}}}(v, z_{1}, z_{2}) = z_{2}\left(\frac{z_{1} - v}{h_{m_{n}}}\right)^{c}K\left(\frac{z_{1} - v}{h_{m_{n}}}\right)$$ for $c \in \{ 0, 1, 2, 3\}$. Therefore, using \eqref{eq:g1}, one has 
\begin{align*}
\label{eq:gdoubledash}
    \widehat{g}_{h_{m_{n}}}''(v)
    &= \frac{2}{h_{m_{n}}^{3}}
    \int \frac{\be_{3}^{\tp}\bOmega^{-1}\bPsi(v, Z_{1}, Z_{2})}{f_{\widehat{\beta}_{2}}(v)}d\bbP_{m_{n}}(Z_{1}, Z_{2})\\
    & \qquad + 
    \frac{2}{h_{m_{n}}^{3}}
    \int \be_{3}^{\tp}\bPsi_{h_{m_{n}}}(v)(v; Z_{1}, Z_{2})d\bbP_{m_{n}}(Z_{1}, Z_{2})
    \left\{ O\left(h_{m_{n}}\right) + O_{\bbP}\left(\sqrt{\frac{\log m_{n}}{m_{n}h_{m_{n}}}}\right)\right\}\\
    &= \frac{1}{h_{m_{n}}^{3}}\bbP_{m}[\be_{3}^{\tp}\bA_{1}^{-1}\bPsi_{h_{m_{n}}}(v)]\\
    & \qquad + \frac{1}{h_{m_{n}}^{3}}\bbP_{m_{n}}[\be_{3}^{\tp}\bPsi_{h_{m_{n}}}(v)]\left\{ O\left(h_{m_{n}}\right) + O_{\bbP}\left(\sqrt{\frac{\log m_{n}}{m_{n}h_{m_{n}}}}\right)\right\},
    \numberthis
\end{align*}
where $$\bbP_{m_{n}}[\be_{3}^{\tp}\bOmega^{-1}\bPsi_{h_{m_{n}}}(v)] = \int 2\be_{3}^{\tp}\bOmega^{-1}\bPsi_{h_{m_{n}}}(v; Z_{1}, Z_{2})d\bbP_{m_{n}}(Z_{1}, Z_{2})$$ and $$\bbP_{m_{n}}[\be_{3}^{\tp}\bPsi_{h_{m_{n}}}(v)] = \int 2\be_{3}^{\tp}\bPsi_{h_{m_{n}}}(v; Z_{1}, Z_{2})d\bbP_{m_{n}}(Z_{1}, Z_{2}).$$
Hence, we have 
\begin{align*}
\label{eq:Egdoubledash}
    \E\left\{\widehat{g}_{h_{m_{n}}}''(v)\right\} &= 
    \frac{1}{h_{m_{n}}^{3}}\bbP[\be_{3}^{\tp}\bOmega\bPsi_{h_{m_{n}}}(v)]\\
    & \qquad + \frac{1}{h_{m_{n}}^{3}}\bbP[\be_{3}^{\tp}\bPsi_{h_{m_{n}}}(v)]\left\{ O\left(h_{m_{n}}\right) + O\left(\sqrt{\frac{\log m_{n}}{m_{n}h_{m_{n}}}}\right)\right\},  
    \numberthis
\end{align*}
where $$\bbP[\be_{3}^{\tp}\bOmega^{-1}\bPsi_{h_{m_{n}}}(v)] = \E\left\{\bbP_{m_{n}}[\be_{3}^{\tp}\bA_{1}^{-1}\bPsi_{h_{m_{n}}}(v)]\right\}$$ and $$\bbP[\be_{3}^{\tp}\bPsi_{h_{m_{n}}}(v)] = \E\left\{\bbP_{m_{n}}[\be_{3}^{\tp}\bPsi_{h_{m_{n}}}(v)]\right\}.$$ Summarizing the facts in \eqref{eq:gdoubledash} and \eqref{eq:Egdoubledash}, we have 
\begingroup
\allowdisplaybreaks
\begin{align*}
\label{eq:d-exp}
    &d_{m_{n}}(v) = \sqrt{m_{n}h_{m_{n}}^{5}}\left(\widehat{g}_{h_{m_{n}}}''(v) - \E\left\{\widehat{g}''_{h_{m_{n}}}(v)\right\}\right)\\
    &= \sqrt{\frac{{m_{n}}}{h_{m_{n}}}}
    \left(\bbP_{m_{n}} - \bbP\right)[\be^{\tp}\bOmega^{-1}\bPsi_{h_{m_{n}}}(v)]\\
    & \qquad + \sqrt{\frac{{m_{n}}}{h_{m_{n}}}}
    \left(\bbP_{m_{n}} - \bbP\right)[\be_{3}^{\tp}\bPsi_{h_{m_{n}}}(v)]\left\{ O\left(h_{m_{n}}\right) + O_{\bbP}\left(\sqrt{\frac{\log m_{n}}{m_{n}h_{m_{n}}}}\right)\right\}\\
    &= \frac{1}{\sqrt{h_{m_{n}}}}\bbG_{m_{n}}[\be^{\tp}\bOmega^{-1}\bPsi_{h_{m_{n}}}(v)]~\mbox{(using...)}\\
    & \qquad + \frac{1}{\sqrt{h_{m_{n}}}}\bbG_{m_{n}}[\be_{3}^{\tp}\bPsi_{h_{m_{n}}}(v)]\left\{ O\left(h_{m_{n}}\right) + O_{\bbP}\left(\sqrt{\frac{\log m_{n}}{m_{n}h_{m_{n}}}}\right)\right\}\\
    \numberthis
\end{align*}
\endgroup
Moreover, the equality in \eqref{eq:d-exp} is valied for all $v$, and hence, we have 
\begin{align*}
    &\sup_{v \in [v_{1}, v_{2}]}\left| 
        \frac{
        \sqrt{m_{n}h_{m_{n}}^{5}}\left(\widehat{g}_{h_{m_{n}}}''(v)- 
            \E\{\widehat{g}_{h_{m_{n}}}(v)\}\right) - 
            \frac{1}{\sqrt{h_{m_{n}}}}\bbG_{m_{n}}[\be_{3}^{\tp}\bOmega^{-1}\bPsi_{h_{m_{n}}}(v)]
        }{
            \frac{1}{\sqrt{h_{m_{n}}}}\bbG_{h_{m_{n}}}[\be_{3}^{\tp}\bPsi_{h_{m_{n}}}(v)]
        }
    \right|\\
    &= O(h_{m_{n}}) + O_{\bbP}\left( 
        \sqrt{
            \frac{\log m_{n}}{m_{n}h_{m_{n}}}
        }
    \right). 
    \numberthis
\end{align*}
Finally, using conditions \ref{cond:kernel}, \ref{cond:VC}, \ref{cond:g} \ref{cond:band} and the assertion in Lemma \ref{lemma:density} and the fact in Lemmas \ref{lemma:beta} and \ref{lemma:random-bound}, we have 
\begin{align*}
    &\sup_{v \in [\widehat{v}_{1}, \widehat{v}_{2}]}\left| 
        \frac{
        \sqrt{m_{n}h_{m_{n}}^{5}}\left(\widehat{g}_{h_{m_{n}}}''(v)- 
            \E\{\widehat{g}_{h_{m_{n}}}(v)\}\right) - 
            \frac{1}{\sqrt{h_{m_{n}}}}\bbG_{m_{n}}[\be_{3}^{\tp}\bOmega^{-1}\bPsi_{h_{m_{n}}}(v)]
        }{
            \frac{1}{\sqrt{h_{m_{n}}}}\bbG_{h_{m_{n}}}[\be_{3}^{\tp}\bPsi_{h_{m_{n}}}(v)]
        }
    \right|\\
    &= O(h_{m_{n}}) + O_{\bbP}\left( 
        \sqrt{
            \frac{\log m_{n}}{m_{n}h_{m_{n}}}
        }
    \right), 
    \numberthis
\end{align*}
which proves the first part of this theorem. 
\par
\vspace{0.1in}
Now, recall the derivation in \eqref{eq:d-exp} and observe that 
\begin{equation}\label{47}
    \sqrt{m_{n}h_{m_{n}}^{5}}\left( \widehat{g}_{h_{m_{n}}}''(v) - \E\{\widehat{g}_{h_{m_{n}}}''(v)\}\right) = \frac{1}{\sqrt{h_{m_{n}}}}\bbG_{m_{n}}[\be_{3}^{\tp}\bOmega^{-1}\bPsi_{h_{m_{n}}}(v)] +  o_{\bbP}(1),
\end{equation}
and rewrite the expression of $\bG_{m_{n}}[\cdot]$ as follows. 
\begin{align*}
\label{eq:Gstep1}
    &\frac{1}{m_{n}h_{m_{n}}}\sum_{i=1}^{m_{n}}
    \left(
        \be_{3}^{\tp}\bOmega^{-1}\bPsi_{h_{m_{n}}}(v; U_{i}, V_{i}) - 
        \E\{\be_{3}^{\tp}\bOmega^{-1}\bPsi_{h_{m_{n}}}(v; U_{i}, V_{i})\}
    \right)\\
    &= \frac{1}{m_{n}h_{m_{n}}}\sum_{i=1}^{m_{n}}
        U_{i}\left(
        \be_{3}^{\tp}\bOmega^{-1}\Tilde{\bPsi}_{h_{m_{n}}}(v; V_{i}) - 
        \E\{\be_{3}^{\tp}\bOmega^{-1}\Tilde{\bPsi}_{h_{m_{n}}}(v; V_{i})\}
    \right),
    \numberthis
\end{align*}
where $$\bPsi(v; Z_{i}, Z_{2}) = Z_{1}\Tilde{\bPsi}(v; Z_{2})$$ for some function $\Tilde{\bPsi}$. Since the function $\Tilde{\bPsi}$s are linear combination of the functions from $\sK_{6}$ defined in Condition \ref{cond:VC} in Section \ref{sec:theory}, it follows from \citet{einmahl2005uniform} that  
\begin{align*}
\label{eq:Gstep2}
 &\sup_{v \in [v_{1}, v_{2}]}
    \left|
    \sum_{i=1}^{m_{n}}
        U_{i}\left(
        \be_{3}^{\tp}\bOmega^{-1}\Tilde{\bPsi}_{h_{m_{n}}}(v; V_{i}) - 
        \E\{\be_{3}^{\tp}\bOmega^{-1}\Tilde{\bPsi}_{h_{m_{n}}}(v; V_{i})\}\right)
    \right|\\
    &= O_{\bbP}\left( \sqrt{m_{n}h_{m_{n}} \log m_{n}} \right).
    \numberthis
\end{align*}
Hence, using combining Equations \eqref{eq:Gstep1} and \eqref{eq:Gstep2}, we have 
\begin{align*}
    \label{eq:line1}
    \sup_{v \in [v_{1}, v_{2}]}\left|
    \frac{1}{\sqrt{h_{m_{n}}}}\bbG_{m_{n}}[\be_{3}^{\tp}\bOmega^{-1}\bPsi_{h_{m_{n}}}(v)]
    \right| &= O_{\bbP}\left(\sqrt{ 
        \frac{\log m_{n}}{m_{n}h_{m_{n}}}}
    \right)
    \numberthis
\end{align*}
and using the same argument, 
\begin{align*}
    \label{eq:line2}
    \sup_{v \in [v_{1}, v_{2}]}\left|
    \frac{1}{\sqrt{h_{m_{n}}}}\bbG_{m_{n}}[\be_{3}^{\tp}\bPsi_{h_{m_{n}}}(v)]
    \right| &= O_{\bbP}\left(\sqrt{ 
        \frac{\log m_{n}}{m_{n}h_{m_{n}}}}
    \right).
    \numberthis
\end{align*}
Therefore, using \eqref{47}, we have 
\begin{align*}
\label{eq:line3}
    &\widehat{g}_{h_{m_{n}}}''(v) - \E\{\widehat{g}_{h_{m_{n}}}''(v)\}\\
    &= \frac{1}{\sqrt{m_{n}h_{m_{n}}^{5}}}
    \left\{ 
        \frac{1}{\sqrt{h_{m_{n}}}}\bbG_{m_{n}}[\be_{3}^{\tp}\bOmega^{-1}\bPsi_{h_{m_{n}}}(v)] \right.\\
    & \qquad   \left. +
        \frac{1}{\sqrt{h_{m_{n}}}}\bbG_{m_{n}}[\be_{3}^{\tp}\bPsi_{h_{m_{n}}}(v)]
        \left(
          O(h_{m_{n}}) + O_{\bbP}\left(
            \sqrt{\frac{\log m_{n}}{m_{n}h_{m_{n}}}} 
          \right)  
        \right)
    \right\}.
    \numberthis
\end{align*}
Afterwards, using \eqref{eq:line1}, \eqref{eq:line2} and \eqref{eq:line3}, we have
\begin{align*}
    \| \widehat{g}_{h_{m_{n}}}'' - \E\{\widehat{g}_{h_{m_{n}}}''\} \|_{\infty} &= 
    O_{\bbP}\left(
        \sqrt{\frac{\log m_{n}}{m_{n}^{2}h_{m_{n}}^6}}\left\{
            1 + O(h_{m_{n}}) + O_{\bbP}\left(
            \sqrt{\frac{\log m_{n}}{m_{n}h_{m_{n}}}} 
          \right)  
        \right\}
    \right)\\
    &= O_{\bbP}\left(
        \sqrt{\frac{\log m_{n}}{m_{n}^{2}h_{m_{n}}^6}}\right).
    \numberthis
\end{align*}
Therefore, using the upper bound of the bias of $\widehat{g}_{h_{m_{n}}}''$ stated in Theorem \ref{thm:bias}, we have 
\begin{align*}
    \| \widehat{g}_{h_{m_{n}}}'' - g''\|_{\infty} = O\left(h_{m_{n}}^{2+\delta}\right) + h_{m_{n}}^{2}O\left(\frac{1}{\sqrt{m_{n}h_{m_{n}}}}\right) + O_{\bbP}\left(
        \sqrt{\frac{\log m_{n}}{m_{n}^{2}h_{m_{n}}^6}}\right).
    \numberthis
\end{align*} 
This completes the proof of the second part of this theorem. 
\end{proof}


\begin{proof}[Proof of Theorem \ref{thm:T}]
First, we would like to establish that under the assumptions \ref{cond:kernel}, \ref{cond:VC} and \ref{cond:band}, we have
    \begin{equation}
    \label{eq:T-tilde}
        \sup_{t\in \bbR}\left|\bbP\left\{ \sqrt{m_{n}h_{m_{n}}^{5}}(\Tilde{T}_{n} - T) \leq t\right\} - 
        \bbP\left\{\sup_{f \in \sG_{m_{n}}} |B_{m_{n}}(f)| \leq t\right\}\right|
        = O\left(\left(\frac{\log^{7}m_{n}}{m_{n}h_{m_{n}}^{5}}\right)^{1/8}\right), 
    \end{equation} where for any non-random $v_{1} < v_{2}$, $$\Tilde{T}_{n} = \sup\limits_{v \in [v_{1}, v_{2}]}|\widehat{g}_{h_{m_{n}}}''(v)|,$$ $$T = \sup\limits_{v\in[v_1, v_2]}|g''(v)|,$$ and $B_{m_{n}}$ is the same as defined in Section \ref{sec:theory}.
Now, observe that 
\begin{align*}
    \Tilde{T}_{n} - T &= \sup_{v \in [v_{1}, v_{2}]}|\widehat{g}_{h_{m_{n}}}(v)| - \sup_{v \in [v_{1}, v_{2}]}|g''(v)|\\
    & \geq \sup_{v\in [v_{1}, v_{2}]}\left|\widehat{g}_{h_{m_{n}}}''(v) - g''(v)\right|.
\end{align*}
Now, using \ref{cond:kernel}, \ref{cond:VC} and \ref{cond:band}, it is enough to show that, 
\begin{align*}
        &\sup_{t\in \bbR}\left|\bbP\left\{ \sqrt{m_{n}h_{m_{n}}^{5}}
        \| \widehat{g}_{h_{m_{n}}}'' - g''\|_{\infty}
        \leq t\right\} - 
        \bbP\left\{\sup_{f \in \sG_{m_{n}}} |B_{m_{n}}(f)| \leq t\right\}\right|\\
        &= O\left(\left(\frac{\log^{7}m_{n}}{m_{n}h_{m_{n}}^{5}} \right)^{1/8}\right), 
    \numberthis
\end{align*}
where $\| \widehat{g}_{h_{m_{n}}}'' - g''\|_{\infty} = \sup\limits_{v\in [v_{1}, v_{2}]}\left|\widehat{g}_{h_{m_{n}}}''(v) - g''(v)\right|$. Now, using the fact in  Theorem \eqref{thm:emp}, the scaled difference between the estimated derivatives of $g''$ and the true $g''$ can be expressed as 
\begin{equation}
\label{eq:scale-diff}
\sqrt{m_{n}h_{m_{n}}^{5}}\left\|\widehat{g}_{h_{m_{n}}}'' - g'' \right\|_{\infty} = \sqrt{m_{n}h_{m_{n}}^{5}}\left\|\widehat{g}_{h_{m_{n}}}'' - \E\{g''\} \right\|_{\infty} + O_{\bbP}(1).
\end{equation}
Next, using Corollary 2.2 and the derivation of Proposition 3.1. in \citet{chernozhukov2014gaussian}, one can conclude that there exists a tight Gaussian process $B_{m_{n}}$ and constants $A_{1}, A_{2} > 0$ such that for any $\gamma > 0$, 
\begin{equation}\label{58}
    \bbP\left( \left|\sqrt{m_{n}h_{m_{n}}^{5}}\| \widehat{g}_{h_{m_{n}}}'' - \E\{\widehat{g}_{h_{m_{n}}}''\}\|_{\infty} - \sup_{f \in \sG_{m_{n}}}|B_{m_{n}}(f)| \right| \leq \frac{A_{1}\log^{2/3} m_{n}}{\gamma^{1/3}m_{n}^{1/6}h_{m_{n}}^{5/6}} \right) \leq A_{2}\gamma,
\end{equation}
for large $m_{n}$. Hence, using \eqref{eq:scale-diff} and \eqref{58}, we have 
\begin{equation}
    \bbP\left( \left|\sqrt{m_{n}h_{m_{n}}^{5}}\| \widehat{g}_{h_{m_{n}}}'' - {g}''\|_{\infty} - \sup_{f \in \sG_{m_{n}}}|B_{m_{n}}(f)| \right| \leq \frac{A_{3}\log^{2/3} m_{n}}{\gamma^{1/3}m_{n}^{1/6}h_{m_{n}}^{5/6}} \right) \leq A_{2}\gamma,
\end{equation}
for some constants $A_{3}$. Afterward, applying anti-concentration inequality (for more details, see Theorem 3 of \citet{chernozhukov2015comparison}) 
on $\sqrt{m_{n}h_{m_{n}}^{5}}
        \| \widehat{g}_{h_{m_{n}}}'' - g''\|_{\infty}
        $, for large $m_{n}$, there exists a positive constant $A_{4}$ such that  
\begin{align}\label{60}
    &\sup_{t\in \bbR}\left|\bbP\left\{ \sqrt{m_{n}h_{m_{n}}^{5}}
        \| \widehat{g}_{h_{m_{n}}}'' - g''\|_{\infty}
        \leq t\right\} - 
        \bbP\left\{\sup_{f \in \sG_{m_{n}}} |B_{m_{n}}(f)| \leq t\right\}\right|\nonumber\\
    & \qquad \leq A_{4} \E\left\{\sup_{f \in \sG_{m_{n}}}|B_{m_{n}}(f)|\right\} \frac{A_{1}\log^{2/3} m_{n}}{\gamma^{1/3}m_{n}^{1/6}h_{m_{n}}^{5/6}} + A_{2}\gamma
\end{align}
Moreover, using Dudley's inequality of Gaussian process (see Corollary 2.2.8 of \citet{MR1385671}), we have 
\begin{equation}\label{61}
    \E\left\{\sup_{f \in \sG_{m_{n}}}|B_{m_{n}}(f)|\right\} = O(\sqrt{\log m_{n}}).  
\end{equation}
Finally, With the optimal $\gamma = \left(\frac{\log^{7}m_{n}}{m_{n}h_{m_{n}}^{5}} \right)^{1/8}$, and in view of \eqref{60} and \eqref{61}, we have  
\begin{align*}
    &\sup_{t\in \bbR}\left|\bbP\left\{ \sqrt{m_{n}h_{m_{n}}^{5}}
        \| \widehat{g}_{h_{m_{n}}}'' - g''\|_{\infty}
        \leq t\right\} - 
        \bbP\left\{\sup_{f \in \sG_{m_{n}}} |B_{m_{n}}(f)| \leq t\right\}\right|\\
    &= O\left(\left(\frac{\log^{7}m_{n}}{m_{n}h_{m_{n}}^{5}} \right)^{1/8}\right).
    \numberthis
\end{align*}
This concludes the proof of \eqref{eq:T-tilde}.
Finally, the straightforward application of the facts in Lemmas \ref{lemma:beta} and \ref{lemma:random-bound} on $\Tilde{T}_{n}$, the proof of this theorem follows. 
\end{proof}

\begin{proof}[Proof of Theorem \ref{thm:bootstrap}]
    From the assertion in Theorem \ref{thm:T}, it is known that there exists a Gaussian process $B_{m_{n}}$ defined on $\sG_{m_{n}}$ such that 
    \begin{equation}\label{63}
\sqrt{m_{n}h_{m_{n}}^{5}}\|\widehat{g}_{h_{m_{n}}}'' - g''\|_{\infty} \approx \sup_{f \in \sG_{m_{n}}}|B_{m_{n}}|.  
    \end{equation}
   Further, observe that conditioning on $V_{1}, \ldots, V_{m_{n}}$, we have
    \begin{equation}\label{64}
\sqrt{m_{n}h_{m_{n}}^{5}}\|\widehat{g''}_{h_{m_{n}}}^{*} - \widehat{g''}\|_{\infty} = \sqrt{m_{n}h_{m_{n}}^{5}}\|\widehat{g''}_{h_{m_{n}}}^{*} - \E\{\widehat{g''}_{h_{m_{n}}}^{*}|\sV_{m_{n}}\}\|_{\infty}, 
    \end{equation} where $\sV_{m_{n}} = \{V_{1}, \ldots, V_{m_{n}}\}$. 
    Similar to the proof of Theorem \ref{thm:T}, one can approximate $$\sqrt{m_{n}h_{m_{n}}^{5}}\|\widehat{g''}_{h_{m_{n}}}^{*} - \E\{\widehat{g''}_{h_{m_{n}}}^{*}|\sX\}\|_{\infty}$$ by the maximum of the empirical process, and therefore, we have 
    \begin{align*}\label{65}
    &\sup_{t\in \bbR}\left|\bbP\left\{ \sqrt{m_{n}h_{m_{n}}^{5}}
        \| \widehat{g}_{h_{m_{n}}}''^{*} - \widehat{g}_{h_{m_{n}}}''\|_{\infty}
        \leq t \mid \sV_{m_{n}}\right\} - 
        \bbP\left\{\sup_{f \in \sG_{m_{n}}} |B^{\star}_{m_{n}}(f)| \leq t\mid \sV_{m_{n}}\right\}\right|\\
        &= O\left(\left(\frac{\log^{7}m_{n}}{m_{n}h_{m_{n}}^{5}} \right)^{1/8}\right),
        \numberthis
    \end{align*}
    where $B^{\star}_{m_{n}}$ is a Gaussian processes defined on $\sG_{m_{n}}$ such that for any $f_{1}, f_{2} \in \sG_{m_{n}}$, 
    \begin{align}
        \E\{B^{\star}_{m_{n}}(f_{1})B^{\star}_{m_{n}}(f_{2})\mid \sV_{m_{n}}\} &= \widehat{\cov}\{f_{1}(V), f_{2}(V)\},         
    \end{align} 
    where $$\widehat{\cov}\{f_{1}(V), f_{2}(V)\} = \frac{1}{m_{n}}\sum_{j=1}^{m_{n}}f_{1}(V_{j})f_{2}(V_{j}) - \left\{ \frac{1}{m_{n}}\sum_{j=1}^{m_{n}}f_{1}(V_{j}) \right\} \left\{ \frac{1}{m_{n}}\sum_{j=1}^{m_{n}}f_{2} (V_{j})\right\}.$$
    As $\widehat{\cov}\{f_{1}(V), f_{2}(V)\}$ converges in probability to its population version
    \begin{equation}
        \cov\{f_{1}(V), f_{2}(V)\} = \E\left\{f_{1}(V)f_{2}(V)\right\} - \E\left\{f_{1}(V)\}\E\{f_{2}(V)\right\}, 
    \end{equation}
    then by the Gaussian approximation asserted in Theorem 2 of \citet{chernozhukov2015comparison}, \begin{equation}\label{67}
    \sup\limits_{f\in \sG_{m_{n}}}|B^{\star}_{m_{n}}(f) - B_{m_{n}}(f)|\stackrel{p}\rightarrow 0.\end{equation} Hence, using \eqref{63}, \eqref{65} and \eqref{67}, we have 
    \begin{equation}
        \sqrt{m_{n}h_{m_{n}}^{5}}\| \widehat{g}_{h_{m_{n}}}'' - g''\|_{\infty} \approx \sup_{f \in \sG_{m_{n}}}|B_{m_{n}}(f)| \approx \sup_{f \in \sG_{m_{n}}}|B^{\star}_{m_{n}}(f)| \approx 
        \sqrt{m_{n}h_{m_{n}}^{5}}\| \widehat{g}_{h_{m_{n}}}''^{*} - \widehat{g}''\|_{\infty}, 
    \end{equation}
    and finally, further application of \eqref{65}, we have
    \begin{align*}
        &\sup_{t\in \bbR}\left|\bbP\left\{ \sqrt{m_{n}h_{m_{n}}^{5}}
        \| \widehat{g}_{h_{m_{n}}}''^{*} - \widehat{g}_{h_{m_{n}}}''\|_{\infty}
        \leq t \mid \sV_{m_{n}}\right\} - 
        \bbP\left\{ \sqrt{m_{n}h_{m_{n}}^{5}}
        \| \widehat{g}_{h_{m_{n}}}''^{*} - \widehat{g}_{h_{m_{n}}}''\|_{\infty}
        \leq t \mid \sV_{m_{n}}\right\}
        \right|\\
        &= O\left(\left(\frac{\log^{7}m_{n}}{m_{n}h_{m_{n}}^{5}} \right)^{1/8}\right).
    \end{align*}
    It completes the proof. 
\end{proof}

\section{Figures related to finite sample performance and real data analysis.}
In this section of the appendix, we present all figures related to the finite-sample performance and the DTI data analysis. These figures illustrate the quality of the estimates for $\beta_{1}, \beta_{2}$ and $g''$.

\begin{figure}[H]
     \centering
     \begin{subfigure}[b]{0.32\textwidth}
         \centering
         \includegraphics[width=\textwidth, height = \textwidth, keepaspectratio]{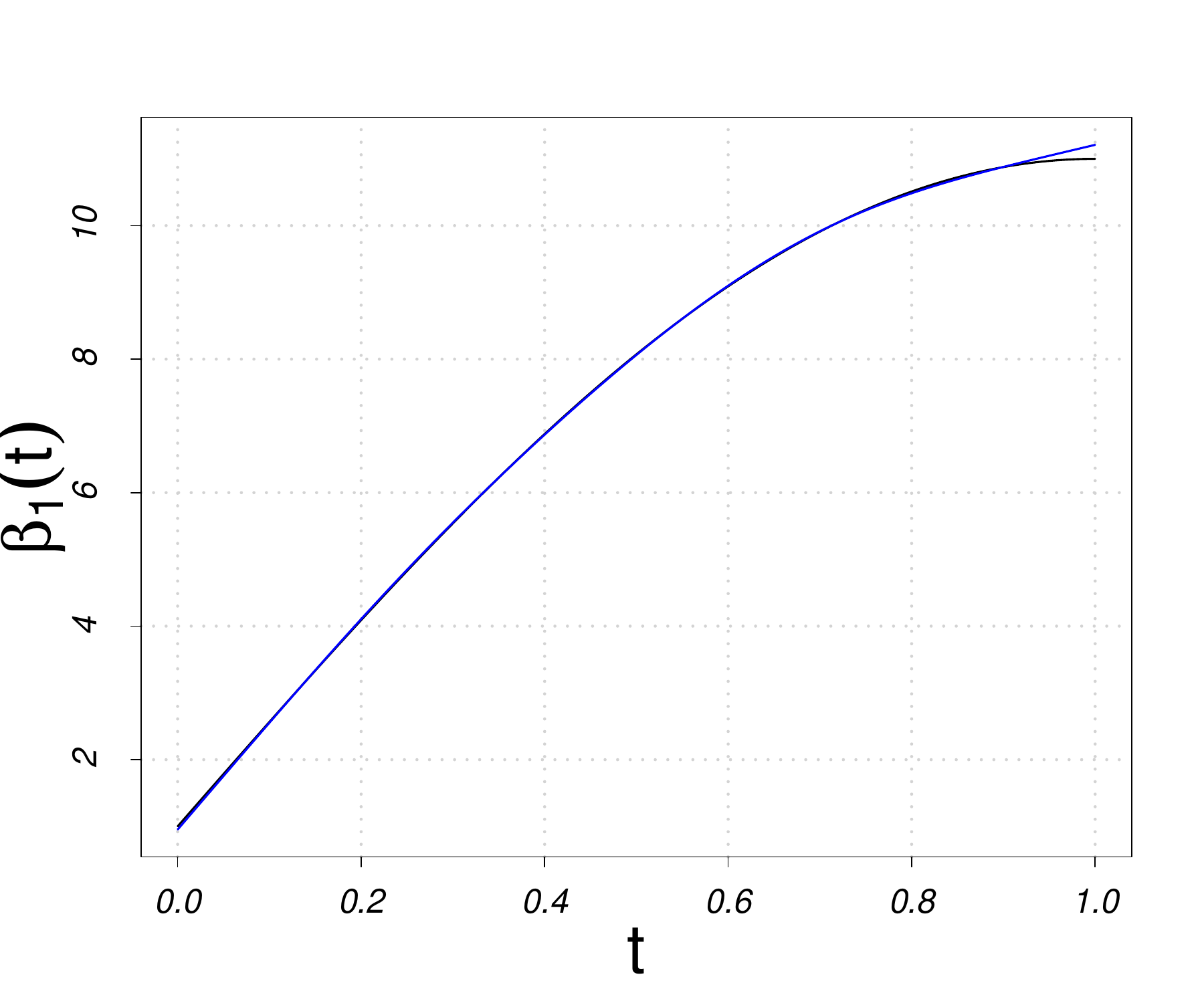}
         \caption{$\beta_{1}(t)$}
     \end{subfigure}
     \begin{subfigure}[b]{0.32\textwidth}
         \centering
         \includegraphics[width=\textwidth, height = \textwidth, keepaspectratio]{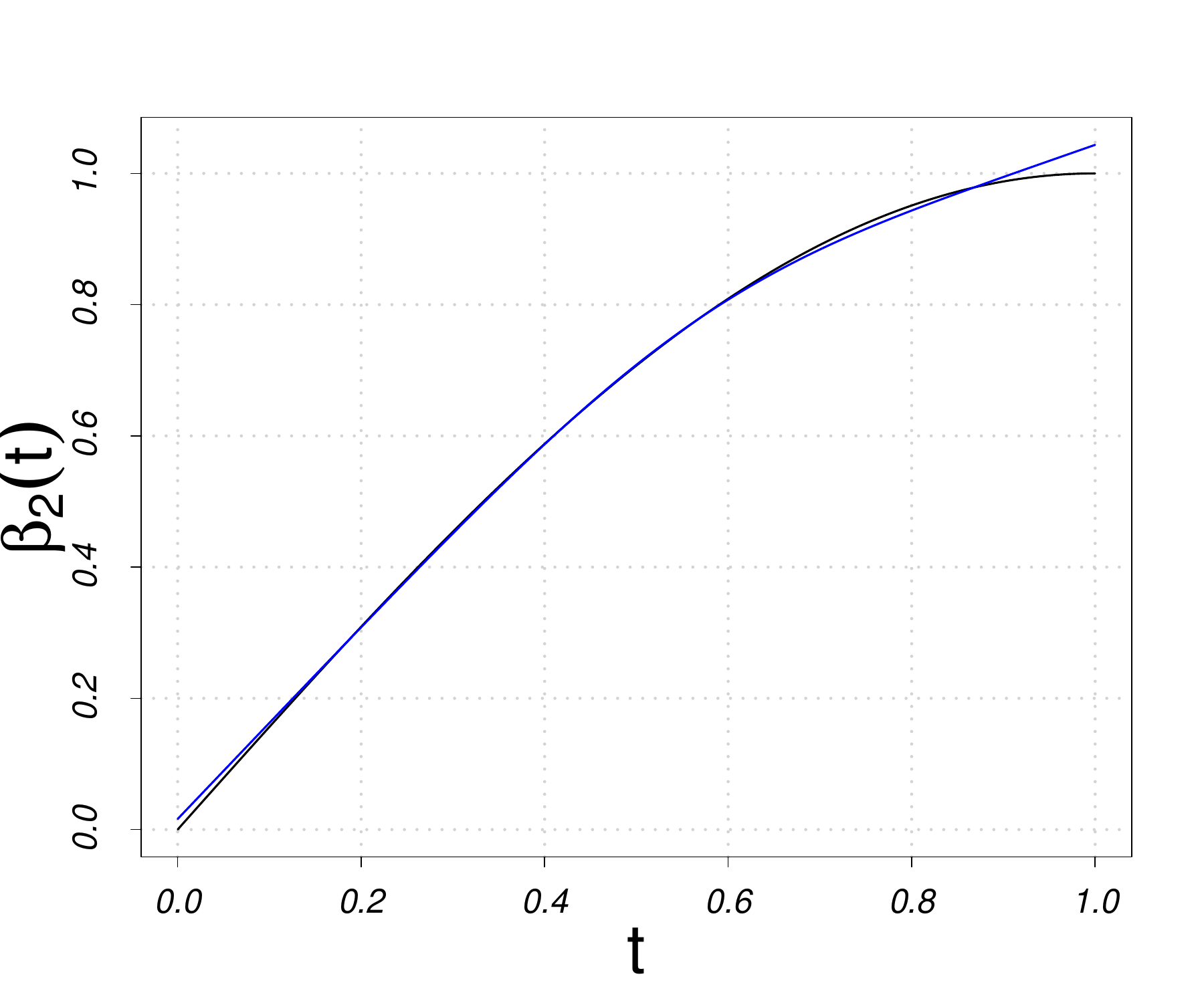}
         \caption{$\beta_{2}(t)$}
     \end{subfigure}
     \begin{subfigure}[b]{0.32\textwidth}
         \centering
         \includegraphics[width=\textwidth, height = \textwidth, keepaspectratio]{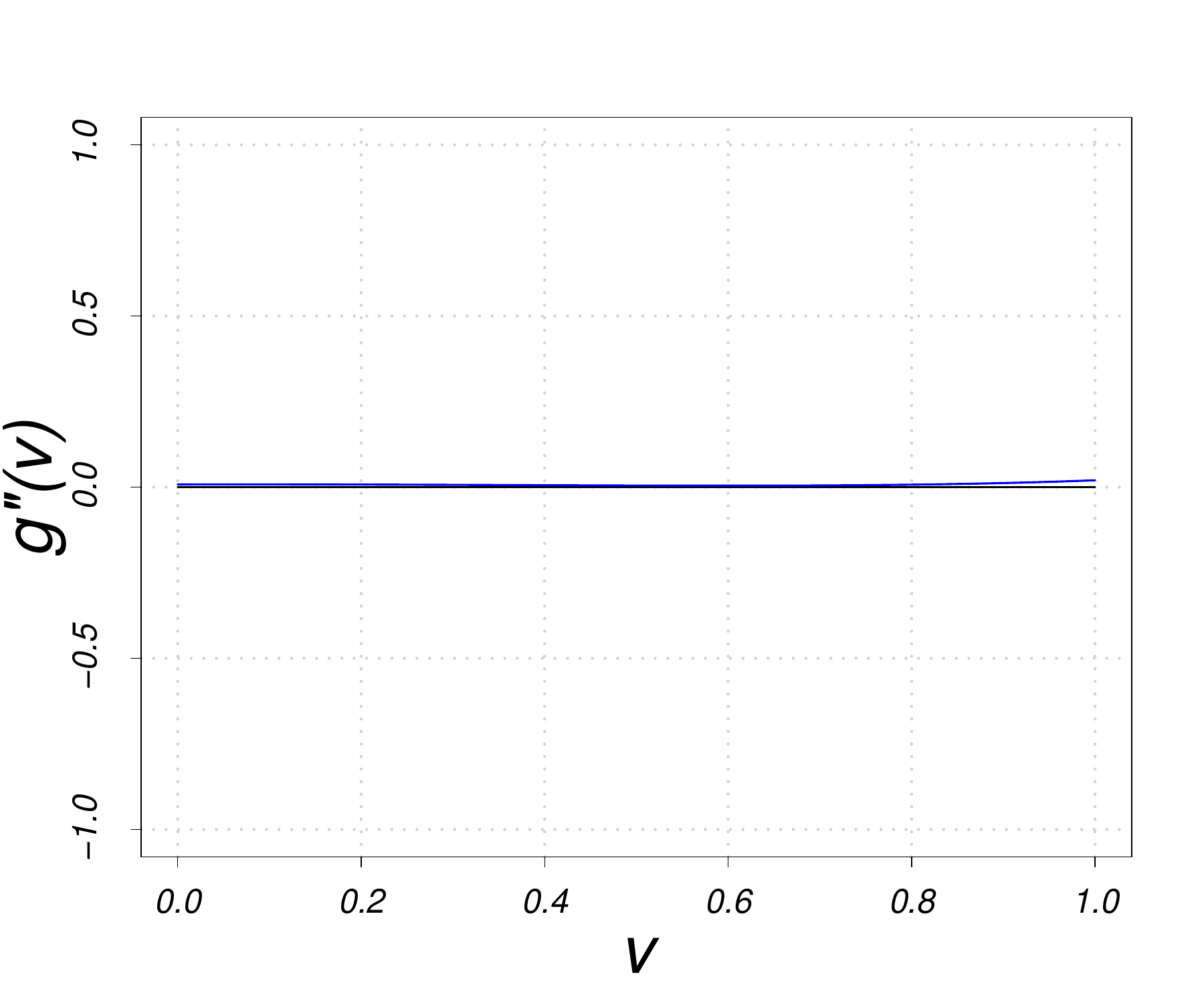}
         \caption{$g''(v)$}
     \end{subfigure}
     \caption{
     Comparison of estimated (blue) and true (black) functional coefficients $\beta_{1}(t)$, $\beta_{2}(t)$ from Model \eqref{eq:model_1} and second derivative $g''(v)$ from Model \eqref{eq:single-index} for Example \eqref{fn:linear} with
     $g(v) = 10v + 1$.}
     \label{fig:linear}
\end{figure}

\begin{figure}[H]
     \centering
     \begin{subfigure}[b]{0.32\textwidth}
         \centering
         \includegraphics[width=\textwidth, height = \textwidth, keepaspectratio]{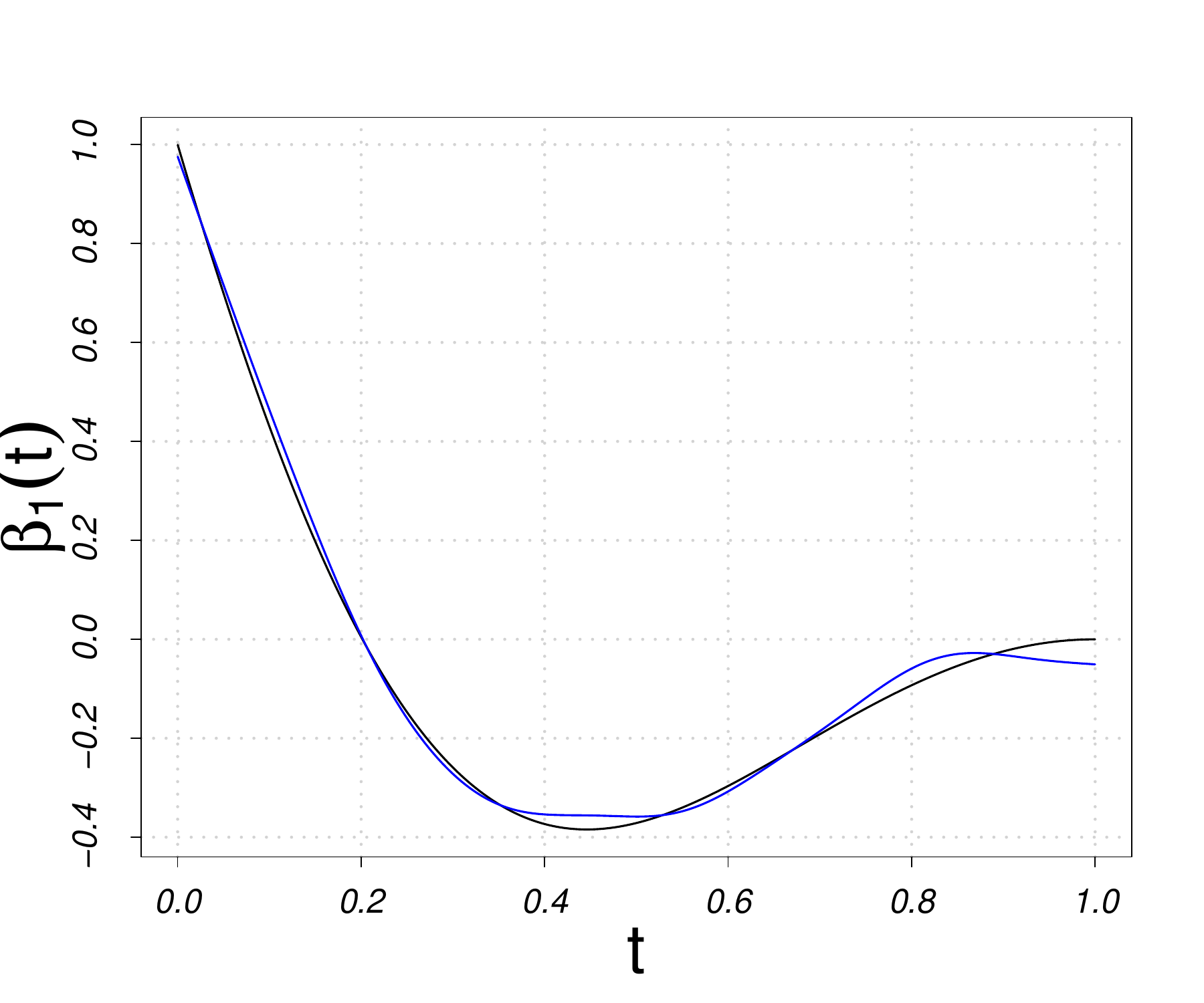}
         \caption{$\beta_{1}(t)$}
     \end{subfigure}
     \begin{subfigure}[b]{0.32\textwidth}
         \centering
         \includegraphics[width=\textwidth, height = \textwidth, keepaspectratio]{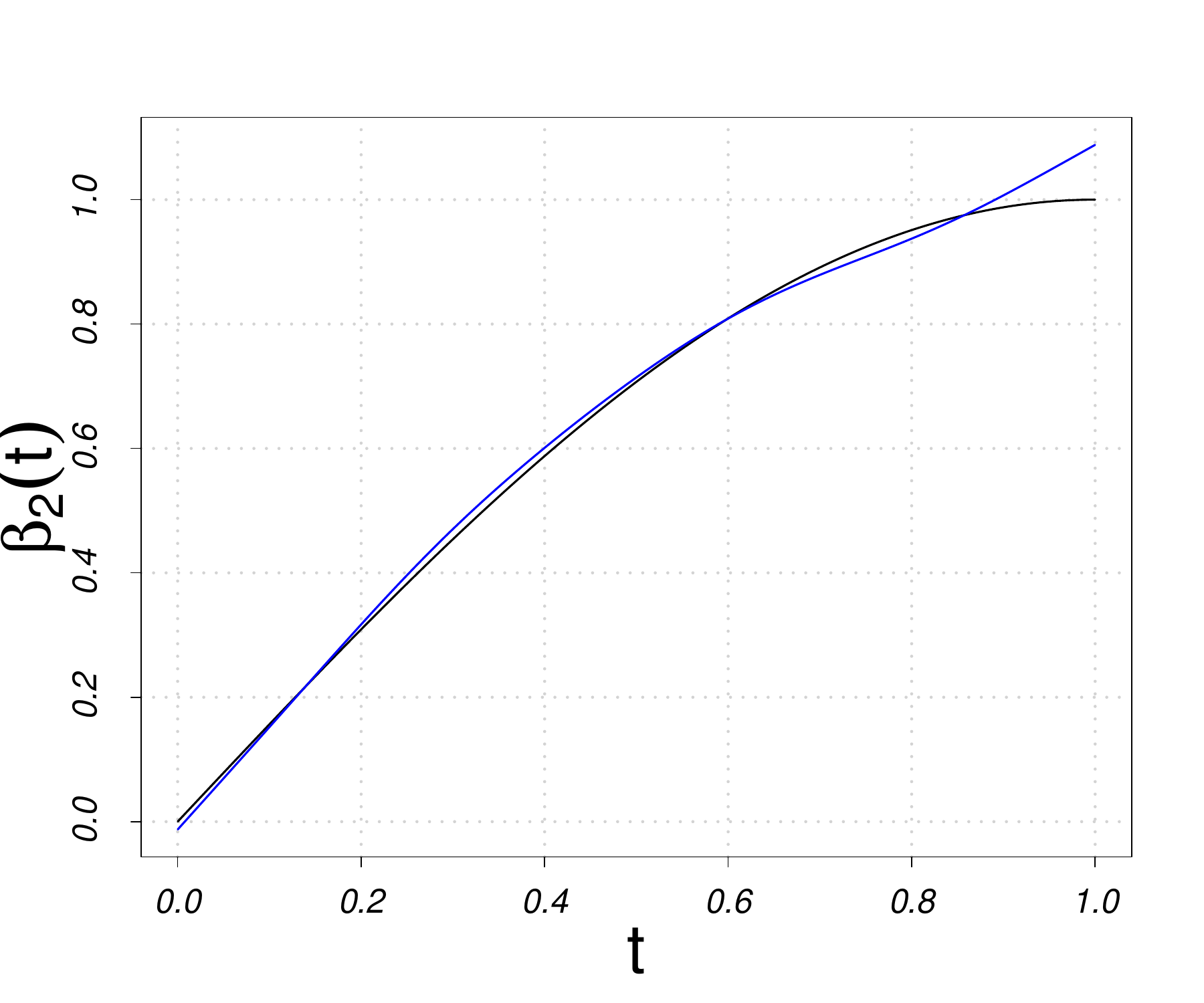}
         \caption{$\beta_{2}(t)$}
     \end{subfigure}
     \begin{subfigure}[b]{0.32\textwidth}
         \centering
         \includegraphics[width=\textwidth, height = \textwidth, keepaspectratio]{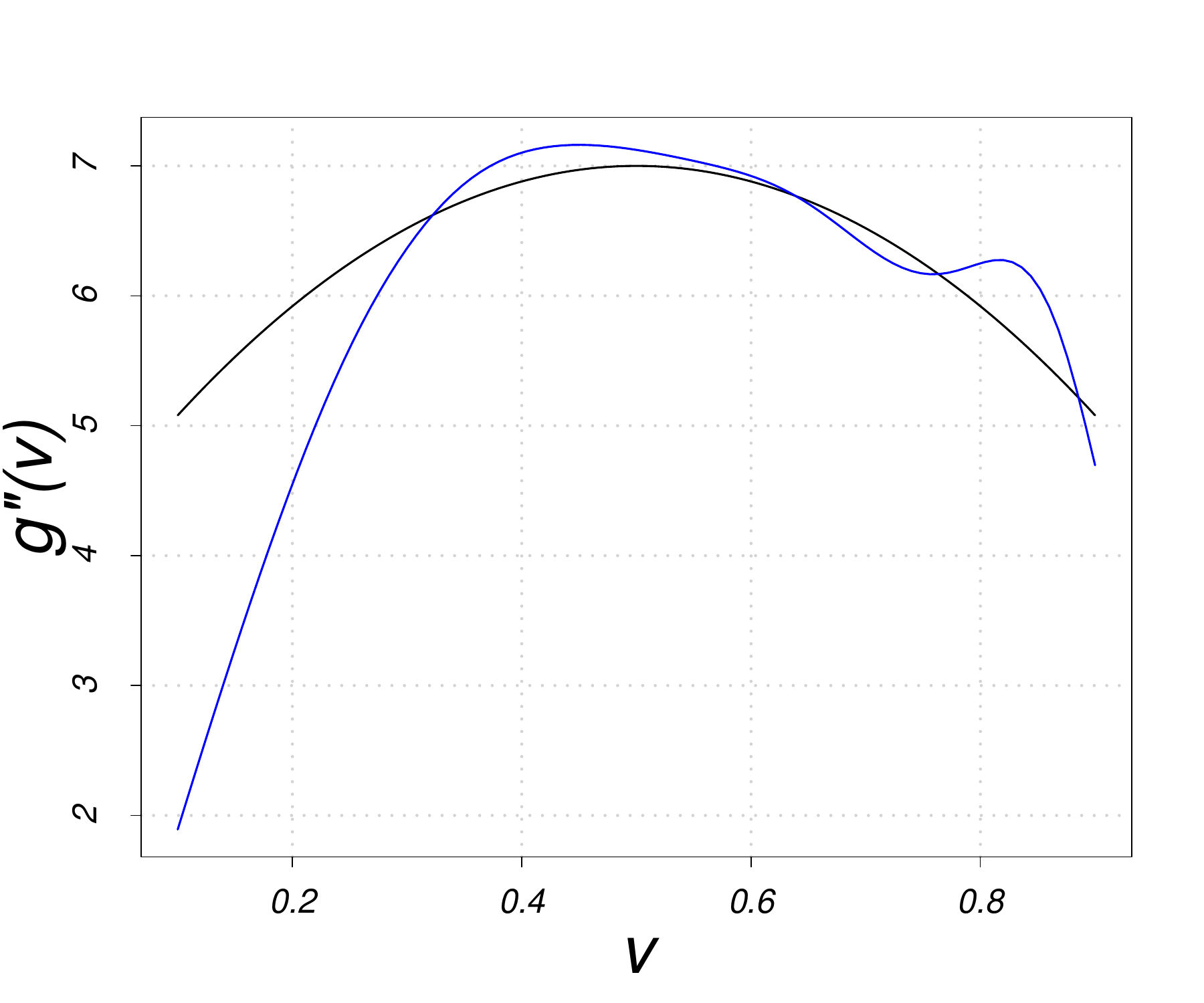}
         \caption{$g''(v)$}
     \end{subfigure}
     
     \caption{
     Comparison of estimated (blue) and true (black) functional coefficients $\beta_{1}(t)$, $\beta_{2}(t)$ from Model \eqref{eq:model_1} and second derivative $g''(v)$ from Model \eqref{eq:single-index} for Example \eqref{fn:poly} with
     $g(v) = 1- 4v + 2v^{2} + 2v^{3} - v^{4}$.}
     \label{fig:poly}
\end{figure}

\begin{figure}[H]
     \centering
     \begin{subfigure}[b]{0.32\textwidth}
         \centering
         \includegraphics[width=\textwidth, height = \textwidth, keepaspectratio]{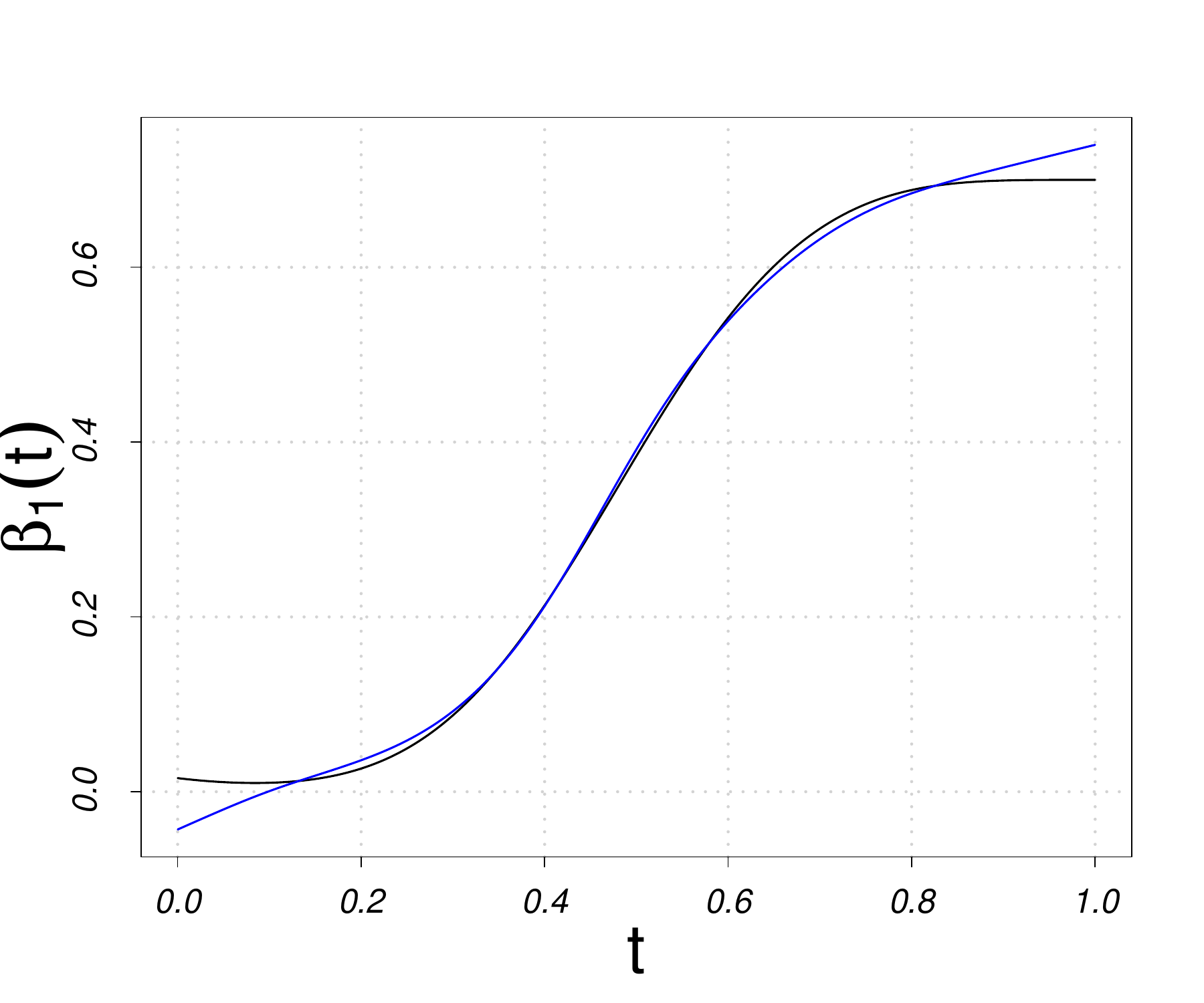}
         \caption{$\beta_{1}(t)$}
     \end{subfigure}
     \begin{subfigure}[b]{0.32\textwidth}
         \centering
         \includegraphics[width=\textwidth, height = \textwidth, keepaspectratio]{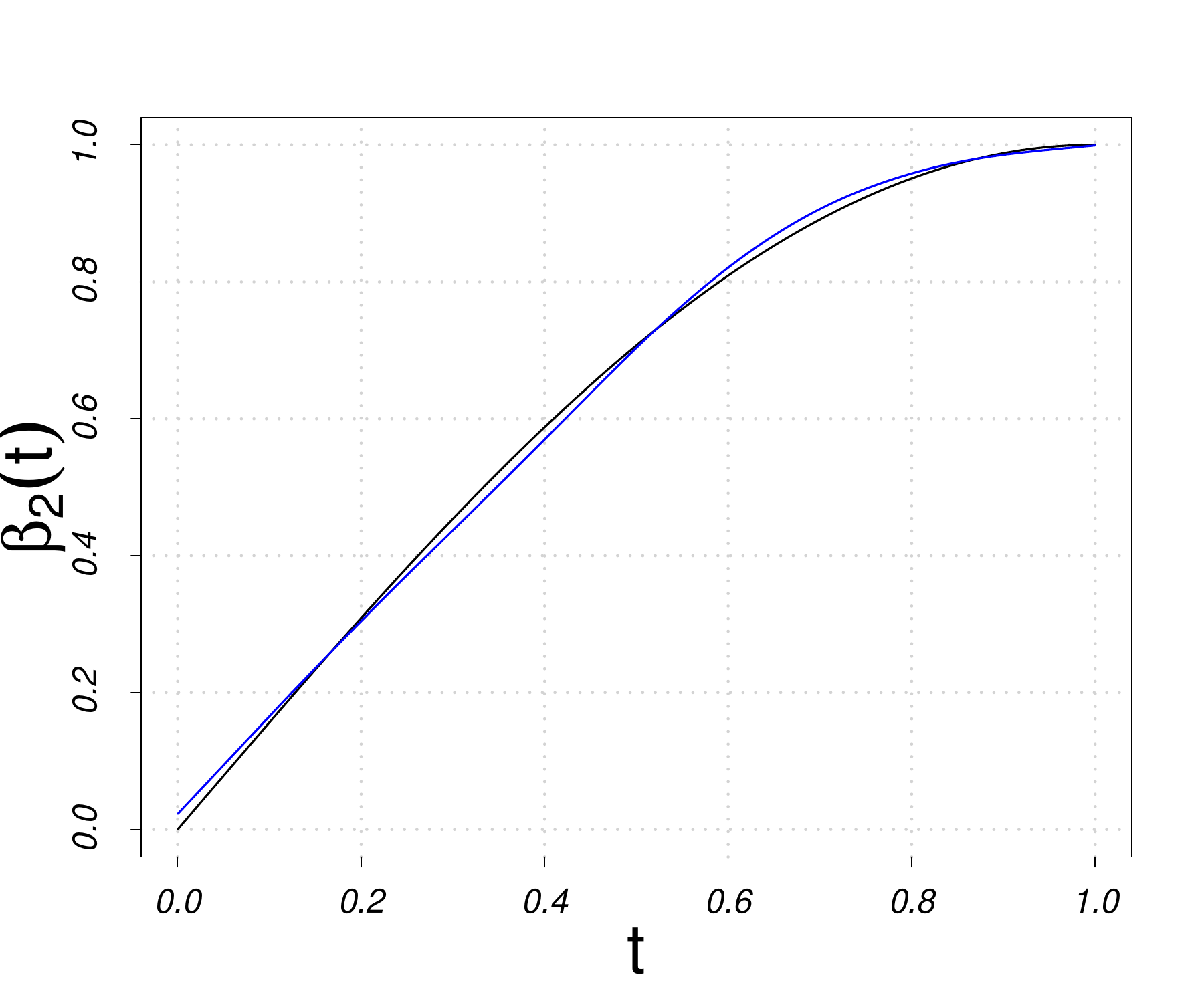}
         \caption{$\beta_{2}(t)$}
     \end{subfigure}
     \begin{subfigure}[b]{0.32\textwidth}
         \centering
         \includegraphics[width=\textwidth, height = \textwidth, keepaspectratio]{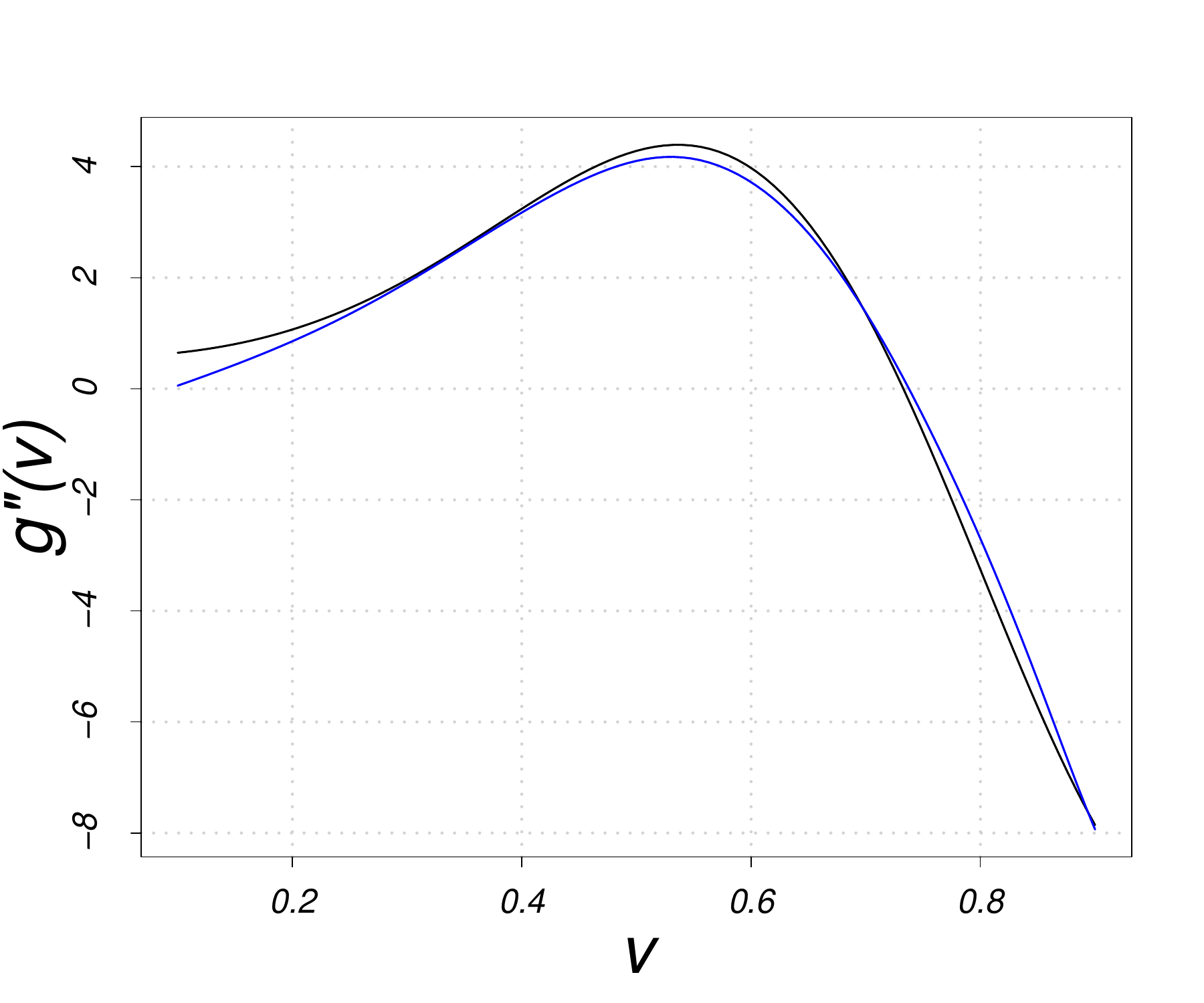}
         \caption{$g''(v)$}
     \end{subfigure}
     \caption{
     Comparison of estimated (blue) and true (black) functional coefficients $\beta_{1}(t)$, $\beta_{2}(t)$ from Model \eqref{eq:model_1} and second derivative $g''(v)$ from Model \eqref{eq:single-index} for Example \eqref{fn:exp} with
     $g(v) = 0.3\exp(-3(v+1)^{2}) + 0.7\exp(-7(v-1)^{2})$.}
     \label{fig:exp}
\end{figure}

\begin{figure}[H]
     \centering
     \begin{subfigure}[b]{0.32\textwidth}
         \centering
         \includegraphics[width=\textwidth, height = \textwidth, keepaspectratio]{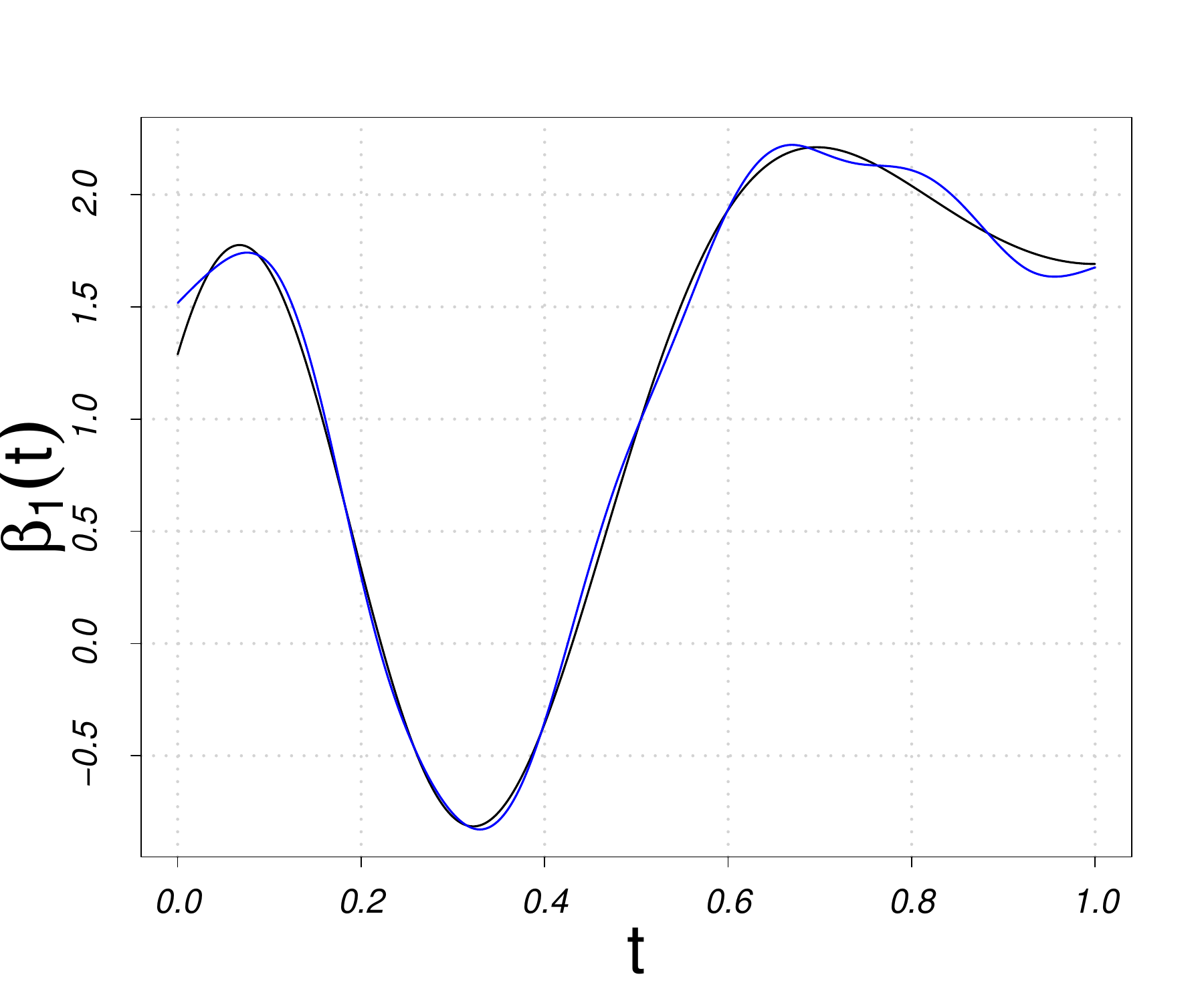}
         \caption{$\beta_{1}(t)$}
     \end{subfigure}
     \begin{subfigure}[b]{0.32\textwidth}
         \centering
         \includegraphics[width=\textwidth, height = \textwidth, keepaspectratio]{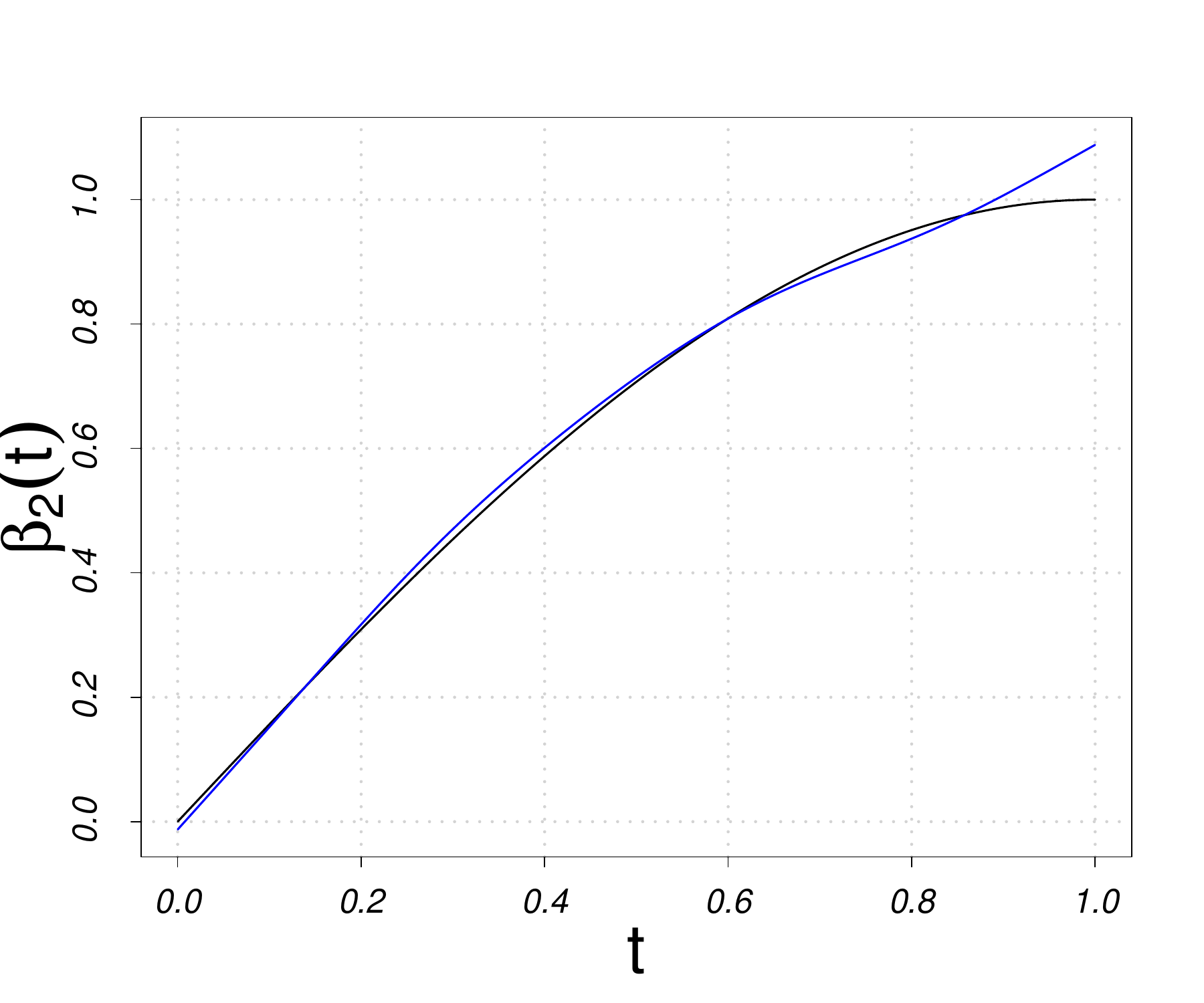}
         \caption{$\beta_{2}(t)$}
     \end{subfigure}
     \begin{subfigure}[b]{0.32\textwidth}
         \centering
         \includegraphics[width=\textwidth, height = \textwidth, keepaspectratio]{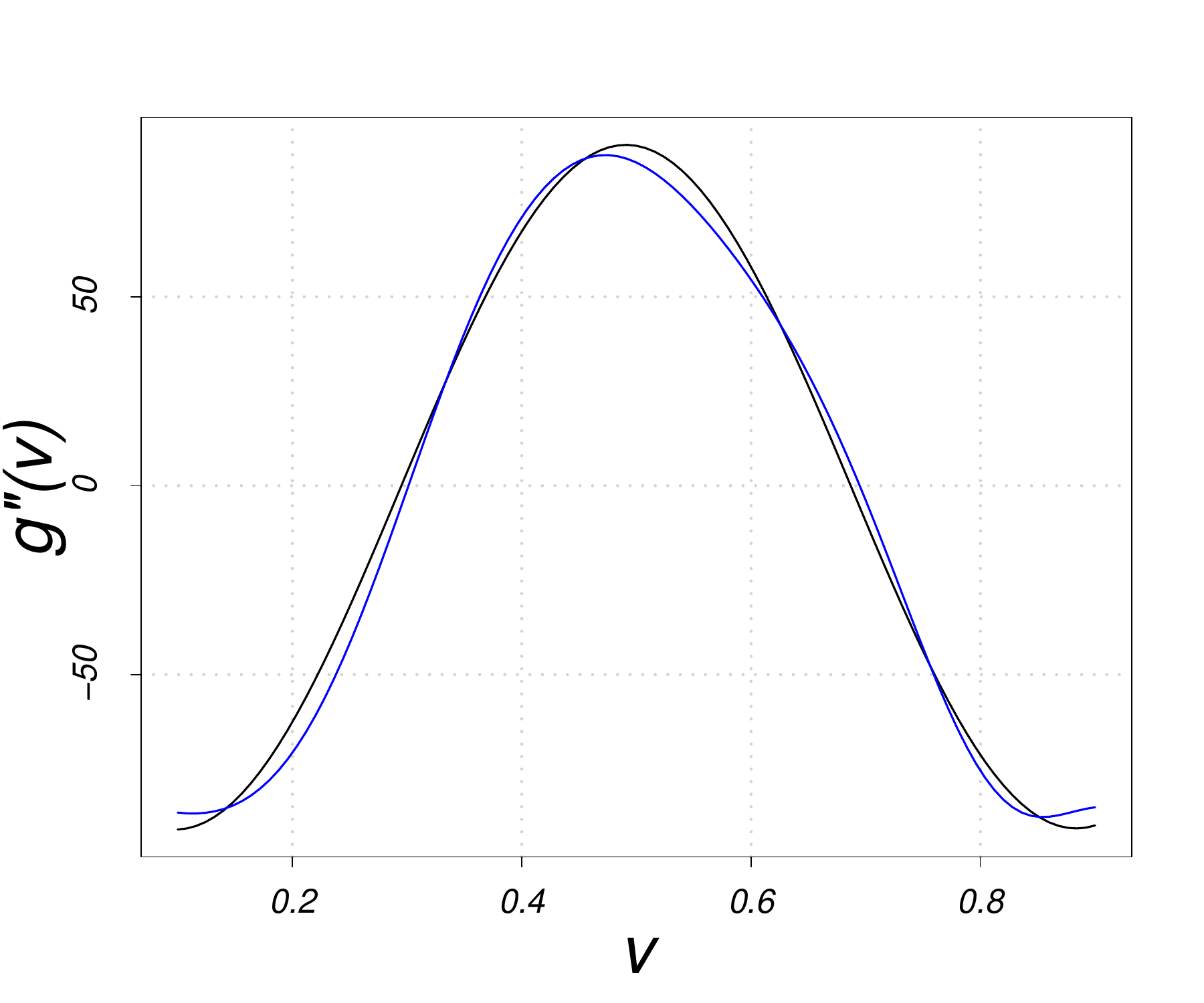}
         \caption{$g''(v)$}
     \end{subfigure}
     \caption{
     Comparison of estimated (blue) and true (black) functional coefficients $\beta_{1}(t)$, $\beta_{2}(t)$ from Model \eqref{eq:model_1} and second derivative $g''(v)$ from Model \eqref{eq:single-index} for Example \eqref{fn:sincos-log} with
     $g(v) = \sin(8v) + \cos(8v) + \log(4/3 + v)$.}
     \label{fig:sincos-log}
\end{figure}

\begin{figure}[H]
    \begin{subfigure}[b]{0.49\textwidth}
         \centering
         \includegraphics[width=\textwidth, height = \textwidth, keepaspectratio]{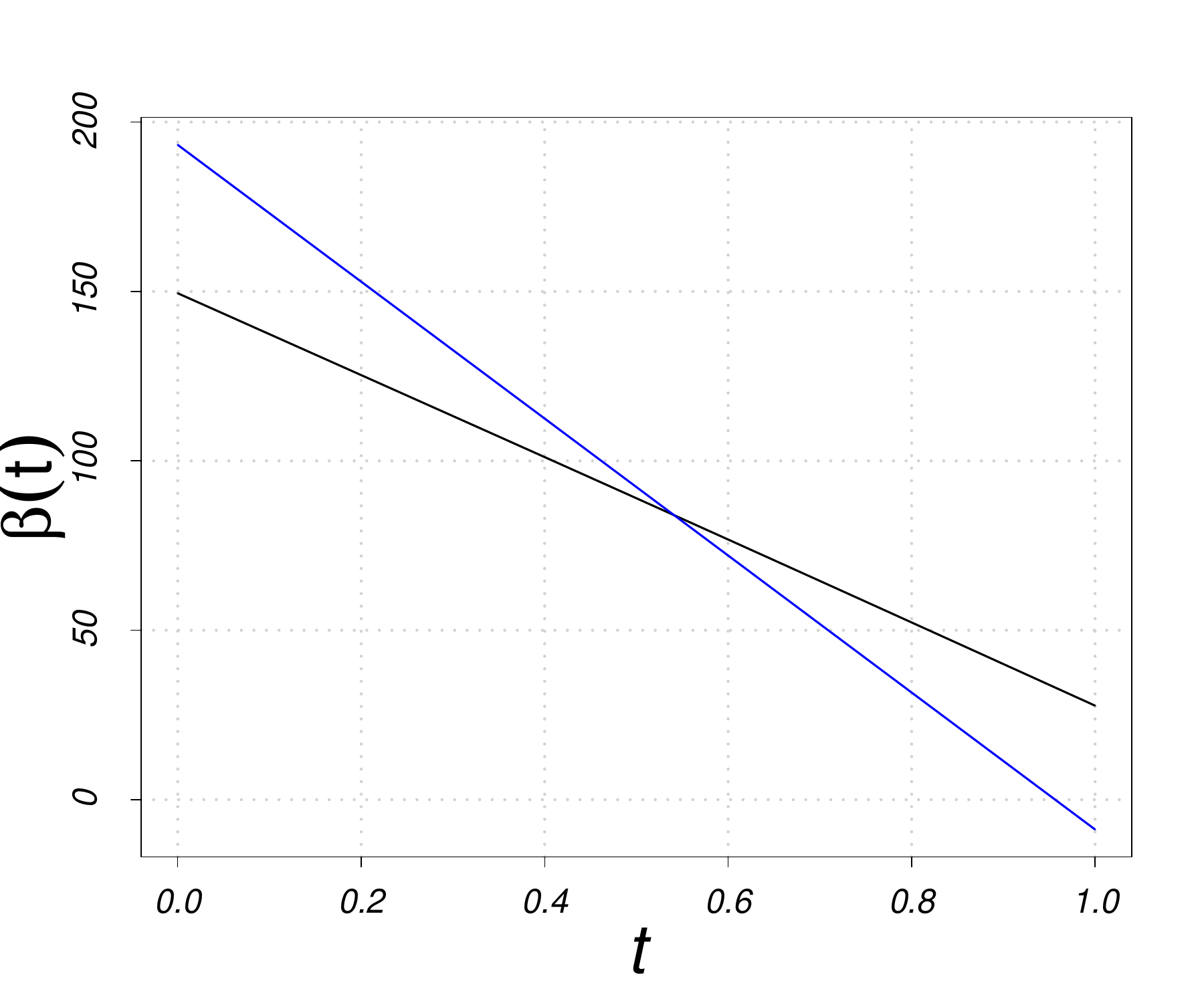}
         \caption{$\beta_{1}(t)$ (black) and $\beta_{2}(t)$ (blue)}
    \end{subfigure}
    \begin{subfigure}[b]{0.49\textwidth}
         \centering
         \includegraphics[width=\textwidth, height = \textwidth, keepaspectratio]{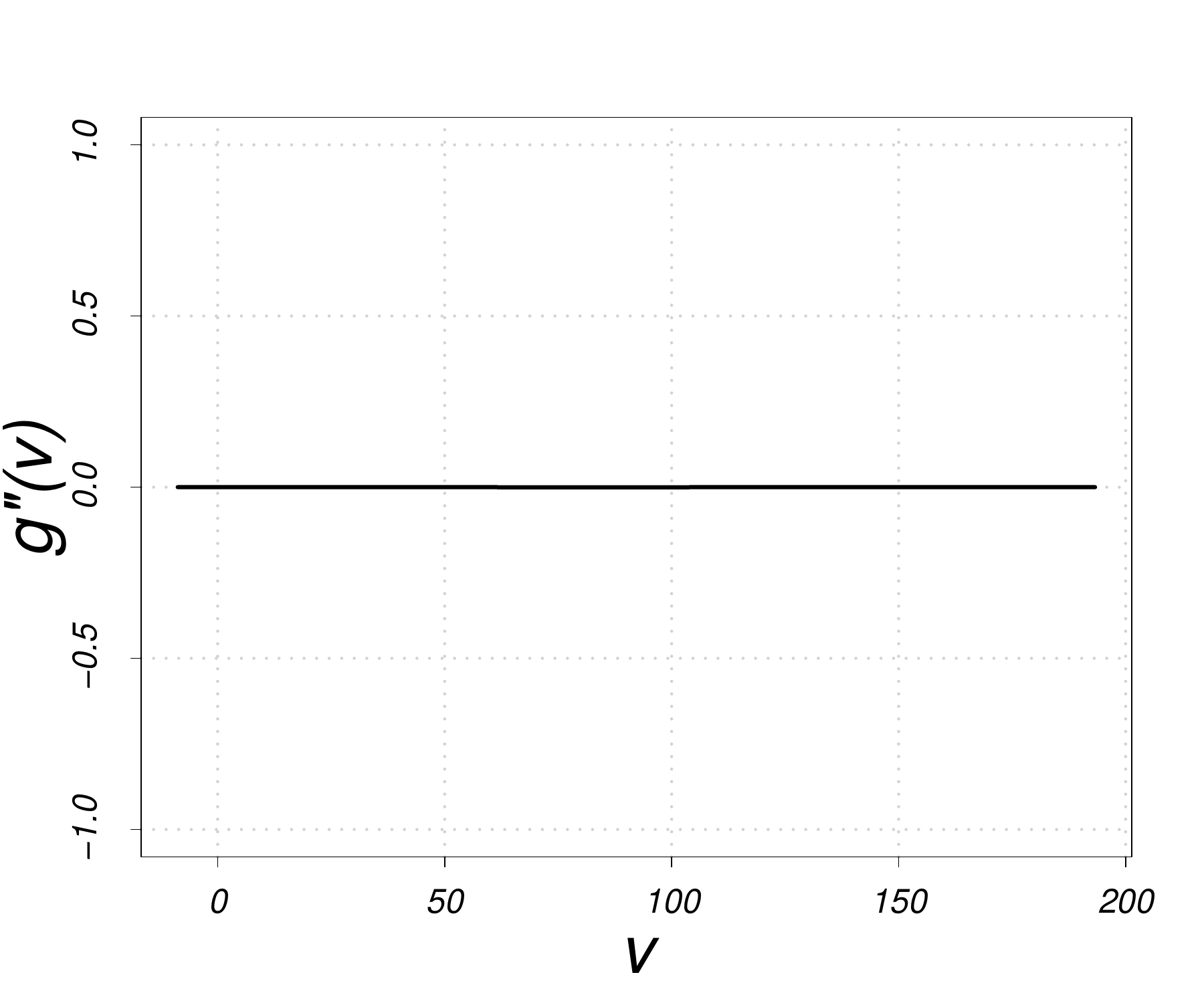}
         \caption{$g''(v)$}
    \end{subfigure}
    \centering
    \caption{
    The plots of estimated coefficients $\beta_{1}(t)$, $\beta_{2}(t)$ from Model \eqref{eq:model_1} and second derivative $g''(v)$ from Model \eqref{eq:single-index} based on the DTI analysis.}
    \label{fig:dti}
\end{figure}

\end{appendix}

\vspace{0.1 in}


\begin{acks}[Acknowledgments]
Subhra Sankar Dhar presented a part of this work at Renmin University of China when he was visiting there, and many stimulating questions asked by the audience improved the content of the article. He is also thankful to Dr. Hengrui Cai of UC Irvine for a few interesting suggestions on various parts of this work. 
\end{acks}

\begin{funding}
Subhra Sankar Dhar was supported in part by research grant CRG/2022/001489, Government of India. 
\end{funding}

\bibliographystyle{imsart-nameyear}      
\bibliography{main}  

\end{document}